\theoremstyle{thmstyleone}%
\newtheorem{theorem}{Theorem}
\newtheorem{proposition}[theorem]{Proposition}%
\theoremstyle{thmstyletwo}%
\newtheorem{remark}{Remark}%
\newtheorem*{remark*}{Remark.}
\newtheorem*{notation*}{Notation.}
\theoremstyle{thmstylethree}%
\def\csname ver@subfig.sty\endcsname{}
\newcommand{\Pro}{\mathbb{P}}
\definecolor{dgreen}{rgb}{0.,0.6,0.}
\begin{document}

\title[Embedding Birth-Death Processes within a Dynamic Stochastic Block Model]{Embedding Birth-Death Processes within a Dynamic Stochastic Block Model}


\author*[1]{ \fnm{Gabriela Bayolo Soler}}\email{gbayolo@gmail.com}

\author[1]{\fnm{Miraine D\'avila Felipe}}

\author[1]{\fnm{Ghislaine Gayraud}}

\affil*[1]{Universit\'e de technologie de Compi\`egne, Alliance Sorbonne Universit\'e, LMAC, Compi\`egne, France}




\abstract{
Statistical clustering in dynamic networks aims to identify groups of nodes with similar or distinct internal connectivity patterns as the network evolves over time. While early research primarily focused on static Stochastic Block Models (SBMs), recent advancements have extended these models to handle dynamic and weighted networks, allowing for a more accurate representation of temporal variations in structure. Additional developments have introduced methods for detecting structural changes, such as shifts in community membership. However, limited attention has been paid to dynamic networks with variable population sizes, where nodes may enter or exit the network. To address this gap, we propose an extension of dynamic SBMs (dSBMs) that incorporates a birth-death process, enabling the statistical clustering of nodes in dynamic networks with evolving population sizes. This work makes three main contributions: (1) the introduction of a novel model for dSBMs with birth-death processes, (2) a framework for parameter inference and prediction of latent communities in this model, and (3) the development of an adapted Variational Expectation-Maximization (VEM) algorithm for efficient inference within this extended framework.
}

\keywords{ dynamic networks, stochastic block model, variational inference, birth–death processes, latent variable models, clustering.}

\maketitle

\section{Introduction}\label{sec1}

Network models play a crucial role in capturing and understanding population dynamics. Unlike models based on simple random interactions, real-world networks often exhibit complex structures that reflect underlying patterns and regularities. In the literature, there are many examples of stochastic network models that help to describe these structures, such as the random graph model introduced by Erd\H{o}s and R\'enyi \cite{erdHos1960evolution}, the small-world network presented by \cite{watts1998collective}, and the Stochastic Block Model (SBM) by \cite{holland1983stochastic}. These models have been applied across a broad spectrum of disciplines, including the social sciences, where they help explaining human behavior and social interactions; statistics, where they provide frameworks for analyzing relational data; physics, where they model phenomena such as percolation and diffusion; and biology,  where they help making predictions about the structure and dynamics of interbreeding populations.

In this work we focus on the statistical detection of groups in dynamic networks. The problem of identifying clusters, subsets of nodes that are more strongly or just differently connected internally than with the rest of the network, has attracted considerable attention in the statistics and network science literatures in recent years. Early work on statistical community detection largely concentrated on static SBMs, which model networks with a fixed vertex set and time-invariant connectivity (\cite{snijders1997estimation,holland1983stochastic,aicher2014learning}). While these models provide a powerful tool for uncovering latent structure in static networks, they do not account for the temporal evolution that is often observed in real-world systems.

This limitation has motivated a large body of work on dynamic extensions of SBMs and related models. 
Recent developments have extended static SBMs to weighted networks and to dynamic SBMs in which the edges evolve over time, either in discrete time (see, for example, \cite{matias2017statistical,tang2011dynamic,xu2014dynamic,jiang2023autoregressive}) or in continuous time (see \cite{ludkin2018dynamic,corneli2017dynamic}). These extensions allow one to study how relationships and community structures change over time, providing a more realistic description of many applications in which interactions are inherently dynamic. In parallel, there has been growing interest in detecting changes in community structure, including changepoint detection methods that identify abrupt shifts in network organization (\cite{peel2015detecting,wang2017change}) and approaches designed to capture more gradual, diffuse changes in community memberships or connectivity patterns (\cite{matias2017statistical,ludkin2018dynamic,xu2014dynamic}). Closely related block models have also been developed for bipartite or matrix data through the Latent Block Model \cite{brault2015review}.

A common limitation of most dynamic SBMs, however, is the \emph{fixed population} assumption: the node set $V$ is taken to be constant, whereas real systems often exhibit entry and exit through birth, death, migration or isolation. Some works allow transient presence/absence or introduce a special “inactive’’ class within a fixed population \cite{yang2011detecting,pmlr-v38-xu15,matias2017statistical,pensky2019dynamic}, and there is growing interest in regimes with structural shifts and phase transitions \cite{branzei2023phase}. Nevertheless, to the best of our knowledge, there remains relatively little methodology that \emph{jointly} models network evolution and genuinely \emph{evolving-population} dynamics. This motivates the birth-death SBM developed below, in which community sizes follow explicit demographic (birth-death) processes while edges are generated according to an SBM-type mechanism.

In this work, we address this gap by extending dynamic stochastic block models (dSBMs) to account for changes in population size through the integration of a birth-death process. Birth-death processes provide a natural probabilistic description of population dynamics, representing the arrivals (births) and departures (deaths) of individuals over time. By combining a dSBM for the network structure with a birth-death mechanism for the vertex set, we obtain a model that captures both the evolution of edges and the stochastic variation in population size.

More precisely, we introduce a \emph{dynamic stochastic block model with birth-death population dynamics}, which we refer to as the \emph{birth-death stochastic block model} (BD-SBM). In the BD-SBM, each node belongs to one of $K$ latent communities. When a new node is born, it inherits the community of its parent and retains this community membership throughout its lifetime, so that there is no community switching over time. At each observation time, conditional on the current community memberships, the edges follow a stochastic block model. Between observation times, the population size evolves according to a continuous-time birth-death process: new individuals enter the network and existing individuals leave, while the latent community structure that drives edge formation is preserved for all living individuals. This explicit coupling between population dynamics and network structure yields a richer and more realistic description of temporal networks in which both edges and vertices evolve.

Beyond its methodological interest, the BD-SBM is particularly suited to applications where both the population and the interaction structure evolve over time, and where community memberships are naturally inherited and stable. Examples include populations structured into families, lineages or clans in ecology and population genetics, where descendants belong to the community of their parents and remain in that group until death.
Social and organizational settings in which newcomers join pre-existing groups (such as departments, teams or schools) and do not change affiliation over the observation window. For example, academic research laboratories organized into thematic teams (e.g.\ probability, statistics, optimization, machine learning), where new PhD students or postdoctoral fellows typically join the team of their advisor and leave the system when they graduate, move to another institution or retire. In such contexts, the BD-SBM provides a coherent framework to jointly analyse the evolution of community structure and population size, while explicitly accounting for demographic turnover. In this context, the BD-SBM can be used to identify the teams of collaborations and the ways of collaborations of the researches.

From an inferential viewpoint, the BD-SBM raises non-trivial challenges. The presence of latent community labels and unobserved community sizes makes direct maximum likelihood (ML) estimation infeasible, as the observed-data likelihood would require summing over an enormous number of possible label configurations. To overcome this difficulty, we build on the mean–field variational framework introduced for static SBMs in \cite{daudin2008mixture} and later extended to dynamic SBMs with fixed population size in \cite{matias2017statistical}. We propose an adapted Variational Expectation–Maximization (VEM) algorithm tailored to the BD-SBM. Our variational family combines a mean–field approximation for individual labels with a structured approximation for the latent community sizes, encoded through additional variational parameters that describe the evolution of community sizes over time.

In summary, our main contributions are threefold:
\begin{enumerate}
    \item We introduce the BD-SBM, a dynamic stochastic block model in which the
          vertex set evolves according to a birth–death process, allowing the
          population size to vary over time.
    \item We address the challenging problem of parameter estimation and community
          detection in this setting, and develop a VEM algorithm adapted to the
          BD-SBM that jointly estimates the model parameters and the latent
          community memberships. In particular, we propose a structured variational
          family that augments the classical mean–field approximation with latent
          variables describing community sizes, thereby enforcing coherence between
          population dynamics and community structure.
    \item We assess the performance of the BD-SBM on both simulated temporal networks
          and a real-world collaboration network constructed from \texttt{arXiv} data.
\end{enumerate}

These developments extend the scope of stochastic block modelling to temporal networks with evolving population size, and provide a principled framework for studying the joint dynamics of communities and populations in a wide range of applications.

\section{The birth-death stochastic block model (BD-SBM)}
In this section we formally define the BD-SBM, a dynamic stochastic block model in which the population size evolves according to a birth-death process.

\subsection{Model description}

We consider a dynamic undirected network $G(t) = (V(t), E(t))$, indexed by time $t \geq 0$, which represents the interactions between individuals at $t$. The time $t_0$ denotes the initial observation time. For each $t$, $V(t)$ is the set of active vertices (individuals), and $E(t)$ is the set of edges (interactions) between any two vertices in $V(t)$. 
Since we consider only binary (undirected) edges, we use a notation shorthand: we denote indistinctly by $E(t)$ the set of edges and the adjacency matrix of the graph at $t$, that is, the matrix whose coefficients are $e_{ij}(t), \ 1\le i,j\le V(t)$, where $e_{ij}(t)$ is defined as the indicator of the presence of an edge between $i$ and $j$ at time $t$.
Lastly, since the number of individuals  varies over time, we write $N(t) = |V(t)|$ for the cardinality of $V(t)$.

We assume that once a node becomes active in the network, it is assigned to exactly one of the $K$ distinct communities $\{1, 2, \ldots,K\}$ to which it belongs throughout its lifetime. It means that once active, a node keeps its community membership during all its lifetime.
The number of communities $K$ is fixed and known in advance. We associate  to any individual $i$  its  community membership random vector $Z_i$ defined by   $Z_i = (Z_{ik})_{k=1,\ldots,K}$, where for all $k$, $Z_{ik} = \mathds{1}\left\{ i \ \text{is in community} \ k \right\}$ and $\sum_{k=1}^K Z_{ik} = 1$.

Randomness affects the temporal edges, the population size over time and the community memberships of individuals.

\begin{enumerate}
\item \textbf{Population dynamics and community membership.}
The BD-SBM assumes a finite population of $N$ nodes partitioned into $K$ disjoint
communities as follows:
\begin{enumerate}
    \item For each node $i \in V_0$, the community label
    $Z_i = (Z_{ik})_{k=1,\ldots,K}$ is drawn independently with
    \[
        \mathbb{P}(Z_{ik} = 1) = \beta_k, \qquad k = 1,\ldots,K,
    \]
    where $\beta = (\beta_k)_{k=1,\ldots,K}$ denotes the vector of community
    probabilities at time $t_0$.

    \item For each node $i \notin V_0$, the population dynamics are governed by a
    linear birth–death process with common birth rate $\lambda \geq 0$ and death
    rate $\mu \geq 0$, which are shared across communities. When a birth occurs at
    time $t$ and a new node $i$ is created, it inherits the community of its
    parent. Conditionally on a birth at time $t$, we have
    \[
        \mathbb{P}(Z_{ik} = 1)
        = \frac{\displaystyle\sum_{j \in V(t^-)} Z_{jk}}
               { N(t^-)}, \qquad k = 1,\ldots,K,
    \]
    where $t^-$ denotes an instant just before $t$ and $N(t^-) = |V(t^-)|$ is the
    population size at that time.

    Labels are assigned sequentially: the label of a newly created node is defined
    as one unit larger than the label of the previously created node, or, initially,
    larger than the number of nodes present at $t_0$. Once a node dies, its label
    no longer appears in subsequent vertex sets.
\end{enumerate}

\begin{remark*}
The assumption of common birth and death rates across communities allows us to derive
explicit maximum likelihood estimators for $(\lambda,\mu)$, as shown in
Subsection~\ref{sec:b-d_inf}. The more general case in which the birth–death
parameters are community-specific is discussed in
Appendix~\ref{app:class_specific_rates} as an extension of our VEM approach.
\end{remark*}

\item \textbf{Temporal edges.}
For $t \geq t_0$, given the partition of the vertices $V(t)$ into $K$ communities, the
graph $G(t)$ is generated according to a stochastic block model with connectivity
matrix $\pi = (\pi_{km})_{1 \leq k,m \leq K}$. More precisely, for all $i<j$ such
that $i,j \in V(t)$,
\[
    e_{ij}(t) \,\big|\, \{Z_{ik} Z_{jm} = 1\}
    \stackrel{\mathrm{ind.}}{\sim} \phi( \cdot , \pi_{km}),
\]
where $e_{ij}(t)$ denotes the edge between $i$ and $j$ at time $t$, and
$\phi( \cdot , \pi_{km})$ stands for the probability mass function of the Bernoulli distribution with success probability
$\pi_{km}$.

\begin{notation*} 
In the interest of clarity, we avoid the introduction of an extra notation to distinguish between the random variables $e_{ij}(t)$ and their realisations. Whenever this notation is used, its role is whether specified or can be easily deduced from the context.
\end{notation*}
\end{enumerate}

We refer to Figure~\ref{fig:bdsbm_schematic_orthogonal_vhd} for a hierarchical overview of our BD-SBM.

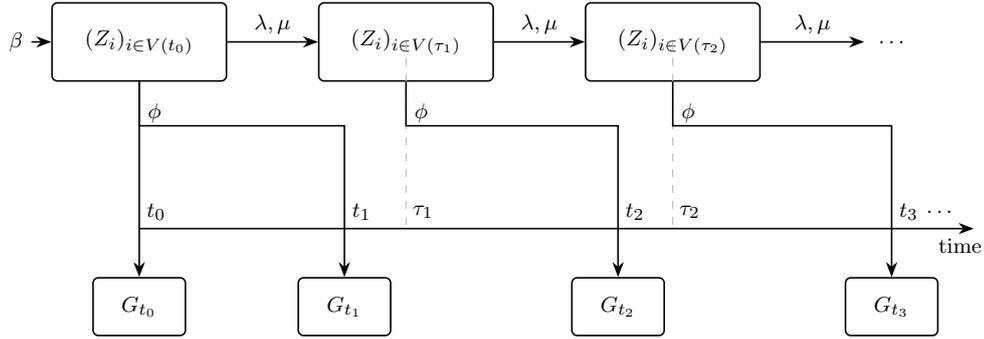
\begin{figure}[h]
\captionsetup{
      width=\textwidth,   
      margin=0.99cm       
    }
    \centering
\begin{tikzpicture}[
 scale=0.9,
  every node/.style={transform shape}, 
  >=Stealth,
  font=\small,
  lab/.style={font=\small},
  big/.style={
    draw, line width=0.6pt, rounded corners=2pt,
    minimum width=2.55cm, minimum height=1.15cm,
    align=center, inner sep=3pt
  },
  small/.style={
    draw, line width=0.6pt, rounded corners=2pt,
    minimum width=1.35cm, minimum height=0.85cm,
    align=center, inner sep=2pt
  },
  arr/.style={->, line width=0.6pt},
  arrlab/.style={midway, right, yshift=-4pt},
  vguide/.style={draw=gray!55, dashed, line width=0.4pt}
]

\node[big] (V0) at (0,0.65) {$(Z_i)_{i \in V(t_0)}$};

\node[big] (V1) at (3.9,0.65) {$(Z_i)_{i \in V(\tau_1)}$};

\node[big] (V2) at (7.8,0.65) {$(Z_i)_{i \in V(\tau_2)}$};

\node at (11,0.65) {$\ldots$};

\node (beta) at (-1.8,0.65) {$\beta$};
\draw[arr] (beta) -- (V0.west);

\draw[arr] (V0.east) -- node[above, font=\small] {$\lambda,\mu$} (V1.west);
\draw[arr] (V1.east) -- node[above, font=\small] {$\lambda,\mu$} (V2.west);
\draw[arr] (V2.east) -- node[above, font=\small] {$\lambda,\mu$} (10.6,0.65);

\def\yaxis{-2.10}
\draw[arr] (0.0,\yaxis) -- (12.2,\yaxis) node[right] {};
\node[lab] at (0.25,\yaxis+0.25) {$t_0$};
\node[lab] at (3.25,\yaxis+0.25) {$t_1$};
\node[lab] at (4.15,\yaxis+0.25) {$\tau_1$};
\node[lab] at (7.25,\yaxis+0.25) {$t_2$};
\node[lab] at (8.05,\yaxis+0.25) {$\tau_2$};
\node[lab] at (11.25,\yaxis+0.25) {$t_3$};
\node[lab] at (11.7,\yaxis+0.25) {$\ldots$};
\node[lab] at (12,\yaxis-0.25) {time};

\draw[vguide] (3.9,0.65) -- (3.9,\yaxis); 
\draw[vguide] (7.8,0.65) -- (7.8,\yaxis); 

\def\ysnap{-3.25}
\node[small] (E0) at (0.0,\ysnap) {$G_{t_0}$};
\node[small] (E1) at (3.0,\ysnap) {$G_{t_1}$};
\node[small] (E3) at (7.0,\ysnap) {$G_{t_2}$};
\node[small] (E4) at (11.0,\ysnap) {$G_{t_3}$};

\draw[arr]
  (V0.south) -- ++(0,-0.65) node[arrlab] {} -| (E0.north);

\draw[arr]
  (V0.south) -- ++(0,-0.65) node[arrlab] {$\phi$} -| (E1.north);

\draw[arr]
  (V1.south) -- ++(0,-0.65) node[arrlab] {$\phi$} -| (E3.north);

\draw[arr]
  (V2.south) -- ++(0,-0.65) node[arrlab] {$\phi$} -| (E4.north);

\end{tikzpicture}

\caption{Graphical representation of the BD--SBM model: community labels \(Z_i\) are defined on the alive set \(V(t)\), which evolves in continuous time under a birth--death process with rates \(\lambda\) and \(\mu\). The birth or death events occurs at ordered times $\tau = \left\{ \tau_1, \tau_2, \dots \right\}$. At observation times $t_\ell$, the snapshot graph \(G_{t_\ell}\) is generated from the current labels via the edge mechanism \(\phi\).}
\label{fig:bdsbm_schematic_orthogonal_vhd}
\end{figure}

Let us define $\theta$ as the model parameters 
$$\theta = (\lambda, \mu, \pi, \beta) \in \mathbb{R}_+\times  \mathbb{R}_+ \times [0,1]^{K^2} \times [0,1]^K.$$ 

\subsection{Observations and latent variables}\label{observations}

Snapshots of the dynamic undirected network $G(t) = (V(t), E(t))$ are observed at discrete times $t \in \mathcal{T} = \{t_0, t_1, \ldots, t_T\}$. We denote by $V_0$ the set of vertices at time $t_0$ and by $N_0 := \lvert V_0 \rvert$ its cardinality. The birth–death process is assumed to be observed continuously, meaning that the ordered sequence of birth–death event times, namely 
\[
\tau = \left\{ \tau_1, \tau_2, \dots, \tau_M\right\}, \qquad t_0\leq \tau_1 \text{ and } \tau_M \leq t_T,
\]
the nature of these events (birth or death), and the label of the individual undergoing each birth or death event are all observed. Consequently, the number of vertices $N(t)$ is known at any  $t \in [t_0, t_T]$.

The community affiliations of the individuals $i \in V := \bigcup_{t \in \tau } V(t)$ are unobserved, and hence ${Z}= \left\{ Z_i \right\}_{i\in V}$  is a collection of latent vectors. Let $N := \lvert V \rvert$ be the total number of individuals present in the system over the time interval $[t_0,t_T]$.

For clarity, we summarize below the observed data and the latent variables involved in the model:

\begin{enumerate}
\item \textbf{Observed variables:}
\begin{enumerate}
     \item the set of edges at times in $\mathcal{T} = \{t_0,\ldots,t_T\}$, the so-called \textit{snapshots}, defined as,
    \[
    e(\mathcal{T}) := \bigl\{(e_{ij}(t_\ell ))_{i,j}\bigr\}_{\ell =0,\ldots,T};
    \]
    \item  the ordered sequence of birth–death event times $\tau = \left\{ \tau_1, \tau_2, \dots, \tau_M\right\}$, with $t_0\leq \tau_1$  and  $\tau_M \leq t_T$; for notation convenience, we introduce an additional artificial event time $\tau_0$ which coincides with $t_0$, i.e., $\tau_0 := t_0$;
    \item the associated individuals and nature of events of the birth–death process, described by the sequence
    \[
    b(\tau) := \left\{(i_\ell, b_\ell)\right\}_{\ell=1,\ldots,M},
    \]
    where $i_\ell$ denotes the label of the individual undergoing the birth or death event at time $\tau_\ell$, and $b_\ell$ indicates whether a birth ($b_\ell = 1$) or a death ($b_\ell = -1$) occurs at time $\tau_\ell$. Consequently, the set $V(\tau_\ell)$, and the number $N(\tau_\ell)$, of alive individuals, is known for every $\ell = 0,\ldots,M$. \\
    Moreover, for each individual $i$, we denote by $\tau_i^b$ and $\tau_i^d$ her/his birth and death times, respectively, with the convention $\tau_i^b = t_0$ for $i$ present in $V_0$ and $\tau_i^d = t_T$ for $i$ present in $V(t_T)$. We also denote by $T_B$ and $T_D$ the set of birth times, respectively death times:
    $$ T_B = \{\tau_\ell : b_\ell = 1 \}, \qquad T_D = \{\tau_\ell : b_\ell = -1 \}. $$
    \end{enumerate}
    \item \textbf{Latent variables:}
    \begin{enumerate}
    \item the community membership random vectors ${Z} = \left\{ Z_i \right\}_{i \in V}$, where, for all $i \in V$,  $Z_i = (Z_{ik})_{k=1,\ldots,K}$ with  $Z_{ik} = \mathds{1} \left\{i \ \text{is in community} \ k \right\}$ and $\sum_{k=1}^K Z_{ik} = 1$.
    \end{enumerate}
\end{enumerate}

Figure~\ref{fig:bd_sbm_schematic} illustrates the generative mechanism of the BD-SBM, showing the evolution of active node set and community labels, and the associated observed snapshots.

\begin{figure}[h]
\captionsetup{
      width=\textwidth,   
      margin=0.99cm       
    }
\centering
\begin{tikzpicture}[scale=0.85, x=1cm,y=1cm, font=\small, >=Stealth]

\definecolor{commA}{RGB}{52,121,199}
\definecolor{commB}{RGB}{231,122,61}
\definecolor{commC}{RGB}{76,175,120}

\newcommand{\birthmark}[2]{\fill (#1,#2) circle (1.1pt);}
\newcommand{\deathmark}[2]{%
  \draw[line width=0.5pt] (#1-0.10,#2-0.10) -- (#1+0.10,#2+0.10);
  \draw[line width=0.5pt] (#1-0.10,#2+0.10) -- (#1+0.10,#2-0.10);
}

\newcommand{\AdjSnapshot}[7]{%
\begin{scope}[shift={(#1,#2)}]
  \def\s{#3}
  \def\nA{#5}\def\nB{#6}\def\nC{#7}
  \pgfmathtruncatemacro{\nTot}{\nA+\nB+\nC}
  \pgfmathsetmacro{\dx}{\s/\nTot}
  \pgfmathtruncatemacro{\bOne}{\nA}
  \pgfmathtruncatemacro{\bTwo}{\nA+\nB}
  \pgfmathtruncatemacro{\nGrid}{\nTot-1}

  \fill[gray!04] (0,0) rectangle (\s,-\s);

  \fill[commA!14] (0,0) rectangle (\bOne*\dx,-\bOne*\dx);
  \fill[commB!14] (\bOne*\dx,-\bOne*\dx) rectangle (\bTwo*\dx,-\bTwo*\dx);
  \fill[commC!14] (\bTwo*\dx,-\bTwo*\dx) rectangle (\s,-\s);

  \draw[line width=0.55pt] (0,0) rectangle (\s,-\s);
  \foreach \i in {1,...,\nGrid}{
    \draw[gray!22,line width=0.25pt] (\i*\dx,0) -- (\i*\dx,-\s);
    \draw[gray!22,line width=0.25pt] (0,-\i*\dx) -- (\s,-\i*\dx);
  }

  \draw[line width=0.50pt] (\bOne*\dx,0) -- (\bOne*\dx,-\s);
  \draw[line width=0.50pt] (\bTwo*\dx,0) -- (\bTwo*\dx,-\s);
  \draw[line width=0.50pt] (0,-\bOne*\dx) -- (\s,-\bOne*\dx);
  \draw[line width=0.50pt] (0,-\bTwo*\dx) -- (\s,-\bTwo*\dx);

  \def\celldot##1##2{%
    \fill[black!70] ({(##1-0.5)*\dx},{-(##2-0.5)*\dx}) circle (0.65pt);
  }
  \def\symdot##1##2{%
    \celldot{##1}{##2}\celldot{##2}{##1}
  }

  \ifnum\nA>1 \symdot{1}{2}\fi
  \ifnum\nB>1 \symdot{\numexpr\bOne+1\relax}{\numexpr\bOne+2\relax}\fi
  \ifnum\nC>1 \symdot{\numexpr\bTwo+1\relax}{\numexpr\bTwo+2\relax}\fi

  \ifnum\nA>0\ifnum\nB>0 \symdot{1}{\numexpr\bOne+1\relax}\fi\fi
  \ifnum\nB>0\ifnum\nC>0 \symdot{\numexpr\bOne+1\relax}{\numexpr\bTwo+1\relax}\fi\fi
  \ifnum\nA>0\ifnum\nC>0 \symdot{1}{\numexpr\bTwo+1\relax}\fi\fi

  \node[below=3pt] at (\s/2,-\s) {#4};
  \node[below=14pt, font=\scriptsize, gray!60] at (\s/2,-\s) {$|V(t)|=\nTot$};
\end{scope}
}

\def\xmin{1.6}
\def\xmax{10.6}
\def\yaxis{3.8}

\def\tA{2.0}
\def\tB{5.0}
\def\tC{8.0}

\def\ySnap{1.55}
\def\sSnap{2.2}
\def\halfSnap{1.1}

\pgfmathsetmacro{\xTimeStart}{\tA}      
\pgfmathsetmacro{\xTimeEnd}{\xmax+3.0}  
\draw[->, line width=0.8pt] (\xTimeStart,\yaxis) -- (\xTimeEnd,\yaxis) node[right] {time};

\foreach \t/\lab in {\tA/$t_0$,\tB/$t_1$,\tC/$t_2$}{
  \draw[dashed, gray!55, line width=0.35pt] (\t,\yaxis+3.05) -- (\t,\ySnap+0.10);
  \draw[line width=0.8pt] (\t,\yaxis-0.10) -- (\t,\yaxis+0.10);
  \node[below] at (\t,\yaxis-0.15) {\lab};
}

\node[anchor=west] at (\xmin,7.55)
{\textbf{Continuous-time population (birth--death) with labels $Z_i$}};

\foreach \b/\d/\y/\c in {
  2.0/8.1/7.0/commA,
  2.0/6.2/6.6/commB,
  2.0/10.0/6.2/commC,
  2.6/8.9/5.8/commA,
  3.0/4.9/5.4/commB,
  3.7/9.7/5.0/commC,
  4.2/10.0/4.6/commB,
  5.4/7.6/4.2/commB}{
  \draw[line width=1.9pt, draw=\c, line cap=round] (\b,\y) -- (\d,\y);
  \birthmark{\b}{\y}
  \deathmark{\d}{\y}
}

\begin{scope}[shift={(12,6.75)}]
  \filldraw[
    fill=white, fill opacity=0.92,
    draw=gray!60, draw opacity=1,
    line width=0.4pt, rounded corners=2pt
  ] (-0.25,0.55) rectangle (2.2,-0.55);

  \begin{scope}[shift={(-0.05,0.05)}]
    \draw[line width=1.9pt, commA] (0,0.25) -- (0.6,0.25);
    \node[anchor=west] at (0.7,0.25) {$Z_{i1}=1$};
    \draw[line width=1.9pt, commB] (0,-0.05) -- (0.6,-0.05);
    \node[anchor=west] at (0.7,-0.05) {$Z_{i2}=1$};
    \draw[line width=1.9pt, commC] (0,-0.35) -- (0.6,-0.35);
    \node[anchor=west] at (0.7,-0.35) {$Z_{i3}=1$};
  \end{scope}
\end{scope}

\node[anchor=west] at (\xmin,2.25)
{\textbf{Observed network snapshots (edges among currently alive nodes)}};

\AdjSnapshot{\tA-\halfSnap}{\ySnap}{\sSnap}{snapshot at $t_0$}{1}{1}{1}
\AdjSnapshot{\tB-\halfSnap}{\ySnap}{\sSnap}{snapshot at $t_1$}{2}{2}{2}
\AdjSnapshot{\tC-\halfSnap}{\ySnap}{\sSnap}{snapshot at $t_2$}{2}{1}{2}

\end{tikzpicture}
\caption{Schematic BD-SBM with $K=3$ communities: individuals appear and disappear following a continuous time birth--death process. For $i=1,\ldots, 8$, individual $i$ has a a community label defined as a vector $Z_i = (Z_{i})_{1\le i\le 3}$, where $Z_{ik}=1$ if individual $i$ is in community $k$, and 0 otherwise.  Communities are represented by different colors. At observation times $t_0,t_1,t_2$, we are informed by a network snapshot among the individuals alive at that time; which is represented by the adjacency matrix of the network at that time. The size of the population at snapshot time $t_m$ is $|V(t_m)|$. Diagonal blocks, corresponding to interactions from individuals in the same community, are highlighted. Interactions are symmetric and represented by dots centered in cells.}
\label{fig:bd_sbm_schematic}
\end{figure}

\medskip 

For convenience, we introduce the latent community-size indicators at the jump times of the birth-death process. 
For each event time $\tau_\ell$ with index $\ell = 0,\ldots,M$, each community $k = 1,\ldots,K$ and each integer $n \geq 0$, we define,
\[L^\ell = \left\{ L_k^\ell \right\}_k = (L_{k,n}^\ell)_{k,n}, \quad \text{where} \quad L_{k,n}^\ell = \mathds{1} \left\{  \sum_{i \in V(\tau_\ell)} Z_{ik} = n \right\}.\]

Thus, $L_{k,n}^{\ell} = 1$ if and only if community $k$ has size $n$ at time $\tau_\ell$. 
These variables satisfy
\begin{align}
    \sum_{n=0}^{N(\tau_\ell)} L_{k,n}^{\ell} &= 1, \nonumber \\
    \sum_{n=0}^{N(\tau_\ell)} n\,L_{k,n}^{\ell} &= \sum_{i\in V(\tau_\ell)} Z_{ik},\nonumber\\
    \sum_{k=1}^K \sum_{n=0}^{N(\tau_\ell)} n\,L_{k,n}^{\ell} &= N(\tau_\ell),\nonumber
\end{align}
where $N(\tau_\ell) = \lvert V(\tau_\ell)\rvert $ denotes the number of individuals alive at time $\tau_\ell$.

Moreover, the size of each community evolves over time according to a birth–death Markov chain and can change by at most one unit between two consecutive jump times. This is encoded, for every $\ell = 1,\ldots,M$, and every $k\in\{1,\ldots,K\}$, by
\begin{align}
\text{there is a birth at time } \tau_\ell 
&\quad \Leftrightarrow \quad 
\left\{  
\begin{array}{lcl}
\displaystyle \sum_{k,n} L_{k,n+1}^{\ell} L_{k,n}^{\ell-1} &=& 1,\\[0.2cm]
\displaystyle \sum_{k, n' \neq n+1} L_{k,n'}^{\ell} L_{k,n}^{\ell-1} &=& 0,
\end{array} 
\right. \nonumber  \\
\text{there is a death at time } \tau_\ell  
&\quad \Leftrightarrow \quad 
\left\{  
\begin{array}{lcl}
\displaystyle \sum_{k, n} L_{k,n-1}^{\ell} L_{k,n}^{\ell-1} &=& 1,\\[0.2cm]
\displaystyle \sum_{k, n' \neq n-1} L_{k,n'}^{\ell} L_{k,n}^{\ell-1} &=& 0.
\end{array} 
\right. \nonumber
\end{align}

\section{Inference for the BD-SBM}

 From the observed data, recall that our inferential goals are twofold: (i) to recover the classes to which each individual belongs (latent community membership random vector), and (ii) to estimate the community- and population-dynamic parameters, that is, the model parameter vector $\theta = (\lambda, \mu, \pi, \beta)$.

Models with latent variables, such as the one considered here, pose specific challenges for maximum likelihood estimation because the likelihood of the observed data involves integrating over all possible configurations of the latent variables. In our setting, the observed-data likelihood of $(\tau, e(\mathcal{T} ), b(\tau))$ is obtained by summing the complete-data likelihood over all possible assignments of the latent community memberships ${Z}$. Except in very small systems (e.g.\ $N$ and $M$ small), this summation is not tractable and the observed-data likelihood does not admit a closed-form expression. As a consequence, computing or maximizing the likelihood directly is computationally prohibitive, which would severely limit the practical use of the model.

To overcome this challenge, the inference is usually carried out by algorithms that either find or approximate   locally maxima of the observed-data log-likelihood.  A classical approach is the expectation-maximization (EM) algorithm, which alternates between computing the conditional expectation of the complete-data log-likelihood (E-step) and maximizing this expectation with respect to the parameters (M-step). In stochastic block models, however, the conditional distribution of the latent labels given the observed edges does not factorize across individuals, so the E-step is not available in closed form and becomes computationally intractable.

This motivates the use of variational approximations to EM (VEM; see \cite{jordan1999introduction}). For static SBMs, VEM was introduced by \cite{daudin2008mixture} and has since been extended in various directions; see, for instance, \cite{matias2014modeling} for an overview and comparisons with alternative estimators. In parallel, Markov chain Monte Carlo (MCMC) methods have also been proposed for dynamic network models, providing accurate posterior approximations at the price of a substantial computational cost; see for example \cite{ludkin2018dynamic,pmlr-v31-dubois13a}. In contrast, variational methods offer a scalable alternative by optimizing a tractable lower bound on the evidence (ELBO), and have been successfully applied to dynamic SBMs \cite{matias2017statistical,corneli2017dynamic}. In the setting of our BD-SBM,  we adopt a Variational Expectation-Maximization (VEM) strategy. 

While the birth–death parameters $(\lambda,\mu)$ admit closed-form maximum likelihood estimators,  the SBM parameters $(\beta,\pi)$  together with the class at which each individual belongs do not; hence,
we  use a variational approximation to estimate the latter. 

In the remainder of this section, we first derive the complete-data likelihood, then describe the estimation of the birth–death parameters, introduce the variational family, and finally present the VEM algorithm, model selection criterion, and initialization procedure.

\subsection{Complete-data likelihood}

We begin by deriving the complete-data likelihood of $\big({Z}, \tau, e(\mathcal{T} ), b(\tau)\big)$ parametrized by $\theta=(\lambda,\mu,\pi,\beta)$ and associated to the probability  distribution denoted $\Pro_{\theta}$. By a slight  abuse of notation, we denote this  complete-data likelihood 
by $\Pro_{\theta}\big({Z}, \tau, e(\mathcal{T} ), b(\tau)\big)$. 
Using the conditional independence structure of the model, this complete-data likelihood can be factorized as follows, 
\begin{equation}\label{Likelihood}
    \Pro_{\theta}\big({Z}, \tau, e(\mathcal{T} ), b(\tau)\big)
    \;=\; \Pro_{\theta}\big(e(\mathcal{T} )\mid {Z}, \tau, b(\tau)\big)\;
          \Pro_{\theta}\big({Z}, \tau, b(\tau) \mid Z^{t_0} \big)\;
          \Pro_{\theta}\big( Z^{t_0}\big),
\end{equation}
where $Z^{t_0} = \left\{ Z_i \right\}_{i\in V_0}$ denotes the community memberships of the individuals present in the system at time $t_0$.

\begin{proposition}\label{complete-log-lik}
The complete-data log-likelihood of the BD-SBM  is given by
\begin{align}
    \log \Pro_{\theta}\big({Z}, \tau, e(\mathcal{T} ), b(\tau)\big)
    = & \sum_{i<j} \sum_{k_1, k_2} Z_{i k_1} Z_{j k_2} 
          \!\!\! \sum_{t_\ell\in \Upsilon_{ij}} \!\! \log  \phi \big(e_{ij}(t_\ell ), k_1, k_2\big)
           + |T_B|\log  \lambda + |T_D| \log \mu \nonumber \\[0.1cm]
    -  &  (\lambda + \mu) I_N + \!\! \sum_{\tau_\ell \in T_B} \! \!  \sum_k \! \! \! \! \sum_{n=1}^{N(\tau_{\ell -1})} \!\!  \! \! L^{\ell }_{k,n+1} L^{\ell -1}_{k,n} \log n + \sum_{i \in V_{0}} \sum_k Z_{ik} \log  \beta_k, 
    \label{log_likelihood}
\end{align}
where, for any pair of distinct individuals $(i,j)$ and any pair of communities $(k_1,k_2)$,
\begin{align*}
\phi \big(e_{ij}(t_\ell), k_1 , k_2 \big) 
&=  (\pi_{k_1k_2})^{e_{ij}( t_\ell) }\big(1 - \pi_{k_1k_2}\big)^{1 - e_{ij} (t_\ell )}, \\
\Upsilon_{ij} 
&= \Big\{ t_\ell \in \mathcal{T} : t_\ell\in [\max(\tau_i^b, \tau_j^b),\, \min(\tau_i^d , \tau_j^d) ]\Big\}, \\
I_N 
&= \sum_{\ell =1}^{M} N(\tau_{\ell -1})\big(\tau_\ell-\tau_{\ell -1}\big). 
\end{align*}
\end{proposition}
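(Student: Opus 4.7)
The plan is to evaluate the three factors appearing in the factorization \eqref{Likelihood} separately, log them, and combine. The \emph{initial factor} $\Pro_{\theta}(Z^{t_0})$ follows directly from the independent sampling at $t_0$: since $\Pro_{\theta}(Z^{t_0})=\prod_{i\in V_0}\prod_k \beta_k^{Z_{ik}}$, its log yields the term $\sum_{i\in V_0}\sum_k Z_{ik}\log\beta_k$. The \emph{edge factor} $\Pro_{\theta}(e(\mathcal{T})\mid Z,\tau,b(\tau))$ is also routine: conditioning on $(\tau,b(\tau))$ fixes the alive sets $V(t_\ell)$ and hence each $\Upsilon_{ij}$; the SBM emission gives independent Bernoulli edges across pairs and snapshots, and the standard indicator identity $\phi(e_{ij}(t_\ell))=\prod_{k_1,k_2}\phi(e_{ij}(t_\ell),k_1,k_2)^{Z_{ik_1}Z_{jk_2}}$ yields, after logging, the first double sum in \eqref{log_likelihood}.

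The substantive work is the \emph{dynamics factor} $\Pro_{\theta}(Z,\tau,b(\tau)\mid Z^{t_0})$, which jointly encodes the birth--death mechanism and the inheritance rule for newborn labels. I would view $(V(t),Z^{V(t)})_{t\ge t_0}$ as a pure-jump Markov process on the enlarged state space (alive set together with community labels) and decompose its likelihood over the sub-intervals $[\tau_{\ell-1},\tau_\ell]$. On each such sub-interval the total event rate is $N(\tau_{\ell-1})(\lambda+\mu)$, giving an exponential waiting-time factor $\exp\bigl(-N(\tau_{\ell-1})(\lambda+\mu)(\tau_\ell-\tau_{\ell-1})\bigr)$. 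Conditional on an event at $\tau_\ell$, the probability that it is a birth producing a newborn in community $k$ equals $\lambda\,n_k(\tau_{\ell-1})/[N(\tau_{\ell-1})(\lambda+\mu)]$, obtained by combining the total birth rate $N(\tau_{\ell-1})\lambda$ with the inheritance probability $n_k(\tau_{\ell-1})/N(\tau_{\ell-1})$, where $n_k(\tau_{\ell-1})=\sum_{j\in V(\tau_{\ell-1})}Z_{jk}$. Analogously, the death of a specific alive individual occurs with probability $\mu/[N(\tau_{\ell-1})(\lambda+\mu)]$. Multiplying the waiting-time density by the event probability cancels the factor $N(\tau_{\ell-1})(\lambda+\mu)$ and leaves $\lambda\,n_k(\tau_{\ell-1})\exp(\cdots)$ at a birth and $\mu\exp(\cdots)$ at a death.

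Taking logarithms and summing over $\ell=1,\ldots,M$ produces $|T_B|\log\lambda+|T_D|\log\mu-(\lambda+\mu)I_N$, together with $\sum_{\tau_\ell\in T_B}\log n_k(\tau_{\ell-1})$. To match the form of \eqref{log_likelihood}, I would rewrite the latter sum through the indicator variables $L^{\ell}_{k,n}$: a birth in community $k$ at $\tau_\ell$ is precisely the event $L^{\ell-1}_{k,n}L^{\ell}_{k,n+1}=1$ for the unique integer $n$ equal to the pre-birth size of community $k$, so $\log n_k(\tau_{\ell-1})=\sum_k\sum_{n=1}^{N(\tau_{\ell-1})}L^{\ell-1}_{k,n}L^{\ell}_{k,n+1}\log n$. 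The main obstacle will be cleanly deriving the ``birth into community $k$'' rate as $\lambda n_k$ by combining the per-individual birth rate with the inheritance probability, and verifying that the $L$-indicator rewriting unambiguously isolates exactly one value of $n$ per birth event while making no spurious contribution at death events.
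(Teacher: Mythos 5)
Your proposal is correct and follows essentially the same route as the paper: factorize the complete-data likelihood into the initial-label, dynamics, and edge terms, evaluate each, and rewrite the birth contribution $\log n_k(\tau_{\ell-1})$ via the indicators $L^{\ell-1}_{k,n}L^{\ell}_{k,n+1}$. The only difference is that you derive the dynamics factor explicitly from the competing-exponential-clocks description (waiting-time density times event probability, with the rate $N(\tau_{\ell-1})(\lambda+\mu)$ cancelling), whereas the paper simply states the resulting product form; the concerns you flag at the end (uniqueness of the isolated $n$ per birth and vanishing contribution at deaths) do check out.
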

Note that by construction, $\Upsilon_{ij}$ is the set of snapshot times at which both $i$ and $j$ are alive in the system. 
\begin{proof}
Under the BD-SBM setting, the three terms in the right-hand side of Equation \eqref{Likelihood} are defined as follows, 
\begin{align*}
 \Pro_{\theta}\big(e(\mathcal{T} )\mid {Z}, \tau, b(\tau)\big)
 &= \prod_{i<j} \prod_{k_1, k_2} \prod_{t_\ell \in \Upsilon_{ij}} 
     \Big[ \phi \big( e_{ij}(t_\ell ), k_1 , k_2 \big) \Big]^{Z_{ik_1} Z_{jk_2}}, \nonumber \\
 \Pro_{\theta}\big({Z}, \tau, b(\tau) \mid Z^{t_0} \big)
 &= \prod_{\ell =1}^{M} 
    \Bigg[
      e^{  -N(\tau_{\ell -1}) (\lambda + \mu)\Delta_{\ell} }
      \prod_{k=1}^{K} 
      \Big( \lambda  \sum_{i \in V(\tau_{\ell -1})}  Z_{ik} \Big)^{Z_{j_\ell k}\mathds{1} \left\{ b_\ell = 1  \right\} }
      \mu^{Z_{j_\ell k}\mathds{1} \left\{ b_\ell = -1 \right\} } 
    \Bigg]  \nonumber \\
 &= \lambda^{|T_B|}\, \mu^{|T_D|} 
    e^{- (\lambda + \mu ) I_N }
    \prod_{\tau_\ell \in T_B} \prod_{k=1}^{K} 
    \Big( \sum_{i \in V(\tau_{\ell -1})} Z_{ik} \Big)^{Z_{j_\ell k}} \nonumber \\
 &= \lambda^{|T_B|}\, \mu^{|T_D|} 
    e^{- (\lambda + \mu ) I_N }
    \prod_{\tau_\ell \in T_B}\prod_{k=1}^{K} \prod_{n=1}^{N(\tau_{\ell -1})} 
    n^{L_{k,n+1}^{\ell } L_{k,n}^{\ell -1}},  \\
 \Pro_{\theta} ( Z^{t_0}) 
 &= \prod_{i \in V_{0}} \prod_{k}  \beta_k^{Z_{ik}},\nonumber 
\end{align*}
with $\Delta_{\ell} = \tau_{\ell} - \tau_{\ell -1}$. \\
Summing the logarithm of these three terms gives the result.

\end{proof}

\subsection{Parameter estimation of the birth-death process }\label{sec:b-d_inf}

From Equation \eqref{log_likelihood}, it is apparent that the contributions of $\lambda$ and $\mu$ to the complete-data likelihood depend only on the observed birth and death times and on the total population size at each event time, but not on the latent community labels ${Z}$. Straightforward maximization of the log-likelihood with respect to $(\lambda,\mu)$ yields the following maximum likelihood estimators:
\[
\widehat{\lambda} = \frac{|T_B|}{I_N} 
\quad \text{and} \quad 
\widehat{\mu} = \frac{|T_D|}{I_N},
\]
with 
\[
I_N = \sum_{\ell =1}^{M} N(\tau_{\ell -1})\Delta_{\ell}, 
\qquad \Delta_{\ell} = \tau_\ell-\tau_{\ell-1}.
\]
These coincide with the classical ML estimators for a homogeneous birth-death process with constant birth and death rates that can be found, for instance, in \cite{reynolds1973estimating, keiding1975maximum}.

In contrast, the maximum likelihood estimation of the community proportions $\beta_k$ and block connection probabilities $\pi_{k_1k_2}$, based on the marginal distribution of the observed data alone, is neither computationally tractable nor available in closed form. We therefore resort to a VEM-type algorithm to estimate $(\pi,\beta)$ together with the latent variables (see Section~\ref{VEM}).

In what follows, we denote by 
\[
\tilde{\theta} = (\widehat{\lambda},\widehat{\mu},\pi, \beta),
\]
the parameter vector of the model composed by the ML estimators of $(\lambda,\mu)$, complemented with the remaining parameters $(\pi,\beta)$ that are still to be estimated.

\subsection{Parameter estimation for the BD-SBM and prediction of latent variables}
\label{VEM}

Classical EM algorithms rely on the posterior distribution of the latent variables given the observations, that is,
\[
\Pro_{\theta}\big({Z}  \mid  \tau, e(\mathcal{T} ), b(\tau)\big)
= \Pro_{\theta}\big( (Z_{i})_{i}  \mid  \tau, e(\mathcal{T} ), b(\tau)\big).
\]
In our model, as in standard SBMs, this posterior distribution does not factorize over individuals: for any pair of distinct nodes $i$ and $j$, and any pair of communities $(k_1,k_2)$, the indicators $Z_{ik_1}$ and $Z_{jk_2}$ are no longer independent when conditioning on the data. As a result, the exact posterior is not tractable and the E-step of EM cannot be performed in closed form.

Instead, we derive a variational expectation-maximization algorithm (VEM), a particular instance of variational inference (see \cite{bishop2006pattern}). The idea is to replace the intractable posterior by a tractable family of distributions $\mathcal{Q}$ on ${Z}$, and to optimize within this family a lower bound on the observed-data log-likelihood. 

In the sequel, $q \in \mathcal{Q}$ denotes either a distribution or its associated likelihood function. The notation $\mathbb{E}_q[\cdot]$ refers to expectation with respect to $q$.
Next, for any  $q \in \mathcal{Q}$, we decompose the observed-data log-likelihood as follows, 
\[
\log \Pro_{\tilde{\theta}}\big(  \tau, e(\mathcal{T} ), b(\tau) \big)
= \mathcal{L}(q) + \mathrm{KL}\!\left( q \,\middle\|\, \Pro_{\tilde{\theta}}(\cdot \mid \tau, e(\mathcal{T} ), b(\tau)) \right),
\]
where
\[
\mathcal{L}(q) 
= \mathbb{E}_q \left[ \log \Pro_{\tilde{\theta}}\big({Z}, \tau, e(\mathcal{T} ), b(\tau)\big) \right]
  - \mathbb{E}_q \big[ \log q ({Z}) \big],
\]
is the evidence lower bound (ELBO), and $\mathrm{KL}(\cdot\|\cdot)$ denotes the Kullback-Leibler divergence. Both can be written as follows,

\begin{align}
KL \left( q \parallel \Pro_{\tilde{\theta}} \right) & = - \mathbb{E}_q \left( \log\left( \frac{\Pro_{\tilde{\theta}}({Z} \mid \tau, e(\mathcal{T} ), b(\tau))}{q({Z})} \right) \right), \nonumber \\
\mathcal{L}(q) &=\mathbb{E}_q \left(\log \left( \frac{\Pro_{\tilde{\theta}}({Z}, \tau, e(\mathcal{T} ), b(\tau))}{q({Z})} \right) \right) \nonumber \\
& = \mathbb{E}_q \left( \log \Pro_{\tilde{\theta}} (\tau, e(\mathcal{T} ), b(\tau), {Z} ) \right) -  \mathbb{E}_q \left( \log q ({Z}) \right). \label{ELBO}
\end{align}

Since the KL divergence is non-negative, we have
\[
\log \Pro_{\tilde{\theta}}\big(  \tau, e(\mathcal{T} ), b(\tau) \big)
\;\geq\; \mathcal{L}(q),
\]
with the equality holding if and only if $q({Z})$ coincides with the true posterior distribution $\Pro_{\tilde{\theta}}({Z} \mid  \tau, e(\mathcal{T} ), b(\tau))$. The idea of the VEM approach is to maximize the ELBO $\mathcal{L}(q)$ 
with respect to  $q \in \mathcal{Q}$, which is equivalent to minimize the Kullback-Leibler (KL) divergence between $q$ and the true posterior distribution with respect to $q \in \mathcal{Q}$. Without  restriction on the distribution family  $\mathcal{Q}$, the ELBO function is maximal  for  $q({Z})=\Pro_{\tilde{\theta}}({Z} \mid  \tau, e(\mathcal{T} ), b(\tau))$, as the KL divergence becomes zero. However in the considered model, since  the posterior distribution is  not tractable, the goal is to approximate it on a  given family of variational distributions $\mathcal{Q}$ which are  tractable.  

\subsubsection{Variational distribution family $\mathcal{Q}$}\label{subsub:Var_dist}

We now specify the family of variational distributions $\mathcal{Q}$ used in our VEM algorithm. A fully factorized approximation, assuming independence among all individuals, would disregard the strong constraints that the birth-death process imposes on community sizes, and would also increase the computational cost by a factor of $N$. Instead, we adopt a structured variational approximation that explicitly accounts for the joint evolution of community sizes and, through them, induces dependencies among community memberships, while still remaining computationally tractable.

Let us define  the family of  distributions $\mathcal{Q}$ that we consider. We decompose the variational distribution $q$ on the latent labels into a product of two terms, one is related exclusively to the labels of the individuals present at $t_0$, the other one represents the conditional distribution of the labels  of the remaining individuals (those that arrive after $t_0$, that we call in the sequel the \textit{newborns}) given the size at $t_0$ of the $K$ communities,  
\begin{align}
q({Z}) &  =  q\left(\left\{ Z_i \right\}_{ i \notin  V_0} \mid Z^{t_0}\right) \; q(Z^{t_0})=  q\left(\left\{ Z_i \right\}_{ i: \tau_i^b \in T_B} \mid Z^{t_0}\right) \; q(Z^{t_0}),  \label{q_var_dist}
\end{align}
where $Z^{t_0} = \left\{ Z_i \right\}_{i\in V_0}$ are the labels of individuals present at time $t_0$ and
$\left\{ Z_i \right\}_{ i : \tau_i^b \in T_B}$ are the labels of individuals entering the system after $t_0$.

\begin{enumerate}
\item \emph{Individuals present at time $t_0$.}  
We assume independence between  the label vectors of individuals present  at the initial time, 
\begin{align}
q(Z^{t_0}) 
&= \prod_{i \in V_{0} } q(Z_i) 
 = \prod_{i \in V_{0} } \prod_{k=1}^K \delta(i,k)^{Z_{ik}},
\label{aprox_distribution}
\end{align}
where $\delta(i,k) = q(Z_{ik}=1)$ are variational parameters in $\mathcal{Q}$ and satisfy $\sum_{k} \delta(i,k) = 1$ for all $i \in V_0$.

The distribution of the number of individuals at time $t_0$ across the $K$ communities 
can be viewed as a generalization of the multinomial distribution, known as the 
\emph{Poisson Multinomial Distribution (PMD)}, see \cite{lin2023computing}. 
Unlike the classical multinomial case, here the success probabilities are not identical 
across individuals: each individual $i \in V_{0}$ is associated with a probability vector 
$\delta(i,k)$, with
\[
\sum_{k=1}^K \delta(i,k) = 1, \quad \forall i.
\]

Let $N_1, N_2, \dots, N_K$ denote the community counts at time $t_0$. By construction, the random variables satisfy the constraint
\[
\sum_{k=1}^K N_k = N_0,
\]
where $N_0$ is the population size at time $t_0$.

Our interest lies in the marginal distributions of the components $N_1, N_2, ..., N_K$.  For each community $k$ and count $n$, we define the marginal distribution by
\[
\gamma_{\text{mar}}(0,k,n) = q\big( N_k = n \big) = q\big(L_{k,n}^0 = 1\big),
\]
which corresponds to a \emph{Poisson binomial distribution} (see
\cite{tang2023poisson}) with success probabilities $\{\delta(i,k)\}_{i \in V_0}$.
More precisely,

\begin{align}\label{PMD}
q\big(L_{k,n}^0 = 1\big) & =   \sum_{A_n \subset V_0:  |A_n|=n \;} \prod_{i \in A_n}  \delta(i,k)  \prod_{j \in A_n^{c}} (1-\delta(j,k)),
\end{align}  
where $A_n^{c}$ denotes the complementary of $A_n$ in $V_0$. 

\item \emph{Individuals born after $t_0$ (newborns).} The community memberships of individuals not present at $t_0$ are modelled conditionally on $Z^{t_0}$ as
\begin{align}
q\left( \left\{ Z_i \right\}_{ i \notin  V_0} \mid Z^{t_0}\right) =  q\left(\left\{Z_i\right\}_{ i: \tau_i^b \in T_B} \mid Z^{t_0}\right),  \label{markov_0}
\end{align}
Given the community memberships of individuals present at $t_0$, the joint distribution of the community assignments of all newborns is fully characterised by the joint distribution of the community sizes at the event times, that is
\begin{align*}
\left\{Z_i\right\}_{ i: \tau_i^b \in T_B} \mid Z^{t_0} & \stackrel{\mathcal{D}}{=  } (L^{1}, \ldots, L^{M}\mid Z^{t_0}),
\end{align*}   
where $L^{\ell } = (L_{k,n}^{\ell })_{k,n}$ denotes the collection of community-size indicators at the event time $\tau_\ell$. 

Since each event time corresponds to either a single birth or a single death, the sequence of
community sizes $(L^{0},\ldots,L^{M})$ forms a Markov chain in the generative model: at any event
time $\tau_\ell$ the distribution of $L^{\ell}$ given the past depends on the history only through $L^{\ell-1}$. However, in our inferential setting we condition on the full event history
$(\tau, b(\tau))$, which records not only whether a birth or a death occurs at each $\tau_\ell$,
but also the identity $i_\ell$ of the individual involved in the event. This additional information breaks the Markov property at death times: when $b_\ell=-1$,  $L^\ell$ depends on the past not only through $L^{\ell-1}$ but also through the latent memberships $Z_i$ of individuals alive before
$\tau_\ell$.

At $\tau_\ell$, the knowledge of the past $(L^{1:\ell -1}, Z^{t_0})$ is equivalent to the knowledge of $(L^{\ell -1}, \left\{ Z_i\right\}_{i \in V(\tau_0:\tau_{\ell -1})})$, where $L^{1:\ell -1}=(L^{1},\ldots,L^{\ell -1})$ and $V(\tau_0:\tau_{\ell -1}) = \bigcup_{j=0}^{\ell -1} V(\tau_j)$.  Therefore, we have under $q$

\[ L^{\ell} \mid  L^{1:\ell -1}, Z^{t_0}   \stackrel{\mathcal{D}}{=} L^{\ell} \mid  L^{\ell -1}, \left\{ Z_i\right\}_{i \in V(\tau_0:\tau_{\ell -1})}. \]

At birth times ($\tau_\ell \in T_B$), the Markov property gives rise to the following simplification
\[
L^{\ell} \mid  L^{1:\ell -1}, Z^{t_0}
\;\stackrel{\mathcal{D}}{=}\;
L^{\ell} \mid  L^{\ell -1}.
\]
Under our variational approximation, Equation \eqref{markov_0} is therefore factorized  as follows, 

\begin{align*}
q\left( \left\{ Z_i \right\}_{ i \notin  V_0} \mid Z^{t_0}\right)  & = q \left(L^{1}, \ldots, L^{M}\mid Z^{t_0} \right) \nonumber \\
    & =  \prod_{\ell =1}^{M} q\left(L^{\ell } \mid  L^{\ell -1} , \left\{ Z_i\right\}_{i \in V(\tau_0:\tau_{\ell -1})}\right)   \nonumber   \\
      & =  \prod_{\ell =1}^{M} q\left(L^{\ell } \mid  L^{\ell -1}\right)^{\mathds{1} \left\{ b_\ell=1  \right\} } q\left(L^{\ell } \mid  L^{\ell -1} , \left\{ Z_i \right\}_{i \in V(\tau_0:\tau_{\ell -1})}\right)^{\mathds{1}\left\{b_\ell=-1  \right\}}.
\end{align*} 

At death times, once the identity of the departing individual is known, the updated community sizes
are deterministically obtained from $L^{\ell -1}$ and $\left\{ Z_i \right\}_{i \in V(\tau_0:\tau_{\ell -1})}$, so that the
conditional distribution reduces to,

\[ q\left(L^{\ell } \mid  L^{\ell -1} , \left\{ Z_i \right\}_{i \in V(\tau_0:\tau_{\ell -1})}\right)^{\mathds{1} \left\{ b_\ell=-1  \right\}} = 1. \]

We can then rewrite Equation \eqref{markov_0} as follows, 
\begin{equation}
    q\big(\left\{ Z_i \right\}_{ i \notin  V_0} \mid Z^{t_0}\big)
    = \prod_{\tau_\ell\in T_B} \prod_{k=1}^K \prod_{n = 1}^{N(\tau_{\ell-1})}
      \gamma(\ell,k,n,n)^{L_{k,n}^{\ell-1} L_{k,n}^{\ell}}\,
      \gamma(\ell,k,n,n+1)^{L_{k,n}^{\ell-1} L_{k,n+1}^{\ell}},
    \label{q_distrib} 
\end{equation}
where $\gamma(\ell,k,n,n')$ denotes the transition probability to move from $n$ to $n'$ individuals in
community $k$ at time $\tau_\ell$, 
that is
\[
  \gamma(\ell,k,n,n')
  \;=\;
  q\big( L_{k,n'}^{\ell} = 1 \mid L_{k,n}^{\ell-1} = 1 \big),
\]
which are  constrained by
\begin{align*}
\gamma(\ell,k,n,n+b_\ell) + \gamma(\ell,k,n,n) & = 1 \quad \text{for all } \ell \geq 1,\\
\text{and} \quad \gamma(\ell,k,n,n')  & = 0 \quad \text{for all } n' \notin \{n,\,n+b_\ell\}, \; \quad \text{for all } \ell \geq 1.
\end{align*}

\medskip 

To summarize, the variational distribution $q$ in \eqref{q_var_dist} is fully specified by Equations~\eqref{aprox_distribution} and~\eqref{q_distrib}, and is parametrized by $\delta(i,k)$ for all $i \in V_0$, $k=1,\dots,K$, and by the transition probabilities $\gamma(\ell,k,n,n')$ for all $k$, $n$, $n'$ and event times $\tau_\ell \in T_B$. Intuitively, the $\delta(i,k)$'s control the community memberships at the initial time $t_0$, while the $\gamma(\ell,k,n,n')$'s govern the evolution of community sizes at subsequent birth events.

Once the transition probabilities $\gamma(\ell,k,n,n')$ have been introduced, we need to relate them to the membership probabilities $\delta(i,k)$ for individuals born after $t_0$. To this end, we work with the marginal distributions of the community sizes at all event times $\tau_\ell \in \tau$. In what follows, we first define these marginal probabilities $\gamma_{\text{mar}}(\ell,k,n)$ and then express the probabilities $\delta(i_\ell,k)$ for individuals $i_l$, who  arrive at $\tau_l$  in the system, in terms of $\gamma_{\text{mar}}$ and $\gamma$.

\begin{enumerate} 
\item \emph{Marginal distributions of community sizes.}
For each $\ell$, $k$ and $n$, we define the marginal distribution of $L_{k,n}^{\ell}$ by
\[
\gamma_{\text{mar}}(\ell,k,n) = q\big(L_{k,n}^{\ell} = 1\big),
\]
which satisfies the recursion
\begin{align}
\gamma_{\text{mar}}(\ell,k,n)
&= \sum_{n' \in \{n,\,n-b_\ell\}} \gamma(\ell,k,n',n)\,
   \gamma_{\text{mar}}(\ell-1,k,n'),
\qquad \forall \ell \geq 1,\; k,\; n, 
\label{marginal}
\end{align}
and which have been already  introduced  for $t_0$, by   $\gamma_{\text{mar}}(0,k,n)$,  defined in \eqref{PMD}.
Note that the ELBO does not explicitly depend on the transition probabilities
$\gamma(\ell,k,n,n')$ at event times $\tau_{\ell} \in T_D$. Nevertheless, the marginals $\gamma_{\text{mar}}(\ell,k,n)$ are
required at every event time (both births and deaths) in order to apply the recursion
\eqref{marginal} and propagate information forward in time, so the full set of transitions
is needed to compute all $\gamma_{\text{mar}}(\ell,k,n)$.

\item \emph{Community membership probabilities induced by $\gamma$ and $\gamma_{\text{mar}}$.}
Once probabilities  $\gamma$ and $\gamma_{\text{mar}}$ 
are specified, we can express the variational community membership probabilities at birth times in terms of these transitions.
The key observation is that the event “the newborn $i_\ell$ at time $\tau_\ell$ belongs to community $k$’’ exactly coincides with an increase by one unit in the size of community $k$ at $\tau_\ell$. Therefore, the quantity $\delta(i_\ell,k)$ can be written as
\begin{align}
\delta(i_\ell, k)
&= \sum_{n=1}^{N(\tau_{\ell-1})} \gamma(\ell,k,n,n+1)\,
   \gamma_{\text{mar}}(\ell-1,k,n) \nonumber \\
&= \sum_{n=1}^{N(\tau_{\ell-1})} \big[1 - \gamma(\ell,k,n,n)\big]\,
   \gamma_{\text{mar}}(\ell-1,k,n), \nonumber
\end{align}
where $i_\ell$ denotes the newborn at time $\tau_\ell$ and
$\gamma_{\text{mar}}(\ell-1,k,n)$ is the marginal probability that community $k$ has size
$n$ just before $\tau_\ell$.

\end{enumerate}
\end{enumerate}

\subsubsection{Variational expectation maximization inference}\label{sec:vem_inf}
Proposition \ref{complete-log-lik} and the family of variational distributions lead to the following expression for the ELBO function given by Equation \eqref{ELBO}:

\begin{align}
   \mathcal{L}(q) & =  \mathbb{E}_q \left( \log( \Pro_{\tilde{\theta}} \left( {Z}, \tau, e(\mathcal{T} ), b(\tau)\right)  \right) -  \mathbb{E}_q \left( \log q ({Z}) \right)  \nonumber \\
     &  = \sum_{i<j} \sum_{k_1, k_2} \delta(i,k_1) \delta(i,k_2) \sum_{\ell \in \Upsilon_{ij}} \log \phi (e_{ij}(t_\ell ), k_1, k_2) +  |T_B|\log \widehat{\lambda} + |T_D| \log \widehat{\mu}- (\widehat{\lambda} + \widehat{\mu}) I_N  \nonumber \\
     & \quad  + \sum_{\tau_\ell \in T_B} \sum_k \sum_{n=1}^{N(\tau_{\ell-1})} \gamma(\ell,k,n,n+1)\gamma_{mar}(\ell -1, k, n)  \log  n  + \sum_{i \in V_{0}} \sum_k \delta(i,k)  \log \beta_k \nonumber \\
    & \quad - \sum_{i\in V_0} \sum_k \delta(i,k) \log \delta(i,k) - \sum_{\tau_\ell \in T_B} \sum_k \sum_{n=1}^{N(\tau_{\ell-1})} \Big[ \gamma(\ell,k, n, n) \gamma_{\text{mar}}(\ell -1,k, n)\log \gamma(\ell,k, n, n) \nonumber\\
     & \quad \quad  \quad + \gamma(\ell,k, n, n+1) \gamma_{\text{mar}}(\ell -1,k, n)\log \gamma(\ell,k, n, n+1)\Big].\label{expectation}
\end{align}

maximizing $\mathcal{L}(q)$ over $q\in\mathcal{Q}$ is equivalent to optimizing over the variational parameters $\delta(i,k)$ for $i\in V_0$ and $\gamma(\ell,k,n,n')$ under the constraints

 \begin{align}\label{constraints}
\left\{ 
\begin{array}{ll}
 (a) &  \sum_{ k } \delta(i,k) = 1, \quad   \forall i, \\
 & \\
(b)   &  \sum_{k}\sum_{n = 0}^{N(\tau_{\ell -1})} \; \gamma(\ell,k, n, n)  \; \gamma_{mar}(\ell -1,k,n)  =K-1, \quad \forall \ell \geq 1,\\ 
&\\ 
(c) & \gamma(\ell,k,n,n+b_\ell) + \gamma(\ell,k,n,n)  =  1,  \quad  \quad \forall \ell \geq 1. \\
 \end{array} \right.
 \end{align}  
 
Note that  relations (b) and (c) imply, 
\[\sum_{k}\sum_{n = 0}^{N(\tau_{\ell -1})} \; \gamma(\ell,k, n, n+b_\ell)   \;\gamma_{mar}(\ell -1,k,n) = 1, \quad \forall \ell \geq 1, \]
which expresses the fact that exactly one community size changes at each event time.

\medskip

Adapting  Variational Expectation Maximization algorithms to our setting  leads to the  pseudo-code in Algorithm 1. 

\begin{algorithm}[h]
\caption{Variational Expectation Maximization}
\label{pseudocode}

\begin{algorithmic}[1]\label{algo:VEM_BD-SBM}
\State \textbf{Input:} Observed data $(\tau, e(\mathcal{T}), b(\tau))$; initial parameters $(\pi^{(0)},\beta^{(0)})$; initial variational distribution $q^{(0)}$; pre-estimated birth-death parameters $(\widehat{\lambda},\widehat{\mu})$.
\State \textbf{Output:} Estimated parameters $(\pi,\beta)$ and optimal variational distribution $q \in \mathcal{Q}$.
\State Set iteration counter $m = 0$
\Repeat
    \State \textbf{VE-Step:} \Comment{Update the variational distribution $q^{(m+1)}$}
    	\State \quad Maximize the ELBO with respect to $q$
    \[
    q^{(m+1)} = \arg\max_{q \in \mathcal{Q}} \left\{ \mathbb{E}_{q}[\log \Pro_{\tilde{\theta}^{(m)}}( {Z},\tau, e(\mathcal{T} ), b(\tau))] - \mathbb{E}_{q}[\log q({Z})] \right\}
    \]
    \State \textbf{VM-Step:} \Comment{Update the model parameters $\tilde{\theta}^{(m+1)}=(\widehat{\lambda},\widehat{\mu},\pi^{(m+1)},\beta^{(m+1)})$}
    	\State \quad Maximize the ELBO with respect to $(\pi,\beta)$ a part of    $\tilde{\theta}=(\widehat{\lambda},\widehat{\mu},\pi,\beta)$:
    \[
    (\pi^{(m+1)},\beta^{(m+1)}) = \arg\max_{(\pi,\beta)} \mathbb{E}_{q^{(m+1)}}[ \log \Pro_{\tilde{\theta}} ({Z}, \tau, e(\mathcal{T} ), b(\tau)) ]
    \]
    \State Increment iteration counter $m = m + 1$
\Until{a convergence criterion is met.}
\end{algorithmic}
\end{algorithm}

\bigskip

In the VE-step, we must determine the $\delta(i,k)$ for all $i\in V_0$ and the $\gamma(\ell,k,n,n)$ for all birth times $\tau_\ell\in T_B$ that maximize $\mathcal{L}(q)$ under the constraints \eqref{constraints}. Recall that under $q$, $\delta(i,k)$ is the variational community membership distribution for individual $i$, whereas $\gamma(\ell,k,n,n)$ encodes the transition probabilities for community-size changes.

\begin{proposition} \label{prop:gamma}
For $\ell \in T_B$, we refer to $i_\ell$ as the newborn at time $\tau_\ell$, and we denote by $\widehat{\gamma}(\ell,k,n,n)$ the quantity that maximizes $\mathcal{L}(q)$ with respect to $\gamma(\ell,k,n,n)$ when $\tilde{\theta} = (\widehat{\lambda},\widehat{\mu},\pi,\beta)$ is fixed. Then there exists a unique $\rho_\ell > 0$ such that
\[
\widehat{\gamma}(\ell,k,n,n)
= \frac{\rho_\ell}{
    \rho_\ell
    + n \displaystyle\prod_{j \neq i_\ell}
                       \prod_{k'}
                       \prod_{\ell' \in \Upsilon_{i_\ell j}}
                       \Big[ \phi\big( e_{ij}(t_{\ell ' }), k, k' \big) \Big]^{\widehat{\delta}(j,k')}
  },
\qquad \text{for } n \in [0, N_{\ell-1}[,
\]
where the parameter $\rho_\ell$ is determined by the constraint (b) in~\eqref{constraints}.
\end{proposition}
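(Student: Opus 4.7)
The plan is to carry out a standard Lagrange multiplier calculation, optimising $\mathcal{L}(q)$ in \eqref{expectation} with respect to the family $\{\gamma(\ell,k,n,n)\}_{k,n}$ at a single birth time $\tau_\ell \in T_B$, all remaining variational and model parameters being held fixed. First I would enumerate every term of $\mathcal{L}(q)$ that depends on a $\gamma(\ell,k,n,n)$, after using constraint (c) of \eqref{constraints} to substitute $\gamma(\ell,k,n,n+1) = 1 - \gamma(\ell,k,n,n)$. Three groups of terms remain: the $\log n$ term, the entropy block, and the edge term, which depends on $\gamma(\ell,k,n,n)$ only implicitly, through the newborn membership
$$\delta(i_\ell,k) \;=\; \sum_{n'=1}^{N(\tau_{\ell-1})} \bigl[1 - \gamma(\ell,k,n',n')\bigr]\,\gamma_{\mathrm{mar}}(\ell-1,k,n').$$
A short computation using $\sum_{n'}\gamma_{\mathrm{mar}}(\ell-1,k,n')=1$ shows that constraint (a) applied to the newborn $i_\ell$ is automatically implied by (b) together with (c), so the only binding constraint in this subproblem is (b). This is the main source of bookkeeping and the principal obstacle: one must carefully account for the implicit dependence of $\delta(i_\ell,k)$ on $\gamma(\ell,k,n,n)$ in the edge term, and recognise that (a) does not contribute an independent multiplier at $i_\ell$.

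Once the dependent terms are identified, I would introduce a scalar multiplier $\sigma_\ell\in\mathbb{R}$ for constraint (b), compute $\partial\mathcal{L}/\partial\gamma(\ell,k,n,n)$ and set it to zero. By the chain rule through $\delta(i_\ell,k)$ the edge contribution is
$$-\gamma_{\mathrm{mar}}(\ell-1,k,n)\sum_{j\neq i_\ell}\sum_{k'}\widehat{\delta}(j,k')\sum_{\ell'\in\Upsilon_{i_\ell j}}\log\phi\bigl(e_{i_\ell j}(t_{\ell'}),k,k'\bigr),$$
the $\log n$ term contributes $-\gamma_{\mathrm{mar}}(\ell-1,k,n)\log n$, the entropy piece yields $-\gamma_{\mathrm{mar}}(\ell-1,k,n)\log\bigl[\gamma(\ell,k,n,n)/(1-\gamma(\ell,k,n,n))\bigr]$, and the Lagrangian contributes $\sigma_\ell\,\gamma_{\mathrm{mar}}(\ell-1,k,n)$. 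Dividing through by the positive factor $\gamma_{\mathrm{mar}}(\ell-1,k,n)$ and rearranging gives
$$\frac{\widehat{\gamma}(\ell,k,n,n)}{1-\widehat{\gamma}(\ell,k,n,n)} = \frac{e^{\sigma_\ell}}{n\displaystyle\prod_{j\neq i_\ell}\prod_{k'}\prod_{\ell'\in\Upsilon_{i_\ell j}}\bigl[\phi(e_{i_\ell j}(t_{\ell'}),k,k')\bigr]^{\widehat{\delta}(j,k')}}.$$
Setting $\rho_\ell := e^{\sigma_\ell}>0$ and solving for $\widehat{\gamma}(\ell,k,n,n)$ yields the announced closed form.

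Finally I would verify that $\rho_\ell$ is uniquely pinned down by constraint (b). Writing $C_{k,n}$ for the positive product appearing in the denominator, each map $\rho\mapsto \rho/(\rho+n\,C_{k,n})$ is continuous and strictly increasing on $(0,+\infty)$, so the function
$$F(\rho) \;=\; \sum_{k=1}^K\sum_{n=0}^{N(\tau_{\ell-1})}\frac{\rho}{\rho+n\,C_{k,n}}\,\gamma_{\mathrm{mar}}(\ell-1,k,n)$$
is a continuous, strictly increasing function of $\rho$. Using $\sum_n\gamma_{\mathrm{mar}}(\ell-1,k,n)=1$, one checks that $F(\rho)\to K$ as $\rho\to+\infty$, whereas $F(\rho) < K-1$ for $\rho$ close enough to $0$. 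The intermediate value theorem then yields a unique $\rho_\ell\in(0,+\infty)$ with $F(\rho_\ell)=K-1$, which is precisely constraint (b), concluding the proof.
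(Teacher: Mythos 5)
Your proposal is correct and follows essentially the same route as the paper's proof: a Lagrange-multiplier optimisation of the ELBO at each birth time, with the edge term entering through the chain rule via $\delta(i_\ell,k)=\sum_n\gamma(\ell,k,n,n+1)\gamma_{\mathrm{mar}}(\ell-1,k,n)$, followed by a monotonicity argument pinning down $\rho_\ell$. The only differences are cosmetic — you substitute constraint (c) directly instead of carrying a multiplier $\eta_{\ell k n}$ for it, and you normalise with (b) where the paper uses the equivalent constraint (d); your function $F(\rho)$ is just $K$ minus the paper's $f_\ell(\rho)$, so the uniqueness argument is the same.
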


\begin{proof}
The result follows by maximizing $\mathcal{L}(q)$ with respect to $\gamma(\ell,k,n,n)$, using Lagrange multipliers for the constraints in \eqref{constraints}. See Appendix~\ref{proof:gamma} for details.
\end{proof}

\begin{remark}
In order to compute $\widehat{\gamma}(\ell,k,n,n)$ in practice, one must first determine
$\rho_\ell$. Since the defining equation for $\rho_\ell$ has no closed-form solution in
general, $\rho_\ell$ is obtained numerically.
\end{remark}

Using Proposition~\ref{prop:gamma} together with (c) in \eqref{constraints} immediately yields, for $\ell  \in T_B$,
\[
\widehat{\gamma}(\ell,k, n, n+1) = 1  - \widehat{\gamma}(\ell,k, n, n).
\]

Optimizing $\mathcal{L}(q)$ with respect to the initialization parameters $\delta(i,k)$ for $i\in V_{0}$ is more delicate. Following the approximation strategy of \cite{matias2017statistical, agarwal2025clustering}, we neglect the dependence of certain terms in $\mathcal{L}(q)$ on $\delta(i,k)$ and update $\delta(i,k)$ by solving the fixed-point equation
\begin{equation}
\widehat{\delta}(i,k)\;\propto\; \beta_k
\prod_{j\neq i}\;\prod_{k'}\;\prod_{\ell' \in \Upsilon_{ij}}
\Bigl[\phi\!\bigl(e_{ij}(t_{\ell'}),\,k,\,k'\bigr)\Bigr]^{\widehat{\delta}(j,k')},
\label{eq:delta-fixed-point}
\end{equation}
where $\phi\!\bigl(e_{ij}(t_{\ell'}),k,k'\bigr)$ and $\Upsilon_{ij}$ are defined in Proposition~\ref{complete-log-lik}.
In simulations (Section~\ref{sec:simu}), this approximation performs well.
For completeness, we provide in the Appendix~\ref{proof:gamma} the derivative used in this update.

For $i\notin V_{0}$, the computation of $\delta(i,k)$ does not entail $\mathcal{L}(q)$ optimization. Indeed, the event that the newborn $i_\ell$ at time $\tau_\ell$ belongs to community $k$ is equivalent to the size of community $k$ increasing by one at $\tau_\ell$, we can express the variational community membership probability $\widehat{\delta}(i_\ell,k)$ in terms of the transition probabilities $\widehat{\gamma}(\ell,k,n,n+1)$ as
\begin{align}
\widehat{\delta}(i_\ell, k) 
&= \sum_{n=1}^{N(\tau_{\ell -1})}
     \widehat{\gamma}(\ell , k, n, n + 1)\, \widehat{\gamma}_{\text{mar}}(\ell -1, k, n),
\label{restriction1}
\end{align}
where\;  
\begin{align} \label{gamm_mar_rec}
\widehat{\gamma}_{\text{mar}}(\ell -1, k, n) = \left\{ 
\begin{array}{lcl}
 \displaystyle{\sum_{n' \in \{n,\,n-b_\ell\}}} \widehat{\gamma}(\ell -1,k,n',n)\,\widehat{\gamma}_{\text{mar}}(\ell -2,k,n'),  & \text{for} & \ell-1 >0,\\
  \displaystyle{\sum_{A_n \subset V_0:  |A_n|=n}} \prod_{i \in A_n}  \widehat{\delta}(i,k)  \prod_{j \in A_n^{c}} (1-\widehat{\delta}(j,k)),& \text{for} & \ell-1=0,
 \end{array} \right. \end{align}
using the PMD expression \eqref{PMD}. The term $n=0$ is excluded from the sum in \eqref{restriction1}, since we cannot have any birth event on a community once this community is empty.

We optimize the ELBO explicitly only with respect to the transition probabilities at birth times, since these are the only ones appearing directly in \eqref{expectation}. Transition probabilities at death times do not appear in the ELBO in closed form since they are already determined \textit{implicitly} by the probabilities of previous events, because we observe the individual involved in each death event. Computing them exactly would require summing over all possible past configurations of community memberships. In other words, it requires to do the \textit{deconvolution} of the $\gamma$ probabilities backwards in time up until the time of birth of the departing individual (so, as many times as the number of events that take place during the lifetime of this individual), which is intractable,   

\begin{samepage}
\begin{align*} 
 \gamma(\ell , k, n, n -1 ) & = q( L_{k, n-1}^{\ell }=1 \mid L_{k, n-1}^{\ell -1} =1) \\
 &  = \sum_{ Z_i \in \{0,1\}^K, \; \forall \; i \in V(\tau_0:\tau_{\ell -1}) }\; q( L_{k,n-1}^{\ell }=1, \left\{ Z_i \right\}_{i \in V(\tau_0:\tau_{\ell -1})} \mid L_{k, n-1}^{\ell -1} =1). 
\end{align*}
\end{samepage}

These probabilities are still needed to propagate the marginals via \eqref{gamm_mar_rec}. Therefore, instead of computing them directly, for $b_\ell=-1$ and individual $i_\ell$ dying at time $\tau_\ell$, we approximate $\gamma(\ell,k,n+1,n)$ by enforcing the additional constraint

\begin{align}
\delta(i_\ell, k) = \sum_{n=0}^{N(\tau_{\ell -1})-1} \gamma(\ell , k, n+1, n)\, \gamma_{\text{mar}}(\ell -1, k, n+1),
\label{extra}
\end{align}
which simply states that the probability that the size of community $k$ decreases by one at $\tau_\ell$ must coincide with the probability that $i_\ell$ belongs to community $k$.

Observing that $\gamma(\ell , k, n+1, n)$ should increase with $n$, we adopt the following parametric form
\begin{align}
    \gamma(\ell , k, n+1, n) = \frac{\rho_{\ell k}\, n}{1 + \rho_{\ell k}\, n}, \label{gamma_deaths}
\end{align}
where $\rho_{\ell k}\geq 0$ is a parameter specific to the $\ell$-th BD (which is a death) event and community $k$. We show in  Appendix~\ref{proof:gamma_deaths} that there is a unique solution of this form satisfying \eqref{extra}.

Let $\widehat{\rho}_{\ell k}$ being the positive numerical solution of the next equation in   ${\rho}_{\ell k}$

\[ \sum_{n=0}^{N(\tau_{\ell -1})-1}  \dfrac{ {\rho}_{\ell k} n}{1 + {\rho}_{\ell k} n}\; \widehat{\gamma}_{\text{mar}}(\ell -1, k, n+1) = \widehat{\delta}(i_\ell, k).\]
which is the constraint \eqref{extra} with $\gamma(\ell,k,n+1,n)$ replaced by its parametric expression.

\bigskip
In the VM-step of the algorithm, we update the parameters $\pi_{k_1k_2}$ and $\beta_k$ by maximizing the ELBO function $\mathcal{L}(\cdot)$ written in equation \eqref{expectation} for fixed $q$.

\begin{proposition}\label{prop:pi_beta}
Let $q \in \mathcal{Q}$ be fixed. Then the maximizers of $\mathcal{L}(q)$ 
with respect to $\pi_{k_1 k_2}$ and $\beta_k$ are given by
\[
\widehat{\pi}_{k_1 k_2} = \frac{\sum_{i \neq j} \delta(i,k_1)\, \delta(j,k_2)\; \sum_{\ell \in \Upsilon_{ij}} e_{ij}(t_\ell )}
     {\sum_{i \neq j} \delta(i,k_1)\, \delta(j,k_2)\, |\Upsilon_{ij}|},
\]
where $\Upsilon_{ij}$ is defined in  Proposition \ref{complete-log-lik}, and
\[
\widehat{\beta}_{k} 
= \frac{1}{|V_0|}\sum_{i\in V_{0}}  \delta(i,k).
\]
\end{proposition}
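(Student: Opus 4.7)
The plan is to isolate, in the ELBO expression \eqref{expectation}, the terms that explicitly depend on $\pi_{k_1 k_2}$ and $\beta_k$, and then maximize each of them separately since they decouple. All remaining terms (entropy of $q$, birth-death contributions, logs of community sizes, and contributions involving $\gamma$ and $\gamma_{\text{mar}}$) do not involve $(\pi,\beta)$ and can therefore be treated as constants in this step.

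First I would tackle $\pi_{k_1 k_2}$. The only term in $\mathcal{L}(q)$ depending on $\pi_{k_1 k_2}$ is
\[
S(\pi) \;=\; \sum_{i<j}\sum_{k_1,k_2}\delta(i,k_1)\delta(j,k_2)\sum_{\ell \in \Upsilon_{ij}}\log\phi\bigl(e_{ij}(t_\ell),k_1,k_2\bigr),
\]
with $\log\phi(e_{ij}(t_\ell),k_1,k_2) = e_{ij}(t_\ell)\log\pi_{k_1 k_2} + (1-e_{ij}(t_\ell))\log(1-\pi_{k_1 k_2})$. Using the symmetry of the undirected graph (edges satisfy $e_{ij}(t_\ell)=e_{ji}(t_\ell)$) and of $\pi_{k_1 k_2}$, the restriction $i<j$ can be replaced by a sum over $i\neq j$ at the cost of a factor $1/2$, which cancels when we differentiate and solve. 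Setting $\partial S/\partial \pi_{k_1 k_2}=0$ gives a linear equation in $\pi_{k_1 k_2}$ of the standard Bernoulli form, whose unique solution is
\[
\widehat{\pi}_{k_1 k_2}
\;=\;\frac{\sum_{i\neq j}\delta(i,k_1)\delta(j,k_2)\sum_{\ell\in\Upsilon_{ij}} e_{ij}(t_\ell)}{\sum_{i\neq j}\delta(i,k_1)\delta(j,k_2)\,|\Upsilon_{ij}|},
\]
which is precisely the formula claimed. Concavity of $x\mapsto a\log x+b\log(1-x)$ on $(0,1)$ for $a,b\geq 0$ ensures that this critical point is indeed the unique maximizer.

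Next I would handle $\beta_k$ under the simplex constraint $\sum_{k=1}^K \beta_k=1$. The only term of $\mathcal{L}(q)$ depending on $\beta$ is $\sum_{i\in V_0}\sum_k \delta(i,k)\log\beta_k$, and the problem reduces to maximizing this concave function over the simplex. Introducing a Lagrange multiplier $\nu$ for the constraint and setting the derivative to zero yields $\beta_k = \nu^{-1}\sum_{i\in V_0}\delta(i,k)$; summing over $k$ and using $\sum_k\delta(i,k)=1$ from constraint (a) in \eqref{constraints} gives $\nu=|V_0|$, hence
\[
\widehat{\beta}_k \;=\; \frac{1}{|V_0|}\sum_{i\in V_0}\delta(i,k).
\]

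I do not expect any serious obstacle here: both subproblems are strictly concave and separable, so the critical points are global maxima, and the only slightly delicate point is keeping track of the symmetrization $i<j$ versus $i\neq j$ in the $\pi$ update, together with the fact that the constraint $\sum_k\beta_k=1$ must be enforced explicitly since $\beta$ is a probability vector whereas the $\pi_{k_1k_2}$ are unconstrained within $(0,1)$.
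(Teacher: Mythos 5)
Your proposal is correct and follows essentially the same route as the paper's own proof in Appendix~\ref{proof:pi_beta}: isolate the $\pi$- and $\beta$-dependent terms of the ELBO, solve the first-order condition for each $\pi_{k_1k_2}$, and use a Lagrange multiplier for the simplex constraint on $\beta$. Your additional remarks on concavity and on the $i<j$ versus $i\neq j$ symmetrization are correct and, if anything, slightly more careful than the paper's own derivation.
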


\begin{proof}
For a fixed $q \in \mathcal{Q}$, the result is straightforward by solving the maximization of the  ELBO function $\mathcal{L}(q)$ with respect to $\pi_{k_1 k_2}$. Similarly, maximizing $\mathcal{L}(q)$ with respect to $\beta_k$, under the constraint $\sum_k \beta_k = 1$, gives the stated formula for $\widehat{\beta}_k$. Detailed calculations are given in Appendix \ref{proof:pi_beta}.
\end{proof}

VEM was first introduced in the SBM context by \cite{daudin2008mixture}, who derived the variational updates for the static binary SBM under a mean–field variational family. Our variational expressions can be viewed as natural temporal extensions of this classical mean–field framework for stochastic block models. See, for instance, \cite{matias2017statistical} for a dynamic generalisation of the static SBM using mean–field variational inference in the case of a fixed population size, and related developments in other directions.

 Our results recover the same functional form for the updates of $\left( \delta(i,k)\right)_{i,k}$ for $i \in V_0$ and of $(\pi_{k_1k_2},\beta_k)$, but embed them in a richer temporal model where the population size is itself random and evolves according to a birth–death mechanism. In particular, the additional variational parameters $(\gamma,\gamma_{\text{mar}})$ explicitly encode the evolution of community sizes and enforce consistency between individual memberships and the latent population process.

From this perspective, our BD–SBM can be seen as a dynamic generalisation of the static variational SBM. When there are no births or deaths (so that the population size remains constant), our updates reduce to the standard VEM updates of a dynamic SBM with a fixed vertex set.

\subsection{Model selection with the Integrated Completed Likelihood (ICL)}\label{sec:ICL}

A key aspect of fitting a Stochastic Block Model is the choice of $K$, the number of latent communities, which is generally unknown. Using too few classes may oversimplify the network structure, whereas using too many may lead to overfitting. To select an appropriate value of $K$, we rely on the Integrated Completed Likelihood (ICL) criterion \cite{biernacki2000assessing,daudin2008mixture}.

For any number of communities $K \geq 1$, we make explicit the dependence of the model parameters on $K$ by writing $\theta_K = \{\lambda,\mu, \beta_K,\pi_{K}\}$. For fixed $K$, the corresponding estimator is denoted by $\widehat{\theta}_K=(\widehat{\lambda}, \widehat{\mu},\widehat{\beta}_K, \widehat{\pi}_K)$, as described in Sections~\ref{sec:b-d_inf} and~\ref{sec:vem_inf}.

Given $K$, we denote by $\widehat{{Z}}$ and $\widehat{L}$ the maximum a posteriori (MAP) estimates of the community memberships and community sizes, respectively, under the VEM. The complexity of our model, as a function of $K$, is driven by:
\begin{enumerate}
\item the $K-1$ degrees of freedom associated with the proportion parameters
      $\beta_K$ for the $|V_0|$ individuals present in the system at time $t_0$;
\item the $\tfrac{K(K+1)}{2}$ block connection probabilities
      $\pi_{k_1k_2}$ governing the $\sum_{i<j} |\Upsilon_{ij}|$ Bernoulli trials;
\item the birth–death parameters $(\lambda,\mu)$, which do not depend on $K$ and are
      therefore not included in the penalisation term.
\end{enumerate}

The ICL criterion is then defined as follows, 
\begin{equation*}
    \mathrm{ICL}(K)
    = \log \Pro_{\widehat{\theta}_K} \left( e(\mathcal{T} ), \widehat{{Z}}, \tau, b(\tau) \right)
    - \tfrac{1}{2} (K-1) \log N_0
    - \tfrac{1}{2} \tfrac{K(K+1)}{2}
 \log \Bigg( \sum_{i<j} |\Upsilon_{ij}| \Bigg).
\end{equation*}

Maximizing $\mathrm{ICL}(K)$ over candidate values of $K$ provides a commonly use for selecting the number of communities. In practice, we fit the model for a range of values of $K$, compute the ICL for each, and select the value that maximizes the criterion. This approach balances model fit and complexity while explicitly accounting for the uncertainty in the community assignments.

\subsection{Initialization of the Variational EM Algorithm}\label{initialization}

The initialization of the latent variables is a crucial step for the variational expectation-maximization (VEM) algorithm, since poor starting values may lead to convergence towards suboptimal local maxima. We describe here the procedure used to initialize the community membership probabilities $\delta(i,k)$ for all $i \in V$ and $k=1,\ldots,K$. Throughout this subsection, $\delta^{(0)}(i,k)$ denotes the initial value of $\delta(i,k)$.

To obtain a first partition of the nodes into $K$ communities, we build a similarity score between each pair of nodes $(i,j)$, defined by 
\[
s_{ij} = \frac{2\sum_{\ell \in \Upsilon_{ij}} e_{ij}(t_\ell ) - |\Upsilon_{ij}|}{|\Upsilon_{ij}|},
\]
which measures the average tendency of nodes $i$ and $j$ to be connected during the time interval when they are simultaneously present in the network. Let $S = (s_{ij})$ denote the resulting similarity matrix. We apply a $K$-means clustering algorithm to $S$ and obtain an initial hard partition of the nodes into $K$ groups. The associated binary indicators are denoted by $\widehat{Z}^{(0)}_{ik}$, where $\widehat{Z}^{(0)}_{ik} = 1$ if node $i$ is assigned to cluster $k$, and $0$ otherwise.

We then construct the initial variational membership probabilities $\delta^{(0)}(i,k)$ so as to be consistent with this hard clustering while allowing for uncertainty. For a fixed weight $\omega \in (0,1)$, we set
\[
\delta^{(0)}(i,k)
= \omega\, \widehat{Z}^{(0)}_{ik} + \frac{1-\omega}{K}, 
\qquad i \in V,\; k=1,\ldots,K.
\]
Thus each node receives a larger probability $\omega + (1-\omega)/K$ for its assigned cluster and a smaller, uniform probability $(1-\omega)/K$ for the remaining communities.

We decided to initialize the model parameters $\beta$ and $\pi$ according to the following formulas.  

\begin{enumerate}
\item The initial community proportions are given by
\[
\beta^{(0)}_{k}
= \frac{1}{|V_0|}\sum_{i \in V_{0}} \delta^{(0)}(i,k),
\]
that is, the average assignment probability of individuals in $V_{0}$ to community $k$.
\item The initial block connection probabilities are set to
\[
\pi^{(0)}_{k_1 k_2}
= \frac{\sum_{i<j} \delta^{(0)} (i,k_1)\, \delta^{(0)} (j,k_2)\, \sum_{\ell  \in \Upsilon_{ij}} e_{ij}(t_\ell )}
       {\sum_{i<j} \delta^{(0)} (i,k_1)\, \delta^{(0)} (j,k_2)\, |\Upsilon_{ij}|}.
\]
\end{enumerate}

This initialization scheme provides informative starting values that reflect the empirical clustering structure of the network, while preserving enough flexibility for the VEM algorithm to refine the variational distribution in subsequent iterations. In particular, the initialization does not enforce the birth-death structure on community sizes, and the constraints induced by the birth-death dynamics are only imposed through the subsequent VEM updates.

\section{Experiments}

We evaluate the proposed model through two complementary types of experiments: two controlled simulated data sets and a real-world temporal  co-authorship network built from \texttt{arXiv}. An open-source Python implementation of the methods described in this paper is available at this \href{https://github.com/gbayolo26/dynamic-bdsbm-vem}{GitHub repository}.

With simulated data, we assess the performance of our model through clustering accuracy, the recovery of group memberships, and the accuracy of parameter estimation. The \texttt{arXiv} application demonstrates how the model behaves on a large, sparse network with heterogeneous communities and genuine temporal turnover, providing evidence of its practical relevance.

\subsection{Simulated data}\label{sec:simu}

We generate two synthetic data sets, one according to our birth-death Stochastic Block Model (BD-SBM) and the other one simulated from a pure birth process embedded within  a dynamic Stochastic Block Model (B-SBM).

For the BD-SBM, the birth-death process is simulated with birth rate $\lambda = 0.04$ and death rate $\mu = 0.02$ starting at $t_0=0$ and end up at $t_{T} = 150$. We consider $K = 4$ latent communities with initial sizes $10$, $11$, $7$, and $12$, yielding an initial population $N_0 = 40$. 

At $t_{T} = 150$, the total number of distinct individuals who have been involved in the process reaches $N = 1208$.

For the pure birth process within a dynamic SBM ("B-SBM"), the birth process is  generated a according to the same parameters as for the birth-death process described above except  that the death rate is set to $\mu = 0$ and the horizon is $t_{\text{end}} = 100$, while keeping $\lambda = 0.04$. 
In this setting, At $t_{T} = 100$, the total number of distinct individuals who have been involved in the process reaches $N = 2601$.

For both scenarios, we construct two dynamic SBMs by specifying different within- and between-community connection probabilities. 
The two regimes are encoded by the following symmetric block matrices:

\[
\pi_{\text{low}} =
\begin{pmatrix}
0.05 & 0.09  & 0.05  & 0.04 \\
0.09 & 0.10  & 0.055 & 0.06 \\
0.05 & 0.055 & 0.20  & 0.07 \\
0.04 & 0.06  & 0.07  & 0.06
\end{pmatrix}
\qquad
\pi_{\text{high}} =
\begin{pmatrix}
0.75 & 0.36 & 0.20 & 0.16 \\
0.36 & 0.91 & 0.22 & 0.24 \\
0.20 & 0.22 & 0.82 & 0.28 \\
0.16 & 0.24 & 0.28 & 0.66
\end{pmatrix}
\]
The first matrix, $\pi_{\text{low}}$, corresponds to a low-signal regime in which within- and between-community probabilities are relatively close (resulting in a sparser and less clearly separated network connection probabilities), whereas $\pi_{\text{high}}$ represents a high-signal regime with much stronger within-community connectivity.

Network snapshots are recorded at equally spaced times over the interval $[t_0=0, t_{T}]$, with one snapshot per unit of time, resulting in $150$ snapshots for the BD-SBM and $100$ snapshots for the pure birth model.

Note that in our implementation, a single VEM iteration requires 
\[
 \mathcal{O}\bigl(K K + M K N \bigr)= \mathcal{O}\bigl(M K N \bigr)\quad \text{operations, since usually} \; M N >> K,
\]
where $M$ is the number of birth-death events, $N=|V|= |\cup_{t \in [t_0,t_M]}\;V(t)|$ is the total population size over the observation window, and $K$ is the number of communities. 
This cost is dominated by the updates of the pairwise edge contributions and community responsibilities across all event times.

\subsubsection{Birth-death process within dSBM}\label{BDSBM_experiments}

From the simulated dataset, we first choose the optimal $K$ using  the ICL criterion described in section \ref{sec:ICL}.
In particular, we perform $50$ random initializations of the VEM algorithm and, for each of them, we retain the value of $K$ that maximizes the  ICL.
Figure~\ref{fig:ICL_BDSBM} reports, for each regimes, how often each value of $K$ is selected across the random runs. For both connectivity regimes, the best trade-off between goodness-of-fit and model complexity is obtained for $K = 4$, which coincides with the true number of communities used to generate the data. 
Second,  we apply the proposed VEM algorithm with $K = 4$. 

\begin{figure*}[h!]
  \centering
  \begin{subfigure}[b]{.45\textwidth}
    \centering
    \includegraphics[width=0.8\linewidth]{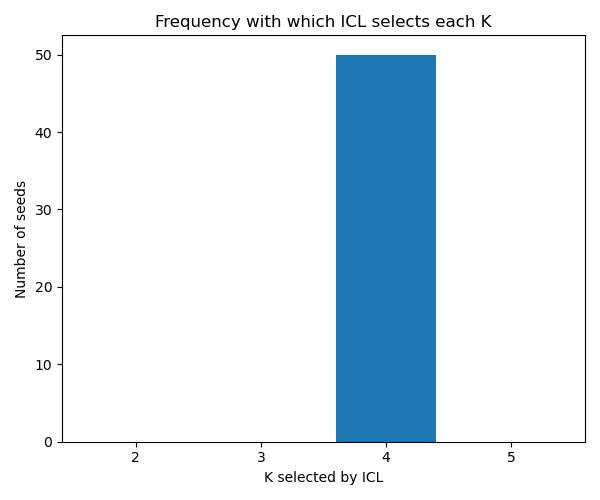}
    \caption{$\pi_{\text{low}}$.}
    \label{fig:ICL_BDSBM_pi_small}
  \end{subfigure}
  \hfill
  \begin{subfigure}[b]{.45\textwidth}
    \centering
    \includegraphics[width=0.8\linewidth]{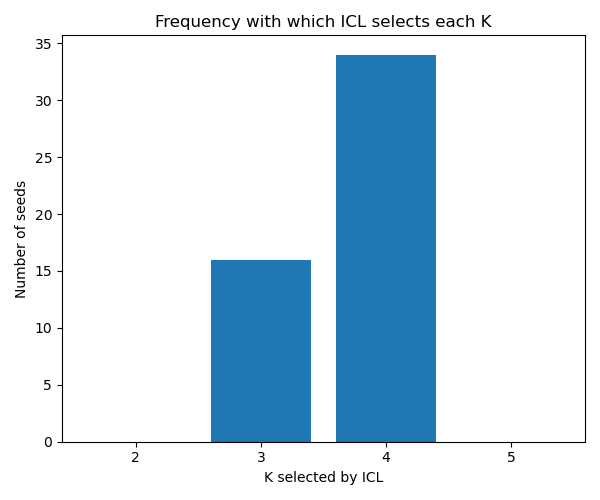} 
    \caption{$\pi_{\text{high}}$.}
    \label{fig:ICL_BDSBM_pi_big}
  \end{subfigure}
  \caption{Number of times the ICL criterion selects each value of $K$ as the optimal number of communities over all random initializations in the BD-SBM experiment with $\pi_{\text{low}}$ (left) and $\pi_{\text{high}}$ (right).}
  \label{fig:ICL_BDSBM}
\end{figure*}

As shown in Figure~\ref{fig:ELBO_BDSBM}, our algorithm converges quickly, in fewer than   $10$ iterations. As expected, the algorithm converges more rapidly under the high-signal regime $\pi_{right}$ than under the low-signal regime $\pi_{low}$.  Since our implementation relies on an approximate version of the VEM,
the classical monotonicity property of the ELBO cannot be theoretically guaranteed. Nevertheless, in practice, the ELBO trajectories are nearly monotonic, rapidly reach a plateau, and lead to highly accurate parameter estimates.

\begin{figure*}[h!]
  \centering
  \begin{subfigure}[b]{.45\textwidth}
    \centering
    \includegraphics[width=0.8\linewidth]{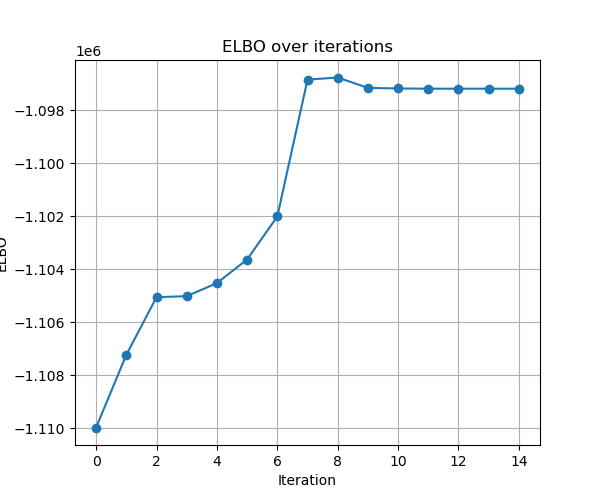}
    \caption{$\pi_{\text{low}}$.}
    \label{fig:ELBO_BDSBM_pi_small}
  \end{subfigure}
  \hfill
  \begin{subfigure}[b]{.45\textwidth}
    \centering
    \includegraphics[width=0.8\linewidth]{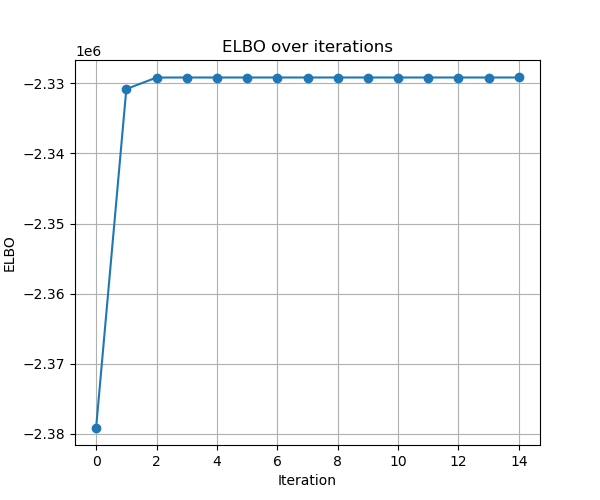} 
    \caption{$\pi_{\text{high}}$.}
    \label{fig:ELBO_BDSBM_pi_big}
  \end{subfigure}
  \caption{Evolution of the ELBO over VEM iterations for the two simulated BD-SBM networks with $K = 4$ communities: $\pi_{\text{low}}$ (left) and $\pi_{\text{high}}$ (right).}
  \label{fig:ELBO_BDSBM}
\end{figure*}

Figure~\ref{fig:memberships_BDSBM} shows the posterior community membership probabilities for all individuals in the two data sets. 
For both regimes, the posterior probabilities are typically very close to $0$ or $1$, especially in the high-signal regime, indicating that individuals are assigned to a single community with high confidence. 
Most exceptions involve the most recent individuals (those arriving later in the process), particularly in the low-signal case, where fewer interactions are available.
Overall, the posterior distributions are very similar across the two regimes, with sharper assignments under $\pi_{\text{high}}$.

\begin{figure*}[h!]
  \centering
  \begin{subfigure}[b]{.45\textwidth}
    \centering
    \includegraphics[width=0.8\linewidth]{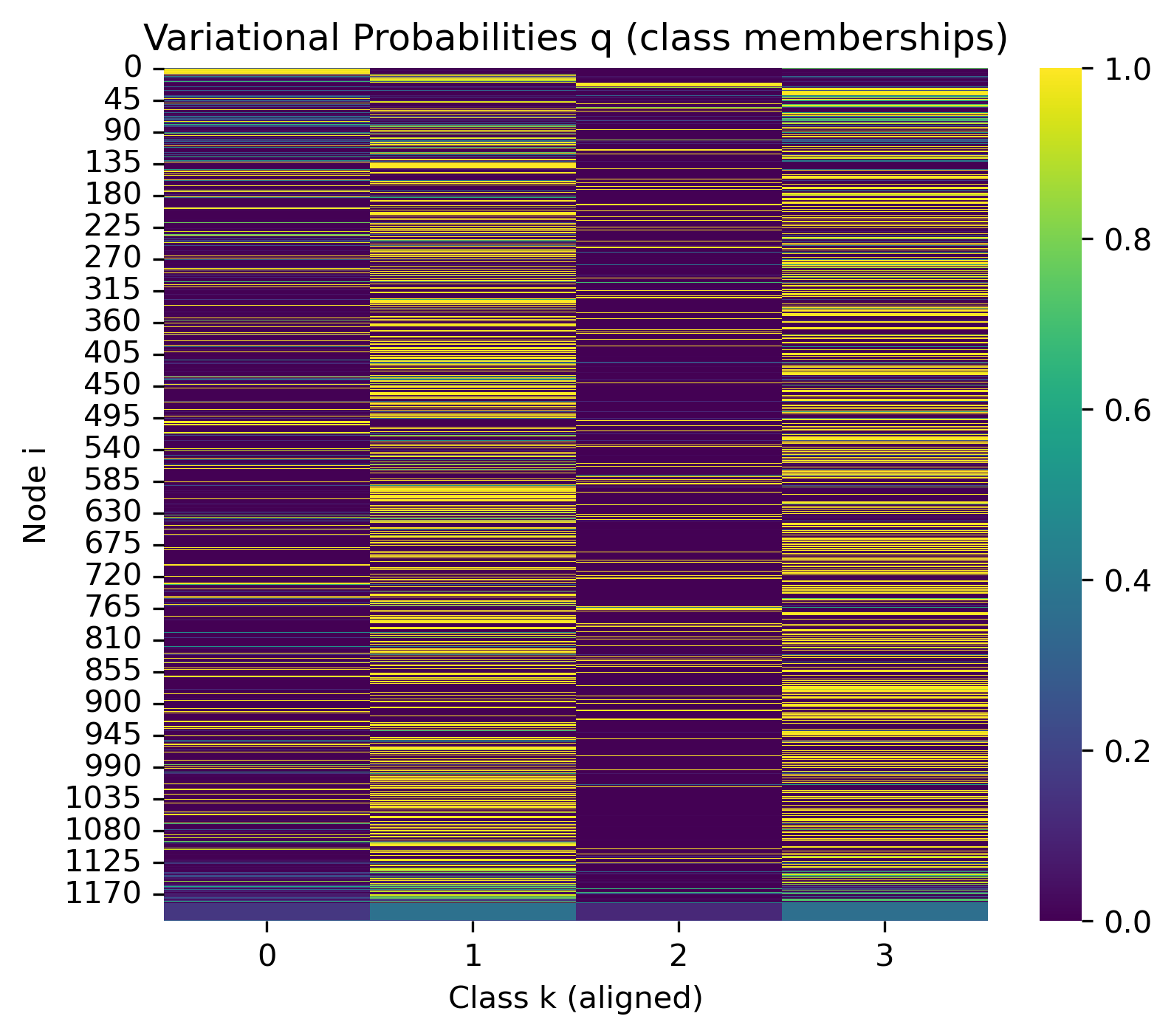}
    \caption{$\pi_{\text{low}}$.}
    \label{fig:memberships_BDSBM_pi_small}
  \end{subfigure}
  \hfill
  \begin{subfigure}[b]{.45\textwidth}
    \centering
    \includegraphics[width=0.8\linewidth]{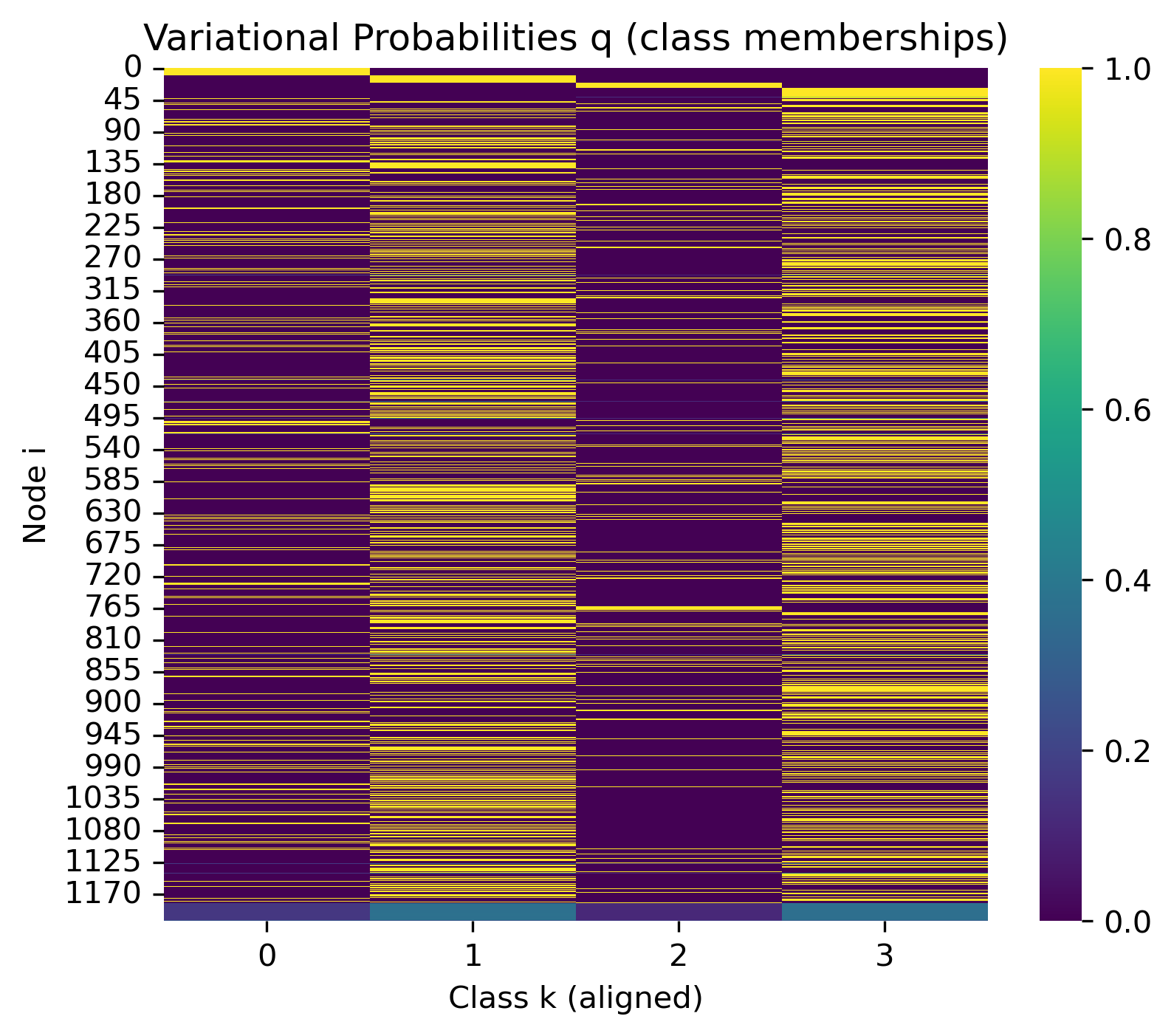}
    \caption{$\pi_{\text{high}}$.}
    \label{fig:memberships_BDSBM_pi_big}
  \end{subfigure}
  \caption{Posterior community membership probabilities for the two simulated BD-SBM networks with $K = 4$ communities. Left: data generated with $\pi_{\text{low}}$; right: data generated with $\pi_{\text{high}}$.}
  \label{fig:memberships_BDSBM}
\end{figure*}

Figure~\ref{fig:accuracy_BDSBM} compares the true and inferred community memberships. 
The confusion matrices in Figures~\ref{fig:confusion_matrix_BDSBM_pi_small} and~\ref{fig:confusion_matrix_BDSBM_pi_big} show that, in both scenarii, the vast majority of individuals are assigned to their correct communities, with only limited off-diagonal elements.

\begin{figure*}[h!]
  \centering
  \begin{subfigure}[b]{.4\textwidth}
    \centering
    \includegraphics[width=\linewidth]{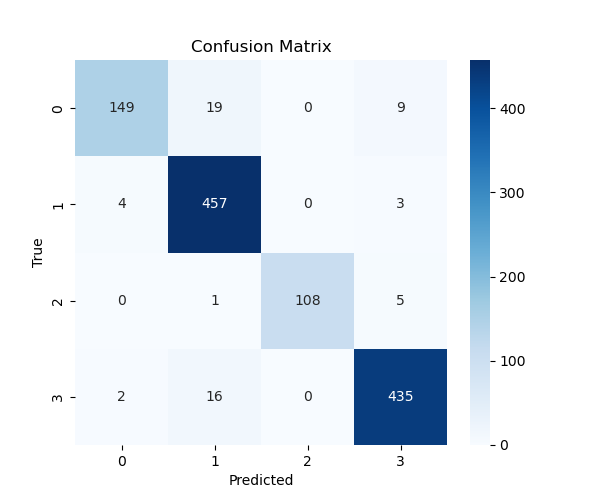} 
    \captionsetup{width=0.8\textwidth}
    \caption{Confusion matrix for the data generated with $\pi_{\text{low}}$.}
    \label{fig:confusion_matrix_BDSBM_pi_small}
  \end{subfigure}
  \hfill
  \begin{subfigure}[b]{.4\textwidth}
    \centering
    \includegraphics[width=\linewidth]{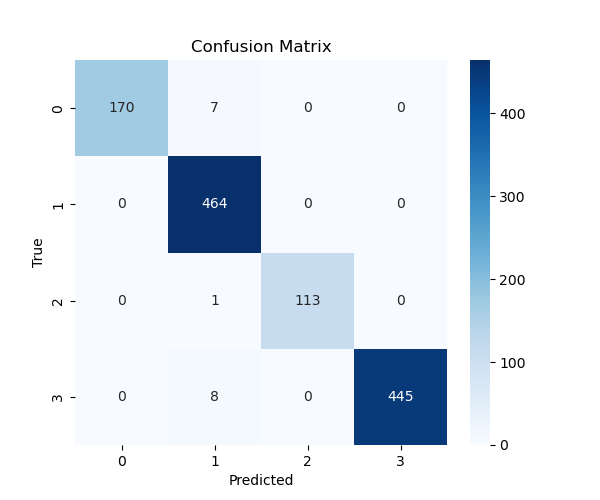} 
   \captionsetup{width=0.8\textwidth}
    \caption{Confusion matrix for the data generated with $\pi_{\text{high}}$.}
    \label{fig:confusion_matrix_BDSBM_pi_big}
  \end{subfigure}
  \\
  \begin{subfigure}[b]{.4\textwidth}
    \centering
    \includegraphics[width=\linewidth]{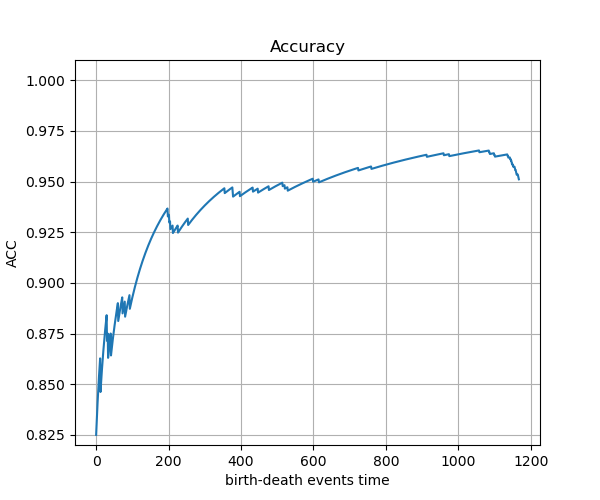}
    \captionsetup{width=0.8\textwidth}
    \caption{Evolution of clustering accuracy over time for $\pi_{\text{low}}$.}
    \label{fig:accuracy_on_time_BDSBM_pi_small}
  \end{subfigure}
  \hfill
  \begin{subfigure}[b]{.4\textwidth}
    \centering
    \includegraphics[width=\linewidth]{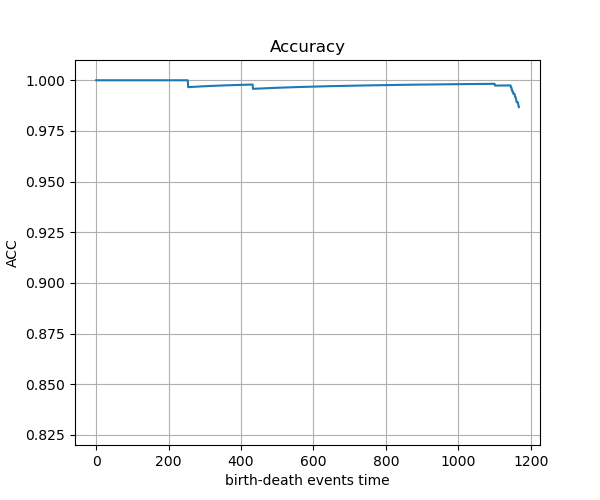}
    \captionsetup{width=0.8\textwidth}
    \caption{Evolution of clustering accuracy over time for $\pi_{\text{high}}$.}
    \label{fig:accuracy_on_time_BDSBM_pi_big}
  \end{subfigure}
  \caption{Comparison between true and inferred communities for the two simulated BD-SBM networks with $K = 4$ communities. Top: confusion matrices. Bottom: evolution of clustering accuracy over time as the population evolves.}
  \label{fig:accuracy_BDSBM}
\end{figure*}

Figures~\ref{fig:accuracy_on_time_BDSBM_pi_small} and~\ref{fig:accuracy_on_time_BDSBM_pi_big} display the evolution of clustering accuracy as a function of time (i.e., as the population size evolves). 
For the low-signal case $\pi_{\text{low}}$, accuracy is relatively unstable at early times, then increases and remains high during the middle part of the process before decreasing slightly at the very end of the observation window. 
This behaviour is expected: information about individuals present at $t_0$ is limited, whereas for individuals arriving later, both their entry time and the population size at entry are known, and their subsequent interactions accumulate over time.
Examining accuracy as a function of the proportion of initial individuals $N_0$ relative to the total population, we find that accuracy increases as this ratio decreases. In the high-signal case $\pi_{\text{high}}$, accuracy rapidly reaches a high level and remains largely stable, with only a slight decline for individuals arriving at the very end of the observation window, those who have very few recorded interactions. As expected, overall performance is superior in the high-signal scenario, where intra-community connection probabilities clearly outweigh inter-community ones.

\medskip
The birth and death rates $\lambda$ and $\mu$ do not depend on the latent communities, and their estimates are very close to the true values. 
Table~\ref{tab:params_bd_true_est} reports both the true and the estimated values of the birth and death rates.

\begin{table}[ht]
    \centering
    \captionsetup{width=\textwidth}
    \begin{tabular}{ccc}
        \toprule
        Parameter & True value & Estimated value \\
        \midrule
        $\lambda$ & $0.04$ & $0.0400$ \\
        $\mu$     & $0.02$ & $0.0207$ \\
        \bottomrule
    \end{tabular}
    \caption{True vs. estimated birth-death parameter values.}
    \label{tab:params_bd_true_est}
\end{table}

The estimation of the remaining model parameters is summarized in Table~\ref{tab:params_true_est} and Figure~\ref{fig:estimated_pi_BDSBM}. For both regimes $\pi_{\text{low}}$ and $\pi_{\text{high}}$,  the estimates of the community probabilities at $t_0$, i.e. $\beta$, are reported in Table~\ref{tab:params_true_est}. 
In the low-signal case, this estimation is slightly less accurate than in the high-signal case, which is consistent with the lower clustering performance at early times. 

\begin{figure*}[h!]
  \centering
  \begin{subfigure}[b]{.9\textwidth}
    \centering
    \includegraphics[width=\textwidth]{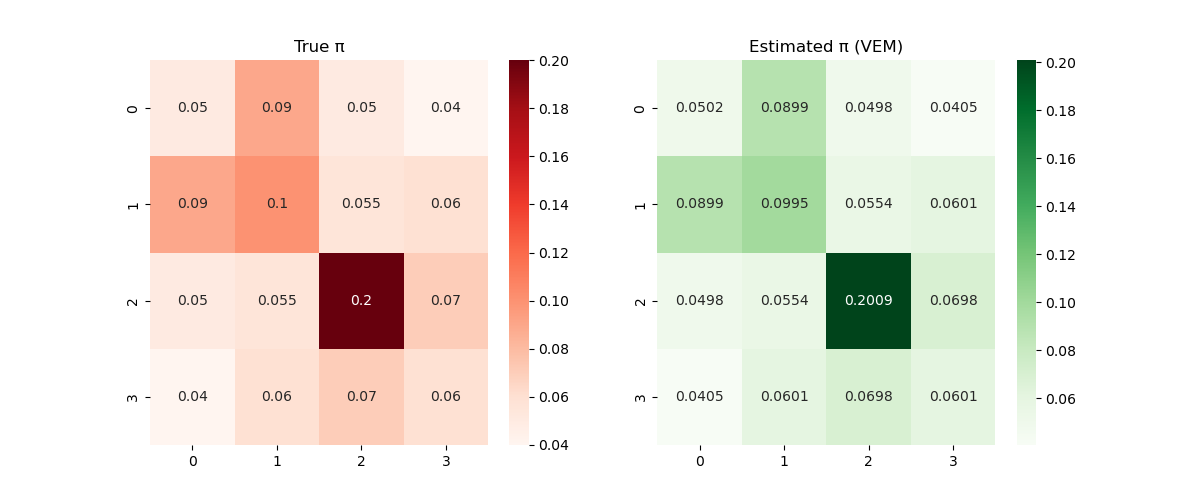} 
    \caption{Data generated with $\pi_{\text{low}}$.}
    \label{fig_7a}
  \end{subfigure}
  \\
  \begin{subfigure}[b]{.9\textwidth}
    \centering
    \includegraphics[width=\textwidth]{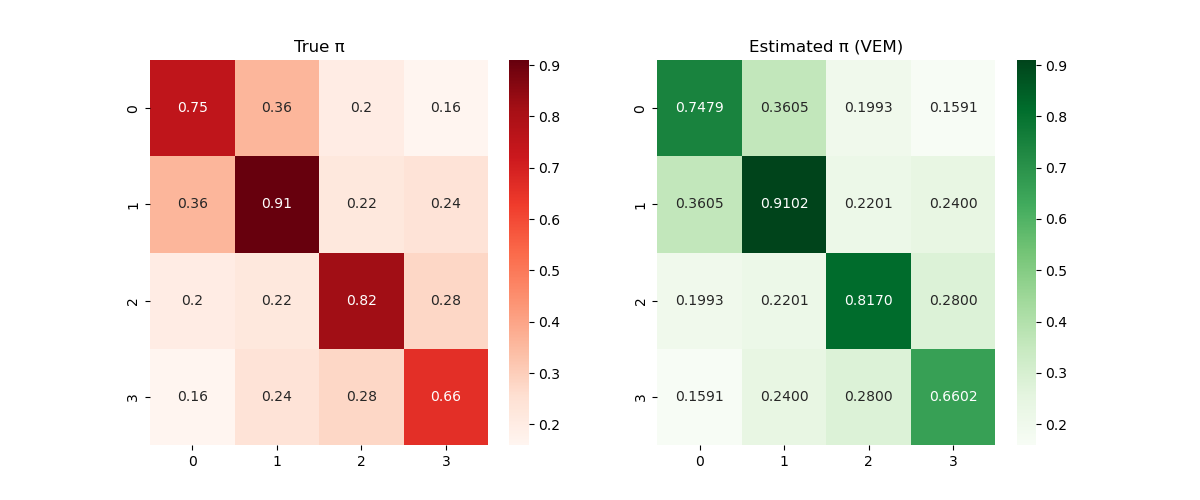} 
    \caption{Data generated with $\pi_{\text{high}}$.}
    \label{fig_7b}
  \end{subfigure}
  \caption{Estimated block probability matrices $\widehat{\pi}$ (right of each panel) compared to the true matrices $\pi$ (left) for the two BD-SBM scenarios.}
  \label{fig:estimated_pi_BDSBM}
\end{figure*}

\begin{table}[ht]
    \centering
    \captionsetup{width=\textwidth}
    \begin{tabular}{cccc}
        \toprule
        Parameter & True value & Estimated ($\pi_{\text{low}}$) & Estimated ($\pi_{\text{high}}$) \\
        \midrule
        $\beta_0$ & 0.25  & 0.2751 & 0.2525 \\
        $\beta_1$ & 0.275 & 0.2463 & 0.2725 \\
        $\beta_2$ & 0.175 & 0.1622 & 0.1750 \\
        $\beta_3$ & 0.30  & 0.3165 & 0.3000 \\
        \bottomrule
    \end{tabular}
    \caption{True vs. estimated community proportions $\beta$ for the BD-SBM.}
    \label{tab:params_true_est}
\end{table}

The estimated block probability matrices $\widehat{\pi}$ in Figure~\ref{fig:estimated_pi_BDSBM} closely reproduce the true matrices $\pi_{\text{low}}$ and $\pi_{\text{high}}$ used in the simulations. 
While minor deviations appear in blocks with low connection probabilities, the relative ordering of intra- and inter-community connection strengths is preserved, indicating that the model effectively captures both structural heterogeneity and the impact of temporal population turnover.

\medskip
Allowing for departures in our model makes the inference problem both statistically and algorithmically more challenging. In particular, when individuals are allowed to leave the network, the process governing community sizes is no longer Markovian at departure times, rendering the exact VEM updates intractable.
This motivates the approximation introduced in Section~\ref{sec:vem_inf}, which restores tractability at the cost of a slight weakening  of the theoretical guarantees on ELBO monotonicity.

\subsubsection{Pure birth  process within dSBM}\label{BSBM_experiments}

We now turn to the pure birth process within a dynamic SBM, where the population size may only increase over time ($\mu = 0$), while the within- and between-community connection probabilities are the same block matrices $\pi_{\text{low}}$ and $\pi_{\text{high}}$ defined in \ref{sec:simu}. 
This scenario serves as a baseline in which temporal dynamics arise exclusively from arrivals, allowing us to isolate the impact of departures on community recovery.

As for the BD-SBM case, we first choose the optimal $K$ using  the ICL criterion described in section \ref{sec:ICL}.
In particular, we perform $50$ random initializations of the VEM algorithm and, for each of them, we retain the value of $K$ that maximizes the  ICL. Figure~\ref{fig:ICL_BSBM} reports, separately for $\pi_{\text{low}}$ and $\pi_{\text{high}}$, how often each value of $K$ is selected across the random runs. In both regimes, the ICL criterion most frequently selects $K = 4$, in agreement with the true number of communities used to generate the data.

\begin{figure*}[h!]
  \centering
  \begin{subfigure}[b]{.4\textwidth}
    \centering
    \includegraphics[width=0.8\linewidth]{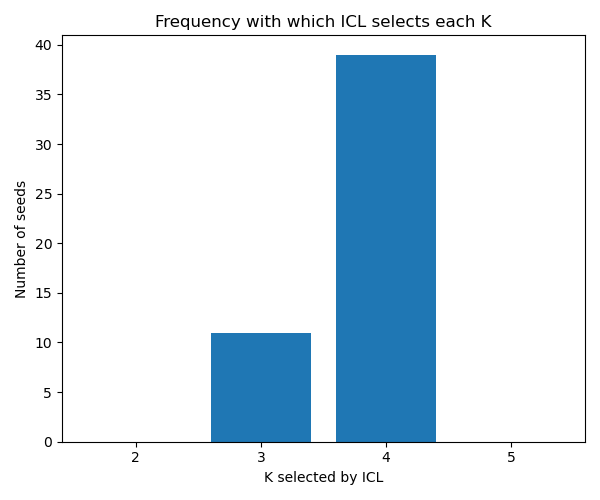}
    \caption{$\pi_{\text{low}}$.}
    \label{fig:ICL_BSBM_pi_small}
  \end{subfigure}
  \hfill
  \begin{subfigure}[b]{.4\textwidth}
    \centering
    \includegraphics[width=0.8\linewidth]{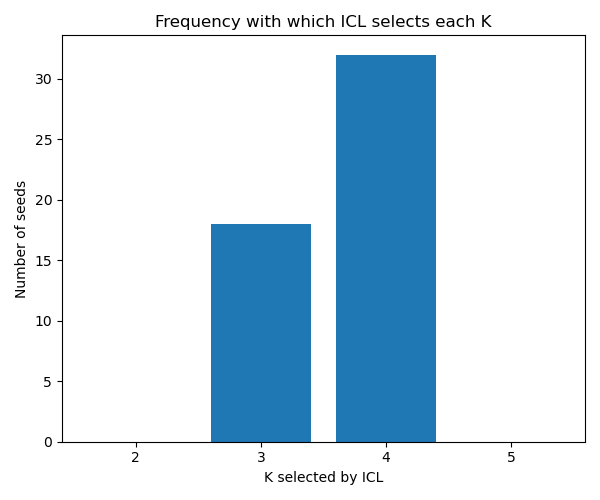} 
    \caption{$\pi_{\text{high}}$.}
    \label{fig:ICL_BSBM_pi_big}
  \end{subfigure}
  \caption{Number of times the ICL criterion selects each value of $K$ as the optimal number of communities over all random seeds in the BSBM experiment with $\pi_{\text{low}}$ (left) and $\pi_{\text{high}}$ (right).}
  \label{fig:ICL_BSBM}
\end{figure*}

We therefore fix $K = 4$ and apply the proposed VEM algorithm using the same initialization procedure as in the BD-SBM experiment (described in Subsection~\ref{initialization}). In this pure birth setting, the algorithm converges in a small number of iterations. Figure~\ref{fig:ELBO_BSBM} displays the evolution of the ELBO over the VEM iterations for both connectivity regimes. As in the BD-SBM case, the ELBO trajectories quickly stabilize, indicating that the algorithm reaches a local optimum after only a few updates.

\begin{figure*}[h!]
  \centering
  \begin{subfigure}[b]{.45\textwidth}
    \centering
    \includegraphics[width=0.8\linewidth]{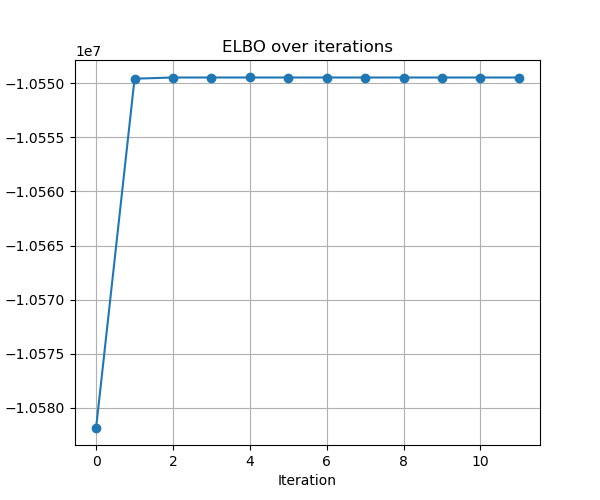}
    \caption{$\pi_{\text{low}}$.}
    \label{fig:ELBO_BSBM_pi_small}
  \end{subfigure}
  \hfill
  \begin{subfigure}[b]{.45\textwidth}
    \centering
    \includegraphics[width=0.8\linewidth]{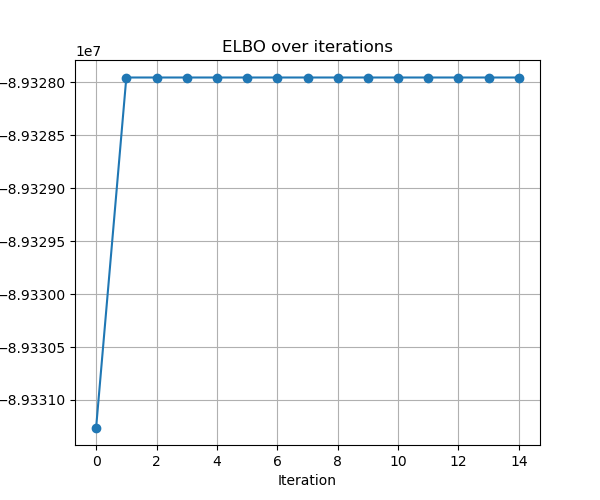} 
    \caption{$\pi_{\text{high}}$.}
    \label{fig:ELBO_BSBM_pi_big}
  \end{subfigure}
  \caption{Evolution of the ELBO over VEM iterations for the two simulated B-SBM networks with $K = 4$ communities: $\pi_{\text{low}}$ (left) and $\pi_{\text{high}}$ (right).}
  \label{fig:ELBO_BSBM}
\end{figure*}

Figure~\ref{fig:memberships_BSBM} displays the posterior community membership probabilities for all individuals in the two B-SBM data sets. We observe the same phenomena as for the BD-SBM: 
the posterior probabilities are typically very close to $0$ or $1$, indicating highly confident assignments. 
The absence of departures makes the trajectory of each individual longer on average, which in turn yields more interaction data and thus facilitates community identification.

\begin{figure*}[h!]
  \centering
  \begin{subfigure}[b]{.4\textwidth}
    \centering
    \includegraphics[width=\linewidth]{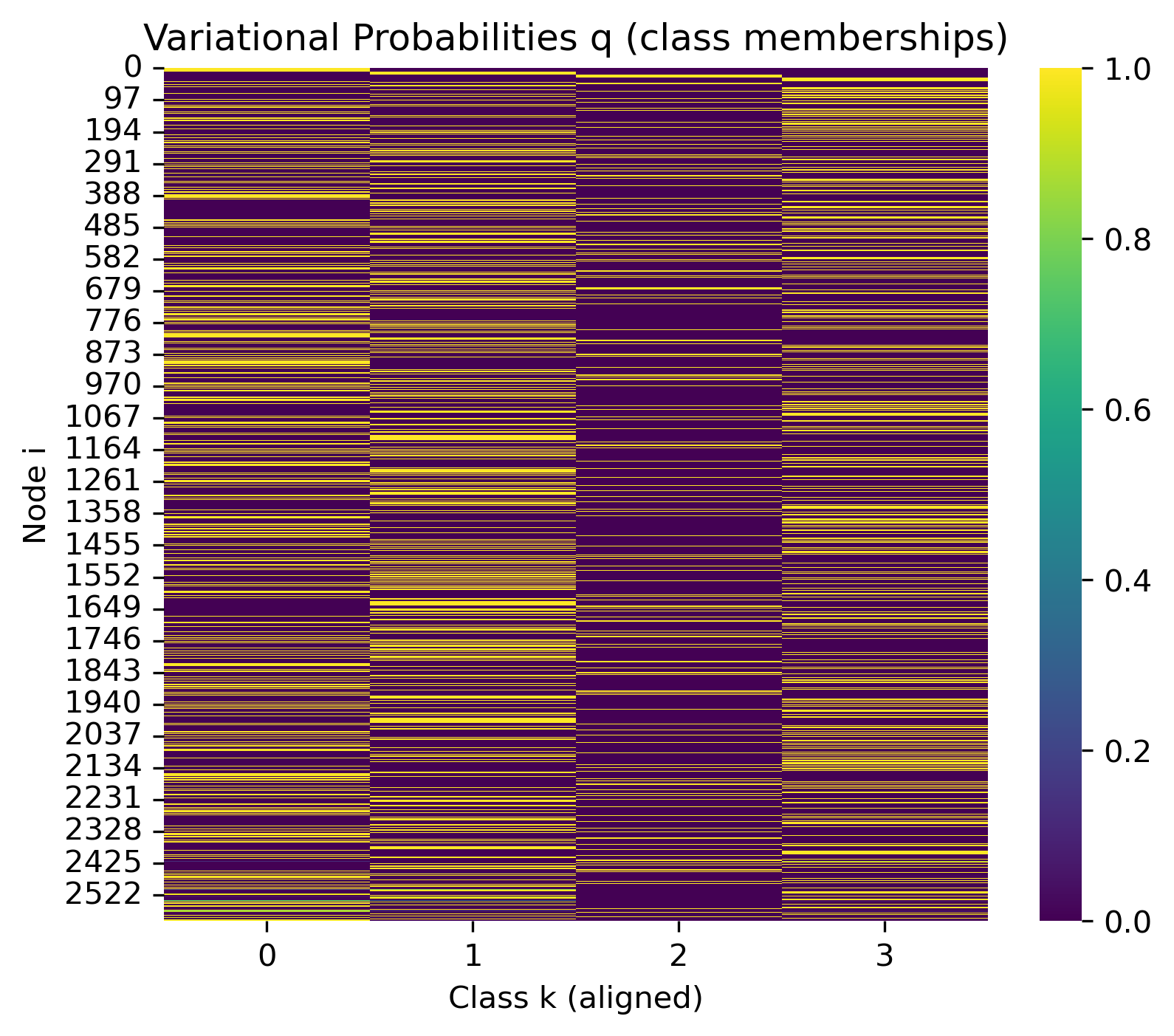}
    \caption{$\pi_{\text{low}}$.}
    \label{fig:memberships_BSBM_pi_small}
  \end{subfigure}
  \hfill
  \begin{subfigure}[b]{.4\textwidth}
    \centering
    \includegraphics[width=\linewidth]{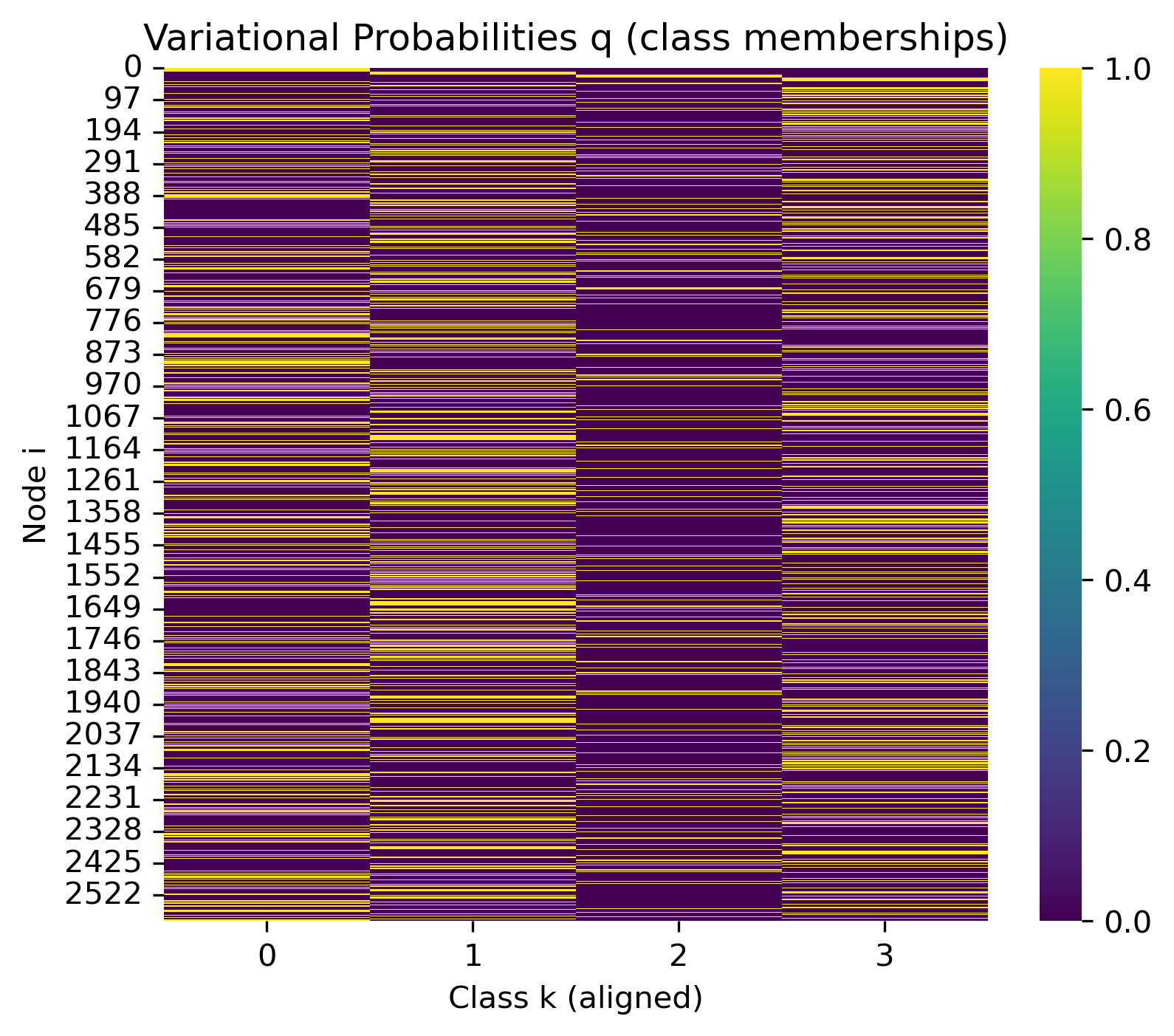}
    \caption{$\pi_{\text{high}}$.}
    \label{fig:memberships_BSBM_pi_big}
  \end{subfigure}
  \caption{Posterior community membership probabilities for the two simulated BSBM networks with $K = 4$ communities. Left: data generated with $\pi_{\text{low}}$; right: data generated with $\pi_{\text{high}}$.}
  \label{fig:memberships_BSBM}
\end{figure*}

We next compare the inferred community labels to the true ones. 
Figure~\ref{fig:accuracy_BSBM} reports the confusion matrices for $\pi_{\text{low}}$ and $\pi_{\text{high}}$. 
In both signal regimes, the model achieves perfect clustering accuracy: off-diagonal entries in the confusion matrices are essentially zero, and the accuracy remains equal to one throughout the observation window. 
This sharp contrast with the BD-SBM results highlights the additional difficulty introduced by departures, which both require an approximation in the VEM algorithm and shorten some trajectories, thereby reducing the amount of information available for those individuals.

\begin{figure*}[h!]
  \centering
  \begin{subfigure}[b]{.4\textwidth}
    \centering
    \includegraphics[width=\linewidth]{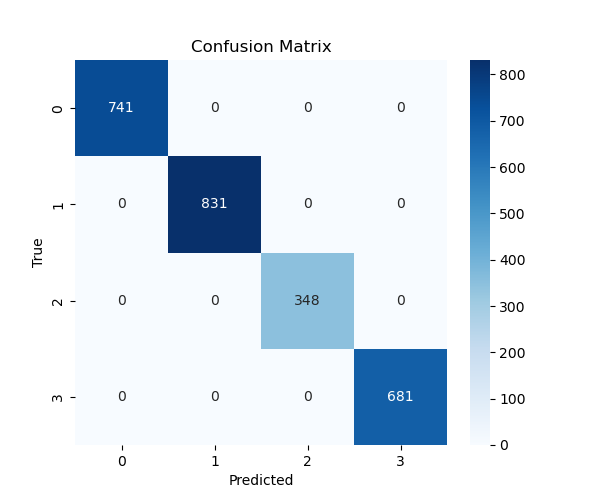} 
     \captionsetup{width=0.8\textwidth}
    \caption{Confusion matrix for the data generated with $\pi_{\text{low}}$.}
    \label{fig:confusion_matrix_BSBM_pi_small}
  \end{subfigure}
  \hfill
  \begin{subfigure}[b]{.4\textwidth}
    \centering
    \includegraphics[width=\linewidth]{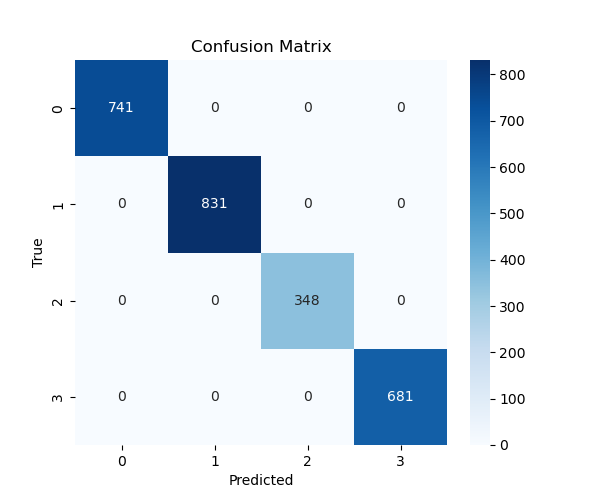} 
     \captionsetup{width=0.8\textwidth}
    \caption{Confusion matrix for the data generated with $\pi_{\text{high}}$.}
    \label{fig:confusion_matrix_BSBM_pi_big}
  \end{subfigure}
  \caption{Comparison between true and inferred communities for the two simulated BSBM networks with $K = 4$ communities. Top: confusion matrices. Bottom: evolution of clustering accuracy over time.}
  \label{fig:accuracy_BSBM}
\end{figure*}

\medskip
Finally, we examine parameter estimation in the pure birth setting. 
Table~\ref{tab:params_b_true_est} reports the true and estimated values of the birth rate \(\lambda\).
The estimated value is extremely close to the true one, confirming that the model accurately recovers the underlying birth dynamics when no departures are present. 

\medskip
The estimated block probability matrices \(\widehat{\pi}\) in Figure~\ref{fig:estimated_pi_BSBM} also closely match the true matrices \(\pi_{\text{low}}\) and \(\pi_{\text{high}}\), preserving the ordering of within- and between-community connection strengths in both regimes.

\begin{figure*}[h!]
  \centering
  \begin{subfigure}[b]{.9\textwidth}
    \centering
    \includegraphics[width=\textwidth]{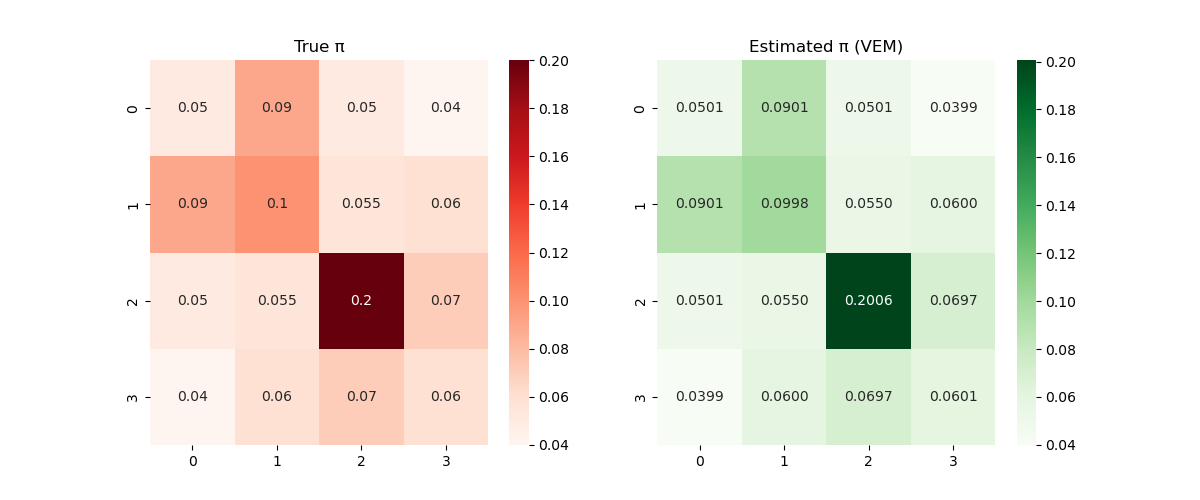} 
     \captionsetup{width=0.8\textwidth}
    \caption{Data generated with $\pi_{\text{low}}$.}
    \label{fig:estimated_pi_BSBM_pi_small}
  \end{subfigure}
  \\
  \begin{subfigure}[b]{.9\textwidth}
    \centering
    \includegraphics[width=\textwidth]{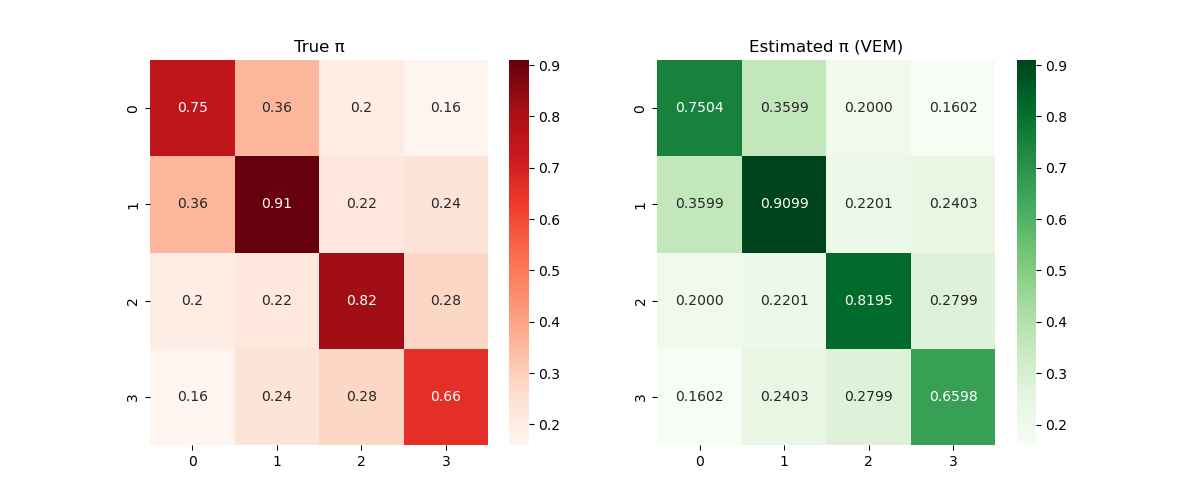} 
     \captionsetup{width=0.8\textwidth}
    \caption{Data generated with $\pi_{\text{high}}$.}
    \label{fig:estimated_pi_BSBM_pi_big}
  \end{subfigure}
  \caption{Estimated block probability matrices $\widehat{\pi}$ (right of each panel) compared to the true matrices $\pi$ (left) for the two BSBM scenarios.}
  \label{fig:estimated_pi_BSBM}
\end{figure*}

\begin{table}[ht]
    \centering
    \captionsetup{width=\textwidth}
    \begin{tabular}{cccc}
        \toprule
        Parameter & True value & Estimated value  \\
        \midrule
        $\lambda$ & 0.04 & 0.0405\\
        \bottomrule
    \end{tabular}
     \captionsetup{width=0.8\textwidth}
    \caption{True vs. estimated birth-death parameter values.}
    \label{tab:params_b_true_est}
\end{table}

\newpage

\subsection{Data Extraction from  \texttt{arXiv}}
The dataset under study is constructed from the \texttt{arXiv} electronic repository. 
We focus on all articles  categorized under  \texttt{Mathematics} and published between January~2019 and September~2025. Metadata are retrieved through the official \texttt{arXiv API}, which provides structured access to publication records, including titles, author lists, subject categories, and submission dates. 

Each article is treated as an undirected temporal interaction among its co-authors (distinct nodes) and makes the pair of co-authors connected by an edge dated at the time of the paper’s first submission (the \texttt{published} field). In addition, to capture revision activity, we treat different versions of the same paper as distinct temporal interaction events between the same set of co-authors: version~1 is timestamped at \texttt{published}, while subsequent versions (if any) are timestamped at \texttt{updated}. The resulting temporal co-authorship network spans more than five years of activity and covers all mathematical subfields in \texttt{arXiv} (e.g., \texttt{math.ST}, \texttt{math.PR}, \texttt{math.OC}, etc., see table \ref{tab:arxiv_math_categories}). To summarize, nodes represent individual authors while temporal edges represent collaboration events occurring at specific dates.

The raw scale of the \texttt{Mathematics} corpus over this period exceeds \num{200000} works and \num{50000} unique authors. Because daily-resolution snapshots are extremely sparse and full-graph inference would be computationally prohibitive, we first restrict attention to a denser working network of size $N=5000$ authors, chosen to maximize edge density under this size constraint.

From this reduced dataset, we define the ancestor set $V_0$ as the authors who published at least once during the first six months (182 days) of 2019. The observation window for our analysis begins in July~2019. Starting from $V_0$, we iteratively expand the node set by adding all authors who co-author with any member of the current set at least once within the observation window, and we repeat this expansion until no new authors are added. The transitive co-authorship neighborhood grown from $V_0$, together with all co-authorship links among its members, forms the induced subgraph we analyze. Applying this construction yields $|V_0|=2365$ ancestors and a total of $N=4622$ authors (ancestors plus their retained collaborators) present in the sequence of interaction networks during the observation window.  The observed total number of birth–death events is $M = 3725$.

To mitigate sparsity while retaining temporal structure, we discretize time into two-month bins, resulting in $38$ time steps over the study period. Within each bin, two authors are linked if they co-author at least one paper (or version-event) during that bin. For authors who do not belong to $V_0$ (i.e., non-ancestors), we define their birth time as the date of their first observed publication, and their death time as $\min(\text{last-publication-date} + \text{2 months},\, t_T)$, where $t_T$ denotes the end of the observation window. In the rare case of identical birth or death times for multiple authors, we break ties by discarding the author(s) with fewer publications, ensuring unique event times. 

\begin{figure*}[h!]
  \centering
  \begin{subfigure}[b]{.45\textwidth}
    \centering
    \includegraphics[width=0.8\linewidth]{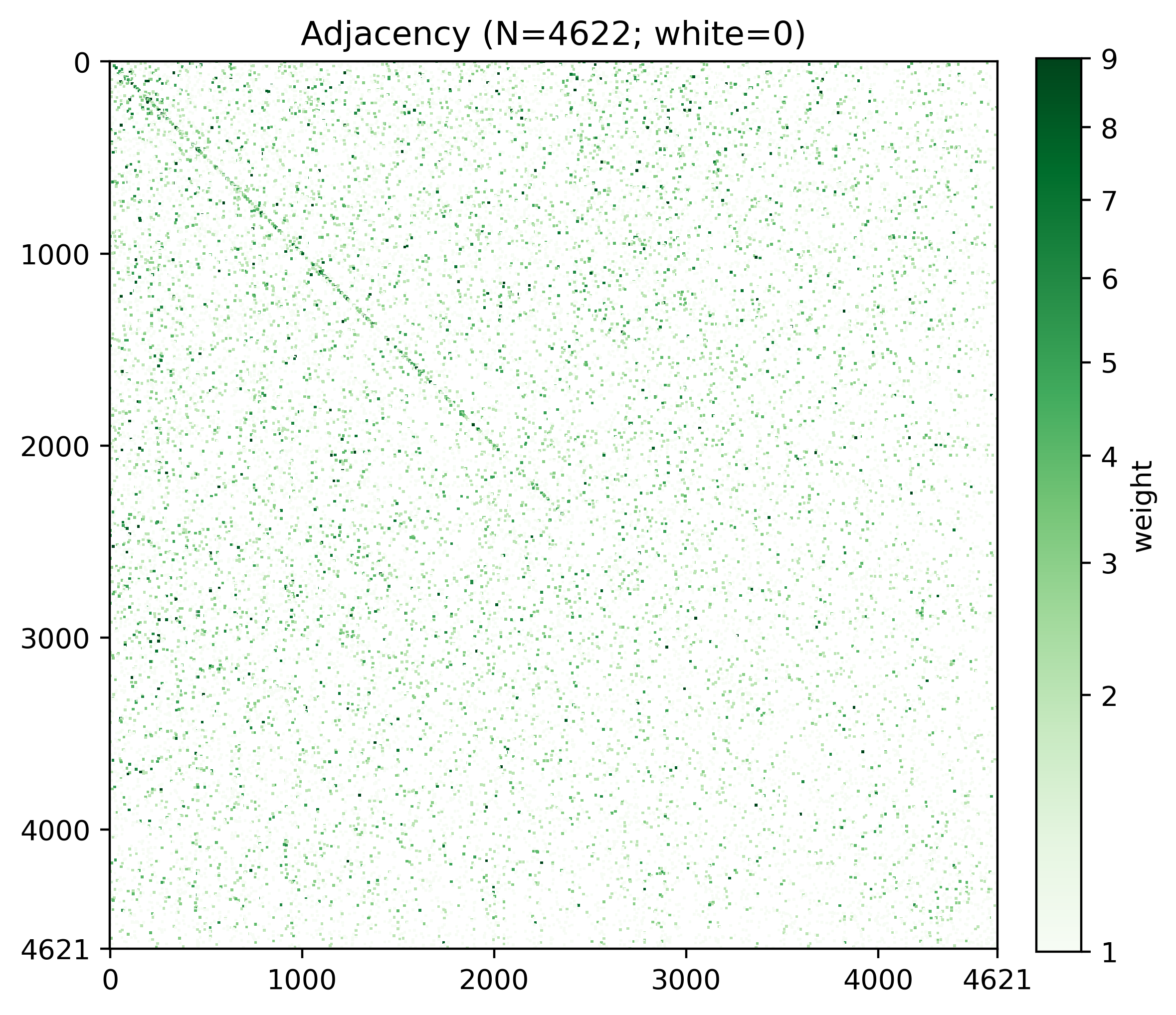} 
     \captionsetup{width=0.8\textwidth}
    \caption{Adjacency matrix of the working arXiv co-authorship network.}
    \label{fig:adjacency_matrix_arXiv_a}
  \end{subfigure}
  \hfill
  \begin{subfigure}[b]{.45\textwidth}
    \centering
    \includegraphics[width=0.8\linewidth]{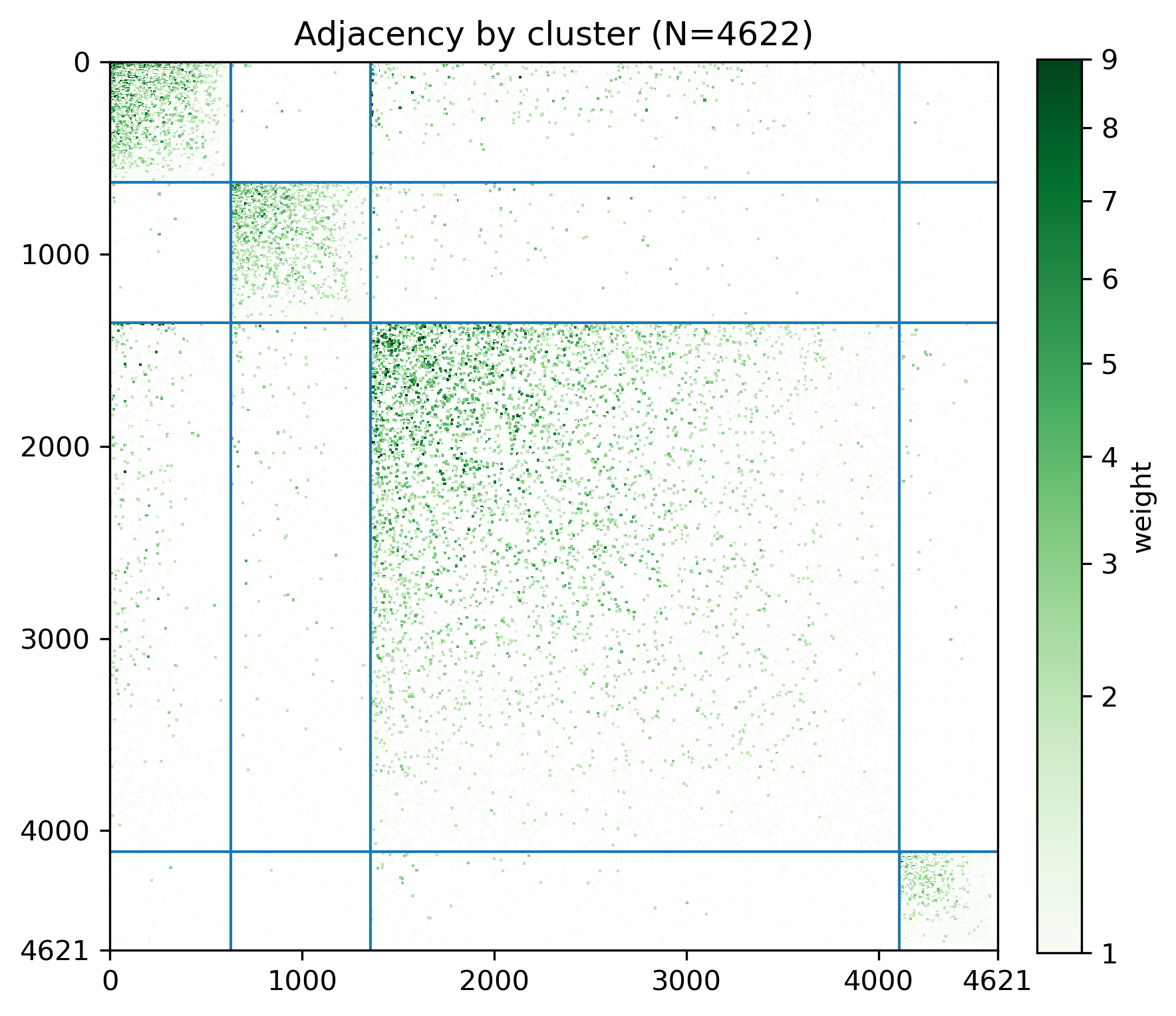}
     \captionsetup{width=0.8\textwidth}
    \caption{Adjacency matrix after reordering authors by inferred community.}
    \label{fig:adjacency_matrix_arXiv_b}
  \end{subfigure}
  \caption{Adjacency structure of the arXiv Mathematics co-authorship network.}
  \label{fig:adjacency_matrix_arXiv}
\end{figure*}

\medskip
We fit our birth–death SBM model (BD-SBM) to this temporally aggregated co-authorship network.

\textit{Model selection.} First, we perform model selection over the number of communities by running our VEM algorithm for $K \in \{2,3,4\}$ and computing the Integrated Completed Likelihood (ICL) for each candidate. Given the large product $N \times M$, we deliberately restrict attention to small values of $K$ to keep computation tractable while still allowing for non-trivial community structure. The best trade-off between fit and complexity is obtained for $K = 4$, which attains the highest ICL and is therefore retained in the subsequent analysis.

\textit{Initialization of the VEM algorithm.} As mentioned in section \ref{initialization} the initialization of a VEM algorithm plays a crucial role. Given  the observation of a very sparse dynamic networks, 
we therefore use a dynamic extension of the sparse SBM proposed in \cite{frisch2021sparsebm} to construct  the initialization for the latent memberships. 
In practice, we first fit this dynamic sparse SBM to obtain coarse cluster assignments and then use the resulting soft memberships as initial values for our BD-SBM VEM algorithm.

\textit{Post-inference results.} Figure~\ref{fig:adjacency_matrix_arXiv} summarizes the global connectivity structure of the \texttt{arXiv} co-authorship network, showing the raw adjacency matrix in Figure~\ref{fig:adjacency_matrix_arXiv_a} while displaying in  Figure~\ref{fig:adjacency_matrix_arXiv_b} the same matrix after permuting authors according to their inferred community labels. This highlights an overall low level of interaction among authors, with stronger interaction intensity observed within communities than between them.

The inference results are summarized in Figures~\ref{fig:inference_arxiv_data} and~\ref{fig:meen_rate_cononection}. 
Figure~\ref{fig:inference_arxiv_data_a} shows the posterior community membership probabilities for all authors, while Figure~\ref{fig:inference_arxiv_data_b} reports the estimated block probability matrix $\widehat{\pi}$. 
The ELBO trajectory in Figure~\ref{fig:inference_arxiv_data_c} indicates that the algorithm converges and stabilizes after approximately 25 iterations.

\begin{figure*}[h!]
  \centering
  \begin{subfigure}[b]{.3\textwidth}
    \centering
    \includegraphics[width=\linewidth]{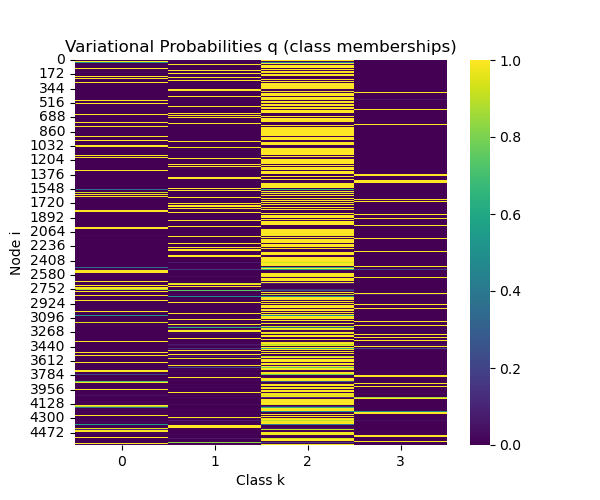}
     \captionsetup{width=0.8\textwidth}
    \caption{Posterior community membership.}
    \label{fig:inference_arxiv_data_a}
  \end{subfigure}
  \hfill
  \begin{subfigure}[b]{.3\textwidth}
    \centering
    \includegraphics[width=\linewidth]{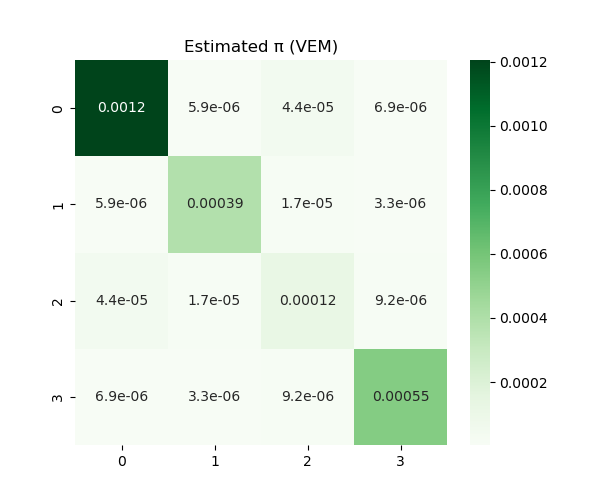} 
     \captionsetup{width=0.8\textwidth}
    \caption{Estimated block probability matrix $\widehat{\pi}$.}
    \label{fig:inference_arxiv_data_b}
  \end{subfigure}
  \hfill
  \begin{subfigure}[b]{.3\textwidth}
    \centering
    \includegraphics[width=\linewidth]{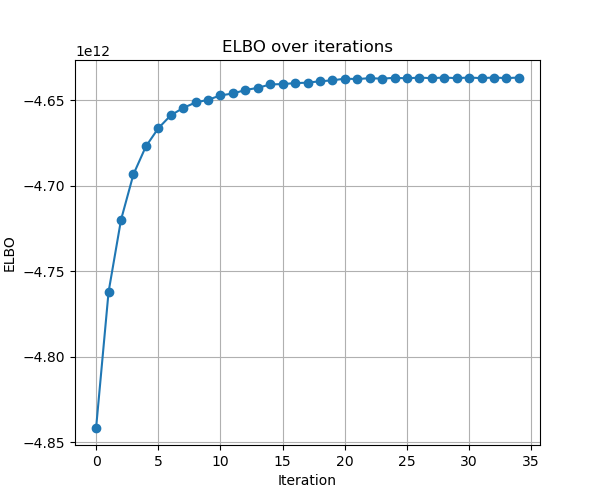} 
     \captionsetup{width=0.8\textwidth}
    \caption{ELBO over VEM iterations.}
    \label{fig:inference_arxiv_data_c}
  \end{subfigure}
  \caption{Inference results for the arXiv Mathematics co-authorship network.}
  \label{fig:inference_arxiv_data}
\end{figure*}

To better understand the structural role of the inferred communities, we analyse mean connection rates by cluster (Figure~\ref{fig:meen_rate_cononection}). 
Figure~\ref{fig:meen_rate_cononection_a} reports the average degree (mean connection rate) of authors in each cluster, while Figure~\ref{fig:meen_rate_cononection_b} focuses on the mean rate of within-cluster connections. 
Cluster~2 exhibits a low within-cluster rate, suggesting a weakly connected community compared to the others.

\begin{figure*}[h!]
  \centering
  \begin{subfigure}[b]{.45\textwidth}
    \centering
    \includegraphics[width=\linewidth]{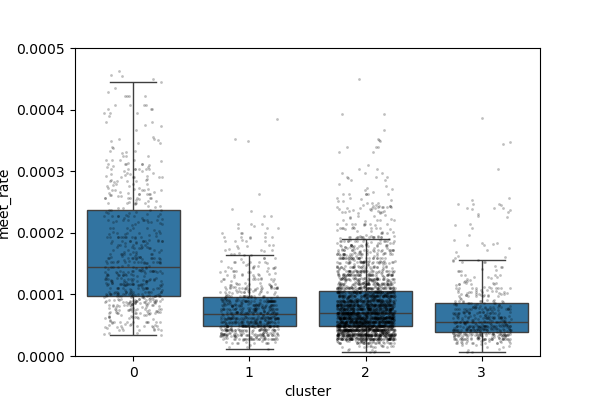}
     \captionsetup{width=0.8\textwidth}
    \caption{Individual connection rate (average degree) by cluster.}
    \label{fig:meen_rate_cononection_a}
  \end{subfigure}
  \hfill
  \begin{subfigure}[b]{.45\textwidth}
    \centering
    \includegraphics[width=\linewidth]{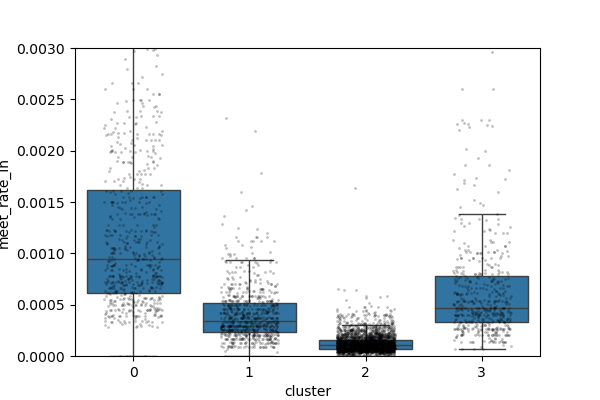} 
    \caption{Individual within-cluster connection rate by cluster.}
    \label{fig:meen_rate_cononection_b}
  \end{subfigure}
   \captionsetup{width=0.8\textwidth}
  \caption{Boxplots of individual connection rates by inferred communities}
  \label{fig:meen_rate_cononection}
\end{figure*}

Figure~\ref{fig:prop_by_cluster} provides a content-based interpretation of the inferred communities. 
For each author, we consider as ``primary category'' every \texttt{arXiv} primary subject in which they have at least one paper during the study period (an author can therefore contribute to several categories). 
We retain the top $20$ categories with the largest number of authors. For each category, we plot its proportion within each cluster.
We observe that clusters differ not only in connectivity patterns but also in their predominant mathematical areas, indicating that the inferred communities capture both structural and thematic heterogeneity. 

\begin{figure}[h!]
  \centering
  \includegraphics[width=.99\textwidth]{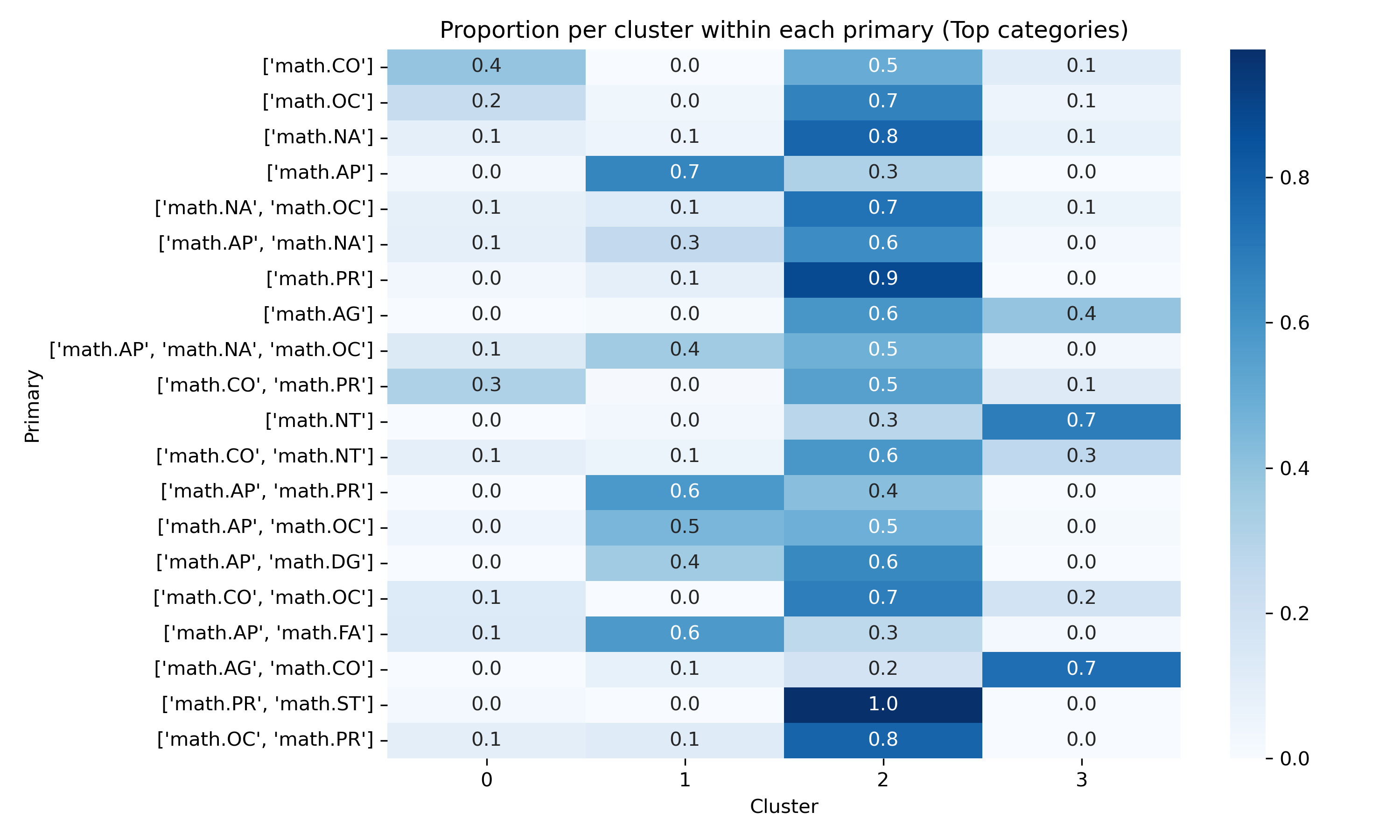}
   \captionsetup{width=0.8\textwidth}
  \caption{Top \texttt{arXiv} Mathematics subject categories by inferred community (proportion within each cluster).}
  \label{fig:prop_by_cluster}
\end{figure}

Cluster~2 gathers many authors who tend to collaborate less frequently (those with lower mean degree). 
This cluster is particularly enriched in Optimization and Control (\texttt{math.OC}), Numerical Analysis (\texttt{math.NA}), and authors whose primary labels combine these areas, such as [\texttt{math.OC}, \texttt{math.NA}]. 
Authors publishing in Probability (\texttt{math.PR}), or jointly in [\texttt{math.PR}, \texttt{math.ST}] and [\texttt{math.PR}, \texttt{math.OC}], also tend to fall into this low-collaboration cluster. 
Cluster~0 is dominated by Combinatorics (\texttt{math.CO}). 
Cluster~1 is strongly influenced by Analysis of PDEs (\texttt{math.AP}); it contains both authors who publish exclusively in \texttt{math.AP} and those whose primary labels combine \texttt{math.AP} with other areas, such as [\texttt{math.AP}, \texttt{math.PR}], [\texttt{math.AP}, \texttt{math.OC}] or [\texttt{math.AP}, \texttt{math.DG}]. 
Cluster~3 includes a large fraction of authors in Number Theory (\texttt{math.NT}), Algebraic Geometry (\texttt{math.AG}) and their combination, e.g.\ [\texttt{math.AG}, \texttt{math.CO}]. 
See Table~\ref{tab:arxiv_math_categories} for details on the subject-category labels.

Finally, Table~\ref{tab:params_arxiv_est} reports the estimated birth and death rates, together with the inferred community proportions $\beta_k$ for $K=4$. 
The fitted model suggests a relatively small but non-zero departure rate, together with a dominant community (cluster~2) containing approximately $60\%$ of the authors.

\begin{table}[ht]
    \centering
    \captionsetup{width=\textwidth}
    \begin{tabular}{cc}
        \toprule
        Parameter & Estimated value \\
        \midrule
        $\lambda$  & $0.271 \cdot 10^{-3}$ \\
        $\mu$      & $0.176  \cdot  10^{-3}$ \\
        $\beta_0$  & $0.1420$ \\
        $\beta_1$  & $0.1467$ \\
        $\beta_2$  & $0.6012$ \\
        $\beta_3$  & $0.1101$ \\
        \bottomrule
    \end{tabular}
    \caption{Estimated BD-SBM parameters for the \texttt{arXiv} Mathematics co-authorship network.}
    \label{tab:params_arxiv_est}
\end{table}

The clustering obtained from the BD-SBM reveals marked heterogeneity among authors: it highlights sparsely connected nodes, researchers who tend to collaborate less within all communities, and groups of authors whose collaboration patterns align closely with the primary \texttt{arXiv} subject categories. Taken together, these results illustrate the ability of the BD-SBM to disentangle latent research communities while accounting for the demographic turnover induced by the arrival and departure of authors over time.

\section{Conclusion}\label{sec:conclusion}

In this work we have introduced the birth-death stochastic block model (BD-SBM), a dynamic stochastic block model that explicitly accounts for demographic turnover. The BD-SBM combines a continuous-time birth-death process for the population size with a block-structured model for edges observed at discrete times. Individuals belong to one of $K$ latent communities, inherit the community of their parent at birth, and retain this membership throughout their lifetime. This construction allows the vertex set to evolve over time while preserving an underlying community structure that governs edge formation, and recovers classical static and dynamic SBMs with fixed vertex set as special cases.

From an inferential viewpoint, we have derived the complete-data likelihood of the BD-SBM and shown how the presence of latent community labels and unobserved community sizes naturally leads to an intractable observed-data likelihood. To overcome this difficulty, we have developed a variational expectation-maximization (VEM) algorithm tailored to the BD-SBM. Our variational family combines a mean–field approximation for individual labels present at time $t_0$ with a structured approximation for the latent community-size process, encoded through the additional variational parameters of the marginal and transition probabilities.  Within this framework, the block parameters $(\pi,\beta)$ are estimated via variational updates, while the birth-death parameters $(\lambda,\mu)$, which do not depend on the latent communities, admit closed-form maximum likelihood estimators that can be computed directly from the observed event history.

We have also proposed an adaptation of the Integrated Completed Likelihood (ICL) criterion to the BD-SBM for selecting the number of communities. The resulting penalisation reflects the complexity of the block structure through the community proportions and block connection parameters, while the birth-death parameters, which do not depend on $K$, are excluded from the penalty. maximizing this ICL criterion over a range of candidate values of $K$ provides a practical and principled tool for model selection in dynamic networks with evolving population size. In addition, we have detailed an initialization strategy for the VEM algorithm based on a similarity matrix and $K$-means clustering, which yields informative starting values for the variational distribution without enforcing the birth-death structure at initialization.

The numerical study illustrates the ability of the BD-SBM to recover latent communities and model parameters in synthetic temporal networks under two signal regimes. In a low-signal setting, where within- and between-community connection probabilities are relatively close and small (yielding a sparse network with weakly separated communities), the method still performs well and recovers the main community structure, although with lower clustering accuracy than in the high-signal case. In a high-signal regime, characterized by much stronger within-community connectivity, the BD-SBM behaves as expected and achieves substantially higher accuracy in both clustering and parameter estimation. We also considered the special case of a pure birth process, for which the algorithm coincides with an exact  VEM scheme and the community-size process is Markovian without additional approximations; in this setting, the procedure almost perfectly recovers the true parameters and labels.

We further applied the BD-SBM to a temporal co-authorship network constructed from \texttt{arXiv} mathematics articles over a range of $6$ years. Each article  has been  treated as an undirected temporal interaction among its co-authors adding an edge the pair of its co-authors at the date of the paper’s first submission. The clustering obtained from the BD-SBM reveals marked heterogeneity among authors: it highlights sparsely connected nodes, researchers who tend to collaborate less within all communities, and groups of authors whose collaboration patterns align closely with the primary \texttt{arXiv} subject categories.

Looking ahead, several directions for future research could further enrich the framework developed in this work. 
A first natural extension is to allow community-specific birth and death parameters, so that demographic dynamics may differ across communities; this extension is discussed in Appendix~\ref{app:class_specific_rates}, but implementing it and evaluating its performance remains a subsequent step.
It would also be interesting to relax the assumption of fixed community memberships and to combine the birth-death mechanism with models that permit community switching, overlapping or mixed-membership structures, as well as the possibility for new communities to emerge over time. Another important avenue is to relax the assumption that the birth-death process is fully observed, and instead consider settings where only discrete-time measurements of the population (e.g.\ presence/absence or counts at $t_0,\dots,t_T$) are available, so that the demographic process becomes partially latent and must be inferred jointly with the community structure. A further promising direction is to let edges evolve in continuous time according to a Markovian dynamics, for instance via a continuous-time Markov chain on edge states; in such settings, the VEM algorithm developed here would no longer yield closed-form updates and it would be natural to resort to MCMC-based or hybrid variational–sampling schemes to approximate the posterior distribution. Finally, applying the BD SBM to additional real datasets in ecology, epidemiology or more finely resolved academic collaboration networks, in which one can follow, over long time periods, the evolution of emerging or trending research topics, would further assess its practical relevance and may suggest additional refinements of the model.


\bmhead{Supplementary information}

\begin{center}
\label{tab:arxiv_math_categories}
\begin{tabular}{ll}
\toprule
Category & Description \\
\midrule
\texttt{math.AC} & Commutative Algebra. \\
\texttt{math.AG} & Algebraic Geometry.  \\
\texttt{math.AP} & Analysis of PDEs. \\
\texttt{math.AT} & Algebraic Topology. \\
\texttt{math.CA} & Classical Analysis and ODEs. \\
\texttt{math.CO} & Combinatorics. \\
\texttt{math.CT} & Category Theory.  \\
\texttt{math.CV} & Complex Variables. \\
\texttt{math.DG} & Differential Geometry.\\
\texttt{math.DS} & Dynamical Systems. \\
\texttt{math.FA} & Functional Analysis. \\
\texttt{math.GM} & General Mathematics. \\
\texttt{math.GN} & General Topology. \\
\texttt{math.GR} & Group Theory. \\
\texttt{math.GT} & Geometric Topology.\\
\texttt{math.HO} & History and Overview. \\
\texttt{math.IT} & Information Theory. \\
\texttt{math.KT} & K-Theory and Homology. \\
\texttt{math.LO} & Logic. \\
\texttt{math.MG} & Metric Geometry. \\
\texttt{math.MP} & Mathematical Physics.  \\
\texttt{math.NA} & Numerical Analysis.  \\
\texttt{math.NT} & Number Theory. \\
\texttt{math.OA} & Operator Algebras. \\
\texttt{math.OC} & Optimization and Control. \\
\texttt{math.PR} & Probability. \\
\texttt{math.QA} & Quantum Algebra.  \\
\texttt{math.RA} & Rings and Algebras. \\
\texttt{math.RT} & Representation Theory.  \\
\texttt{math.SG} & Symplectic Geometry. \\
\texttt{math.SP} & Spectral Theory. \\
\texttt{math.ST} & Statistics Theory.  \\
\bottomrule
\end{tabular}
\captionsetup{type=table,width=\linewidth}
\captionof{table}{arXiv Mathematics subject categories and their descriptions.}
\end{center}

\appendix

\section{ Proofs}
\label{app:proofs}

In this appendix we collect the proofs of Propositions~\ref{prop:gamma} and \ref{prop:pi_beta}.
We first recall the Lagrangian associated with the constraints introduced in
Section~\ref{constraints}.

\subsection{Lagrangian associated with the constraints}

The constraints on the variational parameters
$(\delta,\gamma,\gamma_{\text{mar}})$ can be written in the form:
\begin{itemize}
    \item[(a)] $\sum_k \delta(i,k) = 1$, for all $i \in V_0$;
    \item[(b)] $\sum_{k,n} \gamma(\ell,k,n,n)\,\gamma_{\text{mar}}(\ell-1,k,n) = K-1$,
                for all $\tau_\ell \in \tau$;
    \item[(c)] $\gamma(\ell,k,n,n+b_\ell) + \gamma(\ell,k,n,n) = 1$, for all $\ell,k,n$;
    \item[(d)] $\sum_{k}\sum_{n=0}^{N(\tau_{\ell-1})}
                \gamma(\ell,k,n,n+b_\ell)\,\gamma_{\text{mar}}(\ell-1,k,n) = 1$,
                for all $\tau_\ell \in \tau$,
\end{itemize}
where $b_\ell \in \{-1,1\}$ indicates whether the event at time $\tau_\ell$ is a birth
or a death.

We associate to these constraints the following Lagrange terms:
\begin{align*}\label{lap_constraints}
\left\{ 
\begin{array}{ll}
 \text{Lag(a)}  = &  \displaystyle
   \sum_i \nu_i \Big( \sum_{k} \delta(i,k) - 1 \Big), \\[0.3cm]
 \text{Lag(b)} = & \displaystyle
   \sum_\ell \sigma_\ell \Big( \sum_{k,n} \gamma(\ell,k,n,n)\,
                               \gamma_{\text{mar}}(\ell-1,k,n) - (K-1) \Big),\\[0.3cm]
 \text{Lag(c)} = & \displaystyle
   \sum_{\ell,k,n} \eta_{\ell k n} \Big( \gamma(\ell,k,n,n+b_\ell)
                                       + \gamma(\ell,k,n,n) - 1 \Big), \\[0.3cm]
 \text{Lag(d)} = & \displaystyle
   \sum_{\ell} \upsilon_\ell \Big( \sum_{k}\sum_{n=0}^{N(\tau_{\ell-1})}
                                   \gamma(\ell,k,n,n+b_\ell)\,
                                   \gamma_{\text{mar}}(\ell-1,k,n) - 1 \Big).
\end{array} \right.
\end{align*}

We then consider the augmented objective function,
\[
\mathcal{J} (\delta, \gamma, \nu , \sigma , \eta , \upsilon ) = \mathcal{L}(q) +  \text{Lag(a)} + \text{Lag(b)} + \text{Lag(c)} + \text{Lag(d)},
\]

where $\mathcal{L}(q)$ is the ELBO function defined in \ref{expectation}.

\subsection{Update of $\delta(i,k)$ for $i \in V_0$}
\label{proof:delta}
We maximize $\mathcal{J}(\delta,\gamma,\nu,\sigma,\eta,\upsilon)$ with respect to $\delta(i,k)$ for $i\in V_0$, under constraint~(a). Following the approximation used in \cite{matias2017statistical,agarwal2025clustering}, we neglect the dependence on $\delta(i,k)$ of certain terms in $\mathcal{L}(q)$. Differentiating the resulting objective with respect to $\delta(i,k)$ gives


\begin{equation*}
    \frac{\partial \mathcal{J} }{\partial \delta(i,k)}
    =  \sum_{j \neq i} \sum_{k'} \sum_{\ell \in \Upsilon_{ij}}
       \delta(j,k') \log \phi\left(e_{ij}(t_\ell),k,k'\right)
       + \log \beta_k - \log \delta(i,k) - 1 + \nu_i.
\end{equation*} 
Setting this derivative equal to zero leads to

\begin{equation*}
    \log \delta(i,k)
    = \sum_{j \neq i} \sum_{k'} \sum_{\ell \in \Upsilon_{ij}}
      \delta(j,k') \log \phi\left(e_{ij}(t_\ell),k,k'\right)
      + \log \beta_k - 1 + \nu_i.
\end{equation*}
Hence,
\begin{equation*}
    \delta(i,k)
    = \beta_k\, e^{\nu_i - 1}
      \prod_{j\neq i} \prod_{k'} \prod_{\ell \in \Upsilon_{ij}}
      \phi\left(e_{ij}(t_\ell),k,k'\right)^{\delta(j,k')},
    \qquad \text{for all } i \in V_0.
\end{equation*}
Finally, the normalisation $\sum_k \delta(i,k) = 1$ for each $i$ determines $e^{\nu_i-1}$
as the appropriate normalizing constant and yields the expression proposed.

\subsection{Proof of Proposition~\ref{prop:gamma}}
\label{proof:gamma}

We now maximize $\mathcal{J} (\delta, \gamma, \nu , \sigma , \eta , \upsilon ) $ with respect to $\gamma(\ell,k,n,n+1)$ and
$\gamma(\ell,k,n,n)$ for birth times $\ell \in T_B$. Recall that, if $i_\ell$ is the
individual born at time $\tau_\ell$, then
\[
\delta(i_\ell,k)
= \sum_{n \ge 1} \gamma(\ell,k,n,n+1)\,\gamma_{\text{mar}}(\ell-1,k,n),
\]
so that
\begin{equation}
\frac{\partial \delta (i_\ell, k)}{\partial \gamma(\ell,k,n,n+1)}
= \gamma_{\text{mar}}(\ell-1,k,n).
\end{equation}

\begin{itemize}
    \item Using the above relation and differentiating $\mathcal{J} (\delta, \gamma, \nu , \sigma , \eta , \upsilon ) $ with respect
to $\gamma(\ell,k,n,n+1)$, we obtain
\begin{equation*}
\begin{split}
    \frac{\partial \mathcal{J} } 
         {\partial \gamma(\ell,k,n,n+1)}
    &= \sum_{j \neq i_\ell} \sum_{k'}\sum_{\ell' \in \Upsilon_{i_\ell j}}
       \gamma_{\text{mar}}(\ell-1,k,n)\,\delta(j,k')\,
       \log \phi\big(e_{ij}(t_{\ell ' }),k,k'\big) 
       \\
    &\quad + \gamma_{\text{mar}}(\ell-1,k,n)\log n
           - \gamma_{\text{mar}}(\ell-1,k,n)\log \gamma(\ell,k,n,n+1)
           - \gamma_{\text{mar}}(\ell-1,k,n) \\
    &\quad + \upsilon_\ell\, \gamma_{\text{mar}}(\ell-1,k,n)
           + \eta_{\ell k n}\, \gamma_{\text{mar}}(\ell-1,k,n).
\end{split}
\end{equation*}

Setting this derivative equal to zero and dividing by $\gamma_{\text{mar}}(\ell-1,k,n)$
leads to
\begin{equation*}
    \log \gamma(\ell,k,n,n+1)
   = \sum_{j \neq i_\ell} \sum_{k'}\sum_{\ell' \in \Upsilon_{i_\ell j}}
       \delta(j,k') \log \phi\big(e_{ij}(t_{\ell ' }),k,k'\big)
       + \log n - 1 + \upsilon_\ell + \eta_{\ell k n}.
\end{equation*}

Equivalently,

\begin{equation*}
    \gamma(\ell,k,n,n+1)
    = e^{\upsilon_\ell + \eta_{\ell k n} -1}\, n
      \prod_{j \neq i_\ell} \prod_{k'} \prod_{\ell' \in \Upsilon_{i_\ell j}}
      \phi\big( e_{ij}(t_{\ell ' }),k,k'\big)^{\delta(j,k')},
    \qquad \text{for } \ell \in T_B.
\end{equation*}

\item Similarly, differentiating with respect to $\gamma(\ell,k,n,n)$ gives
\begin{equation*}
\begin{split}
    \frac{\partial \mathcal{J}}
         {\partial \gamma(\ell,k,n,n)}
    &= - \gamma_{\text{mar}}(\ell-1,k,n)\log \gamma(\ell,k,n,n)
       - \gamma_{\text{mar}}(\ell-1,k,n) \\
    &\quad + \sigma_\ell\, \gamma_{\text{mar}}(\ell-1,k,n)
       + \eta_{\ell k n}\, \gamma_{\text{mar}}(\ell-1,k,n),
\end{split}
\end{equation*}
so that
\begin{equation*}
    \log \gamma(\ell,k,n,n)
    = -1 + \sigma_\ell + \eta_{\ell k n}.
\end{equation*}

\end{itemize}

For notational convenience, define
\[
p(\ell,k)
= \prod_{j \neq i_\ell} \prod_{k'} \prod_{\ell' \in \Upsilon_{i_\ell j}}
  \phi\big( e_{ij}(t_{\ell ' }),k,k'\big)^{\delta(j,k')}.
\]
Then we can rewrite
\[
\gamma(\ell,k,n,n+1)
= e^{\upsilon_\ell + \eta_{\ell k n} -1}\, n\, p(\ell,k) \qquad \text{ and }
\qquad
\gamma(\ell,k,n,n)
= e^{\sigma_\ell + \eta_{\ell k n} -1}.
\]

Constraint (c) imposes $\gamma(\ell,k,n,n+1) + \gamma(\ell,k,n,n) = 1$, hence
\begin{equation*}
e^{\eta_{\ell k n}}\Big[e^{\upsilon_\ell-1}\,n\,p(\ell,k)
                     + e^{\sigma_\ell-1}\Big] = 1,
\end{equation*}
which gives
\begin{equation*}
e^{\eta_{\ell k n}}
= \frac{1}{e^{\upsilon_\ell-1}\,n\,p(\ell,k) + e^{\sigma_\ell-1}}.
\end{equation*}

Substituting back, we obtain
\begin{equation*}
\gamma(\ell,k,n,n+1)=
\begin{cases}
\dfrac{e^{\upsilon_\ell}\,n\,p(\ell,k)}
      {e^{\sigma_\ell}+n\,e^{\upsilon_\ell}\,p(\ell,k)}
& \text{if } n\in[0,N_{\ell-1}[,\\[1.1em]
1 & \text{if } n=N_{\ell-1},
\end{cases}
\end{equation*}
and
\begin{equation*}
\gamma(\ell,k,n,n)=
\begin{cases}
\dfrac{e^{\sigma_\ell}}
      {e^{\sigma_\ell}+n\,e^{\upsilon_\ell}\,p(\ell,k)}
& \text{if } n\in[0,N_{\ell-1}[,\\[1.1em]
0 & \text{if } n=N_{\ell-1}.
\end{cases}
\end{equation*}

Finally, setting $\rho_\ell = e^{\sigma_\ell - \upsilon_\ell} > 0$, we can rewrite
\[
\gamma(\ell,k,n,n+1)
= \frac{n\,p(\ell,k)}{\rho_\ell + n\,p(\ell,k)} \text{ and }
\gamma(\ell,k,n,n)
= \frac{\rho_\ell}{\rho_\ell + n\,p(\ell,k)},
\qquad 0 \leq n < N_{\ell-1}.
\]
The value of $\rho_\ell$ is determined from constraint (d) (or equivalently from (b)), which requires
\begin{equation*}
    \sum_{k,n} \gamma(\ell,k,n,n+1)\,\gamma_{\text{mar}}(\ell-1,k,n) = 1,
\end{equation*}
that is,
\begin{equation*}
    \sum_{k,n}
    \frac{n\,p(\ell,k)}{\rho_\ell + n\,p(\ell,k)}\,
    \gamma_{\text{mar}}(\ell-1,k,n) = 1.
\end{equation*}

First, we need to prove the existence of this solution. Define
\[
f_\ell(\rho)
= \sum_{k,n} \frac{n\,p(\ell,k)}{\rho + n\,p(\ell,k)}\,
  \gamma_{\text{mar}}(\ell-1,k,n).
\]
For each fixed $(k,n)$ with $n\geq 1$ and $p(\ell,k)>0$, the map
\[
\rho \mapsto \frac{n\,p(\ell,k)}{\rho + n\,p(\ell,k)}\,\gamma_{\text{mar}}(\ell-1,k,n)
\]
is continuous and strictly decreasing on $(0,\infty)$, with 

\begin{align}
    \begin{split}
        \lim_{\rho \to 0^+}  \frac{n\,p(\ell,k)}{\rho +n\,p(\ell,k)}\,& \gamma_{\text{mar}}(\ell-1,k,n) = \gamma_{\text{mar}}(\ell-1,k,n), \quad \quad \\
        \lim_{\rho \to +\infty}  \frac{n\,p(\ell,k)}{\rho + n\,p(\ell,k)}\, &\gamma_{\text{mar}}(\ell-1,k,n) = 0.
    \end{split}
\end{align}

Hence $f_\ell$ is continuous and strictly decreasing on $(0,\infty)$, and
\[
\lim_{\rho \to 0^+}  f_\ell(\rho)
= \sum_{k}\sum_{n\geq 1} \gamma_{\text{mar}}(\ell-1,k,n), 
\]

is the expected number of non-empty communities at time $\tau_{\ell-1}$.
If $N(\tau_{\ell-1}) \geq 1$ there is at least one individual alive, so at
least one community is non-empty with positive probability, and then the number of non-empty communities is almost surely greater or equal than 1. This implies in particular that
$f_\ell(0^+) \geq 1$, while 
\[
\lim_{\rho \to + \infty }  f_\ell(\rho)
= 0.
\]
By continuity and monotonicity, there exists a unique $\rho_\ell > 0$ such that
$f_\ell(\rho_\ell) = 1$. Therefore the equation
\[
\sum_{k,n}
    \frac{n\,p(\ell,k)}{\rho_\ell + n\,p(\ell,k)}\,
    \gamma_{\text{mar}}(\ell-1,k,n) = 1,
\]
admits a unique solution $\rho_\ell>0$. Since this equation has no closed-form solution in general, $\rho_\ell$ must be found
numerically. This concludes the proof of Proposition~\ref{prop:gamma}.

\subsubsection{Approximation of $\gamma$ for death times}\label{proof:gamma_deaths}

We show here that the parametrization adopted in \eqref{gamma_deaths} admits a unique solution.
For each fixed $(\ell,k)$, define
\[
f_{\ell k}(\rho)
= \sum_{n=0}^{N(\tau_{\ell -1})-1}  
  \frac{\rho n}{1 + \rho n}\;
  \gamma_{\mathrm{mar}}(\ell -1, k, n+1),
\qquad \rho \geq 0.
\]
For $n \geq 1$, the map
\[
\rho \mapsto \frac{\rho n}{1 + \rho n}
\]
is continuous and strictly increasing on $[0,\infty)$, with limits $0$ as $\rho \to 0^+$ and $1$ as $\rho \to \infty$, and derivative $n/(1+\rho n)^2 > 0$. 

Then, $f_{\ell k}$ is continuous and strictly increasing on $[0,\infty)$, and
\[
\lim_{\rho \to 0^+}f_{\ell k}(\rho) = 0, 
\qquad
\lim_{\rho \to \infty} f_{\ell k}(\rho)
= \sum_{n=1}^{N(\tau_{\ell-1})} \widehat{\gamma}_{\mathrm{mar}}(\ell -1, k, n)
= 1 - \widehat{\gamma}_{\mathrm{mar}}(\ell -1, k, 0).
\]
In our model, a death in community $k$ cannot occur when the community is empty, so it is natural to have
\[
\delta(i_\ell,k) \leq 1 - \gamma_{\mathrm{mar}}(\ell -1, k, 0).
\]
By continuity and strict monotonicity of $f_{\ell k}$, there then exists a unique $\rho_{\ell k} \geq 0$ such that
\[
f_{\ell k}(\rho_{\ell k}) = \delta(i_\ell,k),
\]
which justifies the result.

\subsection{Proof of Proposition~\ref{prop:pi_beta}}
\label{proof:pi_beta}

For the proofs of the expressions obtained for the model parameters, we maximize the ELBO $\mathcal{L}(q)$ with respect to $\pi_{k_1k_2}$ and $\beta_k$, under the constraint $\sum_{k=1}^K \beta_k = 1$.

Then we maximize $\mathcal{L}(q)$ in function of $\pi$

\[
\frac{\partial\mathcal{L}(q) }{\partial \pi_{k_1k_2}}
= \sum_{i \neq j} \sum_{k_1,k_2}
  \delta(i,k_1)\,\delta(j,k_2)
  \Bigg[
    \Big(\sum_{\ell \in \Upsilon_{ij}} e_{ij}(t_\ell)\Big) \log \pi_{k_1k_2}
    + \Big(|\Upsilon_{ij}| - \sum_{\ell \in \Upsilon_{ij}} e_{ij}(t_\ell)\Big)
      \log (1-\pi_{k_1k_2})
  \Bigg],
\]

and setting this derivative equal to zero we obtain the result.

To maximize $\mathcal{L}(q)$ with respect to $ \beta$ under the constraint
$\sum_k \beta_k = 1$, we introduce a Lagrange multiplier $\xi$ and consider
\[
\mathcal{J}(\beta,\xi)
= \mathcal{L}(q) + \xi\Big(\sum_{k=1}^K \beta_k - 1\Big)
= \sum_{i \in V_0} \sum_{k=1}^K \delta(i,k)\,\log \beta_k
  + \xi\Big(\sum_{k=1}^K \beta_k - 1\Big).
\]

We differentiate with respect to $\beta_k$:
\[
\frac{\partial \mathcal{J}}{\partial \beta_k}
= \frac{1}{\beta_k} \sum_{i \in V_0} \delta(i,k) + \xi.
\]

Setting this equal to zero yields,
\[
\beta_k = -\frac{1}{\xi} \sum_{i \in V_0} \delta(i,k).
\]
Using the constraint $\sum_k \beta_k = 1$, we obtain the result.

\section{Community-specific birth and death rates in the BD--SBM}
\label{app:class_specific_rates}

In the main part of this work, we assumed shared demographic rates across communities, that is \(\lambda_k \equiv \lambda\), \(\mu_k \equiv \mu\) for all $k\in \{1,\ldots,K\}$. This appendix relaxes this assumption and allows each community to have its own birth and death rates, leading to the following vector parameters
\[
\lambda = (\lambda_1,\dots,\lambda_K), \qquad
\mu = (\mu_1,\dots,\mu_K),
\qquad \text{with } \lambda_k\ge 0,\ \mu_k\ge 0,
\]
where $\lambda_k$ and $\mu_k$ correspond respectively to the birth and death parameters of community $k$. For simplicity, we denote this vector parameters by $\lambda$ and $\mu$ throughout this appendix.
All other components of the BD--SBM (SBM edge model, variational family, and optimization scheme) remain unchanged; only the birth--death contribution to the likelihood/ELBO and the corresponding VM-step updates are modified.

\subsection{Birth--death likelihood with \texorpdfstring{$(\lambda_k,\mu_k)$}{(lambda_k,mu_k)}}
\label{app:bd_likelihood_class_specific}

Let \(Z_i\) be the (latent) community labels of individual \(i\).
Conditionally on the labels, each alive individual of class \(k\) gives birth at rate
\(\lambda_k\) and dies at rate \(\mu_k\).
Let \(N_k(t)\) be the size of community \(k\) at time \(t\), and denote the event times by
\(\tau_1<\cdots<\tau_{M}\), with inter-event durations
\(\Delta_\ell=\tau_\ell-\tau_{\ell-1}\) (and \(\tau_0=t_0\)).
Recall that  \(T_B\) (resp. \(T_D\)) denotes the set of birth (resp. death) event indices.
As in the main text, let \(i_\ell\) be the newborn at \(\tau_\ell\) if \(\tau_\ell\in T_B\),
and the dying individual at \(\tau_\ell\) if \(\tau_\ell\in T_D\).

The total event rate at \(\tau_{\ell - 1}\) is

\begin{equation*}
R(\tau_{\ell-1})
= \sum_{k=1}^K (\lambda_k+\mu_k)\,N_k(\tau_{\ell-1}).
\label{eq:R_class_specific}
\end{equation*}

Under this parametrization, the second term in the likelihood \eqref{Likelihood} can be written as

\begin{align}
 \Pro_{\theta}\big({Z}, \tau, b(\tau) \mid Z^{t_0} \big)
 &= \prod_{\ell =1}^{M} 
    \Bigg[
      e^{  -R(\tau_{\ell -1})\Delta_{\ell} }
      \prod_{k=1}^{K} 
      \Big( \lambda_k  \sum_{i \in V(\tau_{\ell -1})}  Z_{ik} \Big)^{Z_{j_\ell k}\mathds{1} \left\{ b_\ell = 1  \right\} }
      \mu_k^{Z_{j_\ell k}\mathds{1} \left\{ b_\ell = -1 \right\} } 
    \Bigg]  \nonumber \\
 &= \prod_{\ell =1}^{M} 
    \Bigg[
      e^{  -R(\tau_{\ell -1})\Delta_{\ell} }
      \prod_{k=1}^{K} \prod_{n=1}^{N(\tau_{\ell -1})}
      \Big( \lambda_k  n \Big)^{L_{k,n+1}^{\ell } L_{k,n}^{\ell -1}}
      \mu_k^{L_{k,n-1}^{\ell } L_{k,n}^{\ell -1}}  
    \Bigg].  \nonumber 
\end{align}

Taking logs yields the log-likelihood component
\begin{equation*}
\log \Pro_{\theta}\big({Z}, \tau, b(\tau) \mid Z^{t_0} \big)
=
\sum_{\ell=1}^{M+1} \Bigg[
-R(\tau_{\ell-1})\Delta_\ell +
\sum_{k = 1}^K \sum_{n = 1}^{N(\tau_{\ell-1})} L_{k,n+1}^{\ell } L_{k,n}^{\ell -1} \log\!\Big(\lambda_k n \Big)
+
L_{k,n-1}^{\ell } L_{k,n}^{\ell -1} \log \mu_k \Bigg].
\label{eq:bd_loglik_class_specific}
\end{equation*}

\subsection{Contribution to the ELBO}
\label{app:bd_elbo_class_specific}

Recall that \(\gamma_{\mathrm{mar}}(\ell,k,n)\) denotes the variational marginal pmf of \(N_k(\tau_\ell)\), and \(\gamma(\ell,k,n,n')\) the variational transition probability of passing in community \(k\) from $n$ to $n'$ at a time event \(\tau_\ell \in \tau \) (see Section~\ref{subsub:Var_dist}). Under the variational distribution \(q\), the expected total rate satisfies
\[\mathbb{E}_q\left[ R(\tau_{\ell-1}) \right] = \sum_{k=1}^{K} (\lambda_k+\mu_k) \sum_{n=1}^{N(\tau_{\ell-1})} n\,\gamma_{\mathrm{mar}}(\ell -1,k,n).\]

Therefore, the BD contribution to the ELBO, that is, the sum of the second, third, fourth  and fifth terms in  \eqref{expectation} becomes,

\begin{equation*}
\begin{split}
\sum_{\tau_{\ell} \in \tau} \sum_{k=1}^{K} \sum_{n=1}^{N(\tau_{\ell-1})}  & \Bigg[
 -(\lambda_k + \mu_k)\Delta_\ell \, n\, \gamma_{mar}(\ell-1,k,n) \\
 & \quad  +  \gamma(\ell,k,n,n+1)\,\gamma_{mar}(\ell-1,k,n)\log(\lambda_k n) 
   +  \gamma(\ell,k,n,n-1)\,\gamma_{mar}(\ell-1,k,n) \log(\mu_k ) \Bigg] \\
  = &  - \sum_{\tau_{\ell} \in \tau} \sum_{k=1}^{K} \sum_{n=1}^{N(\tau_{\ell-1})}
\left( \lambda_k + \mu_k \right) \Delta_\ell \, n\, \gamma_{mar}(\ell-1,k,n) + \sum_{\tau_{\ell} \in T_D} \sum_{k=1}^{K} \delta(i_\ell , k) \log \mu_k \\
 & \quad + \sum_{\tau_{\ell} \in T_B} \sum_{k=1}^{K}   \sum_{n=1}^{N(\tau_{\ell-1})} \gamma(\ell,k,n,n+1)\,\gamma_{mar}(\ell-1,k,n)\log(\lambda_k n).
\end{split}
\end{equation*}

Then, the ELBO becomes, 

\begin{equation*}
    \begin{split}
   \mathcal{L}(q) & = 
   \sum_{i<j} \sum_{k_1, k_2} \delta(i,k_1) \delta(i,k_2) \sum_{\ell \in \Upsilon_{ij}} \log \phi (e_{ij}(t_\ell ), k_1, k_2) - \sum_{\tau_{\ell} \in \tau} \sum_{k=1}^{K} \sum_{n=1}^{N(\tau_{\ell-1})} \left( \lambda_k + \mu_k \right) \Delta_\ell \, n\, \gamma_{mar}(\ell-1,k,n)\\
    & \quad + \sum_{\tau_{\ell} \in T_D} \sum_{k=1}^{K} \delta(i_\ell , k) \log \mu_k  + \sum_{\tau_{\ell} \in T_B} \sum_{k=1}^{K} \sum_{n=1}^{N(\tau_{\ell-1})}  \gamma(\ell,k,n,n+1)\,\gamma_{mar}(\ell-1,k,n)\log(\lambda_k n) \\
    & \quad +  \sum_{i \in V_{0}} \sum_{k=1}^{K} \delta(i,k)  \log \beta_k - \sum_{i\in V_0} \sum_{k=1}^{K} \delta(i,k) \log \delta(i,k) \\ 
    & \quad +  \sum_{\tau_\ell \in T_B} \sum_{k=1}^{K} \sum_{n=1}^{N(\tau_{\ell-1})} \Big[ \gamma(\ell,k, n, n) \gamma_{\text{mar}}(\ell -1,k, n)\log \gamma(\ell,k, n, n) \\
     & \quad \quad  \quad + \gamma(\ell,k, n, n+1) \gamma_{\text{mar}}(\ell -1,k, n)\log \gamma(\ell,k, n, n+1)\Big].
    \end{split}
\end{equation*}

\subsection{VM-step updates for \texorpdfstring{$\lambda_k$}{lambda_k} and \texorpdfstring{$\mu_k$}{mu_k}}
\label{app:vm_updates_lambda_mu}

We now summarize the coordinate updates used in the VEM algorithm.

In analogy to equation \eqref{eq:delta-fixed-point}, we choose to update $\delta(i,k)$ for $i\in V_{0}$ by solving the fixed-point equation
\[\widehat{\delta}(i,k ) \propto \beta_k \mu_k^{\mathds{1}\{ \tau_i^d \in T_D \} } \prod_{j \neq i } \prod_{k'} \prod_{\ell ' \in \Upsilon_{ij}} \left[ \phi \left( e_{ij}(t_{\ell'} ), k , k' \right) \right]^{\widehat{\delta}(j, k')},   \]
where $\phi \left( e_{ij}(t_{\ell ' }), k , k' \right)$ and $\Upsilon_{ij}$ are given in Proposition  \ref{complete-log-lik}.

\begin{proposition} \label{prop:gamma_app}
For $\ell \in T_B$, we refer to $i_\ell$ as the newborn at time $\tau_\ell$, and we denote by $\widehat{\gamma}(\ell,k,n,n)$ the quantity that maximizes $\mathcal{L}(q)$ with respect to $\gamma(\ell,k,n,n)$ when $\theta = (\lambda,\mu,\pi,\beta)$ is fixed. Then there exists a unique $\rho_\ell > 0$ such that
\[
\widehat{\gamma}(\ell,k,n,n)
= \frac{\rho_\ell}{
    \rho_\ell
    +  \mu_k^{\mathds{1}\{ \tau_{i_\ell}^d \in T_D \} } \lambda_k n \displaystyle\prod_{j \neq i_\ell}
                       \prod_{k'}
                       \prod_{\ell' \in \Upsilon_{i_\ell j}}
                       \Big[ \phi\big( e_{ij}(t_{\ell ' }), k, k' \big) \Big]^{\widehat{\delta}(j,k')}
  },
\qquad \text{for } n \in [0, N_{\ell-1}[,
\]
where the parameter $\rho_\ell$ is determined by the constraint (b) in~\eqref{constraints}.
\end{proposition}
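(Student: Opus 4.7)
The plan is to mirror the Lagrangian argument used in the proof of Proposition~\ref{prop:gamma}, while keeping track of the two new contributions that community-specific rates introduce into the ELBO. First, I would isolate every term in $\mathcal{L}(q)$ (in its community-specific form derived in Appendix~\ref{app:bd_elbo_class_specific}) that depends on $\gamma(\ell,k,n,n+1)$ at a birth event $\tau_\ell \in T_B$. Two such sources appear: (i) the explicit term $\gamma(\ell,k,n,n+1)\,\gamma_{\mathrm{mar}}(\ell-1,k,n)\log(\lambda_k n)$, which produces an extra $\log \lambda_k$ relative to the shared-rate setting; and (ii) the death contribution $\sum_{\tau_m\in T_D}\delta(i_m,k)\log\mu_k$ that is now class-dependent. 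Since the variational birth mechanism gives $\delta(i_\ell,k)=\sum_{n}\gamma(\ell,k,n,n+1)\,\gamma_{\mathrm{mar}}(\ell-1,k,n)$, the newborn $i_\ell$ contributes to (ii) if and only if $\tau_{i_\ell}^d\in T_D$. Differentiating (ii) with respect to $\gamma(\ell,k,n,n+1)$ therefore yields the additional additive term $\mathds{1}\{\tau_{i_\ell}^d\in T_D\}\,\gamma_{\mathrm{mar}}(\ell-1,k,n)\log\mu_k$.

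Next, I would form the augmented objective $\mathcal{J}$ with the same Lagrange terms (a)--(d) as in Section~\ref{proof:gamma}, and differentiate with respect to $\gamma(\ell,k,n,n+1)$ and $\gamma(\ell,k,n,n)$ separately. Setting each derivative to zero, dividing by the strictly positive factor $\gamma_{\mathrm{mar}}(\ell-1,k,n)$, and collecting the edge-likelihood factor in the same shorthand
\[
p(\ell,k)=\prod_{j\neq i_\ell}\prod_{k'}\prod_{\ell'\in\Upsilon_{i_\ell j}}\bigl[\phi(e_{ij}(t_{\ell'}),k,k')\bigr]^{\delta(j,k')}
\]
as in the original proof, I expect to obtain
\[
\gamma(\ell,k,n,n+1)=e^{\upsilon_\ell+\eta_{\ell kn}-1}\,n\,\lambda_k\,\mu_k^{\mathds{1}\{\tau_{i_\ell}^d\in T_D\}}\,p(\ell,k), \qquad \gamma(\ell,k,n,n)=e^{\sigma_\ell+\eta_{\ell kn}-1}.
\]
Using the constraint (c), $\gamma(\ell,k,n,n+1)+\gamma(\ell,k,n,n)=1$, I would eliminate $e^{\eta_{\ell kn}}$ and, with $\rho_\ell := e^{\sigma_\ell-\upsilon_\ell}>0$, derive
\[
\widehat{\gamma}(\ell,k,n,n)=\frac{\rho_\ell}{\rho_\ell+n\,\lambda_k\,\mu_k^{\mathds{1}\{\tau_{i_\ell}^d\in T_D\}}\,p(\ell,k)},
\]
for $n\in[0,N_{\ell-1}[$, which is precisely the announced expression.

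Finally, existence and uniqueness of $\rho_\ell$ follow from the normalization condition (b)--(d), which forces $\rho_\ell$ to solve $f_\ell(\rho_\ell)=1$ with
\[
f_\ell(\rho)=\sum_{k,n}\frac{n\,\lambda_k\,\mu_k^{\mathds{1}\{\tau_{i_\ell}^d\in T_D\}}\,p(\ell,k)}{\rho+n\,\lambda_k\,\mu_k^{\mathds{1}\{\tau_{i_\ell}^d\in T_D\}}\,p(\ell,k)}\,\gamma_{\mathrm{mar}}(\ell-1,k,n).
\]
Each summand is continuous and strictly decreasing in $\rho$ on $(0,\infty)$, so $f_\ell$ is too, with $f_\ell(0^+)=\sum_k\sum_{n\geq 1}\gamma_{\mathrm{mar}}(\ell-1,k,n)\geq 1$ (by the argument of Proposition~\ref{prop:gamma}, noting that the new multiplicative constants $\lambda_k\,\mu_k^{\mathds{1}\{\cdot\}}>0$ do not affect the limits) and $f_\ell(+\infty)=0$; existence and uniqueness of $\rho_\ell>0$ then follow by continuity and strict monotonicity.

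The main obstacle is the careful bookkeeping in the first step: one must notice that $\gamma(\ell,k,n,n+1)$ influences the death-rate term of the ELBO through $\delta(i_\ell,k)$ only when the newborn $i_\ell$ eventually dies within the observation window, which produces the indicator $\mathds{1}\{\tau_{i_\ell}^d\in T_D\}$ in the denominator. Once this dependency has been correctly identified, the remainder of the argument (Lagrangian stationary equations, elimination via constraint (c), monotonicity for $\rho_\ell$) is a direct transcription of the proof of Proposition~\ref{prop:gamma}, so no new analytic difficulty is expected.
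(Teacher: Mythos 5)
Your proposal is correct and follows essentially the same route as the paper, which proves Proposition~\ref{prop:gamma_app} only by reference to the Lagrangian derivation of Appendix~\ref{proof:gamma}; you fill in the details of that analogy accurately, in particular identifying that the factor $\mu_k^{\mathds{1}\{\tau_{i_\ell}^d\in T_D\}}$ arises from differentiating the death term $\sum_{\tau_m\in T_D}\delta(i_m,k)\log\mu_k$ through $\delta(i_\ell,k)=\sum_n\gamma(\ell,k,n,n+1)\gamma_{\mathrm{mar}}(\ell-1,k,n)$.
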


\begin{proof}
The result follows by maximizing $\mathcal{L}(q)$ with respect to $\gamma(\ell,k,n,n)$,
using Lagrange multipliers for the constraints in \eqref{constraints}, in 
analogy to the derivation provided in Appendix~\ref{app:proofs}.
\end{proof}

\begin{remark}
In order to compute $\widehat{\gamma}(\ell,k,n,n)$ in practice, one must first determine
$\rho_\ell$. Since the defining equation for $\rho_\ell$ has no closed-form solution in
general, $\rho_\ell$ is obtained numerically.
\end{remark}

\begin{proposition}\label{prop:lambda_mu}
Let $q \in \mathcal{Q}$ be fixed. Then the maximizers of $\mathcal{L}(q)$ 
with respect to $\lambda_{k}$ and $\mu_k$ are given by
\[
\begin{aligned}
\widehat{\lambda}_k
&= \frac{\sum_{\ell \in T_B}\delta(i_\ell,k)}
        {\sum_{\tau_\ell \in \tau}\ \sum_{n} \Delta_\ell\, n\, \gamma_{\mathrm{mar}}(\ell-1,k,n)},
\end{aligned}
\quad
\begin{aligned}
\widehat{\mu}_k
&= \frac{\sum_{\ell \in T_D} \delta(i_\ell,k)}
        {\sum_{\tau_\ell \in \tau}\ \sum_{n} \Delta_\ell\, n\, \gamma_{\mathrm{mar}}(\ell-1,k,n)}.
\end{aligned}
\]

The updates of $\pi$ and $\beta$ are unchanged and are given in Proposition \ref{prop:pi_beta}.
\end{proposition}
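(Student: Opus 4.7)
The argument is a direct adaptation of the VM-step computation that yields Proposition \ref{prop:pi_beta}. I would start from the ELBO expression given at the end of Appendix \ref{app:bd_elbo_class_specific} and isolate the parts that depend on each scalar parameter $\lambda_k$ and $\mu_k$ separately. Because $\lambda_k$ and $\mu_k$ are unconstrained nonnegative scalars that do not enter any of the equality constraints (a)--(d) of \eqref{constraints}, no Lagrange multiplier is needed: each update reduces to maximizing over $(0,\infty)$ a strictly concave function of the form $a\log x - bx$, whose unique maximizer is $x = a/b$.

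\textbf{Update of $\lambda_k$.} The terms of $\mathcal{L}(q)$ depending on $\lambda_k$ come from the exposure sum and from the birth log-term. The key simplification is the identity
\[
\sum_{n=1}^{N(\tau_{\ell-1})} \gamma(\ell,k,n,n+1)\,\gamma_{\mathrm{mar}}(\ell-1,k,n) = \delta(i_\ell, k), \qquad \tau_\ell \in T_B,
\]
which is exactly equation \eqref{restriction1} and holds because at a birth event $\tau_\ell$ the size of community $k$ increases by one if and only if the newborn $i_\ell$ belongs to $k$. Using this identity, the contribution of $\lambda_k$ to $\mathcal{L}(q)$ collapses to $(\log \lambda_k)\sum_{\tau_\ell \in T_B}\delta(i_\ell,k) - \lambda_k A_k + \mathrm{const}$, where $A_k = \sum_{\tau_\ell \in \tau}\Delta_\ell \sum_n n\,\gamma_{\mathrm{mar}}(\ell-1,k,n)$ denotes the expected cumulative exposure of community $k$ under $q$. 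The first-order condition delivers the announced $\widehat{\lambda}_k$.

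\textbf{Update of $\mu_k$ and of $(\pi,\beta)$.} For $\mu_k$, the death log-term of the ELBO is already written directly in the form $\sum_{\tau_\ell \in T_D} \delta(i_\ell,k)\log\mu_k$, so no analogous collapse is required; together with the symmetric exposure term $-\mu_k A_k$, the same scalar concave maximization arises, with $T_D$ in place of $T_B$ in the numerator and the identical denominator $A_k$. For $(\pi,\beta)$, the only ELBO contributions depending on these parameters come from the edge likelihood and the initial-membership prior, both of which are unchanged when passing from shared to community-specific demographic rates; Proposition \ref{prop:pi_beta} therefore transfers verbatim.

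\textbf{Main obstacle.} The only substantive step is the identification of the $\lambda_k$-log contribution with $\delta(i_\ell,k)$ via \eqref{restriction1}; once this consolidation is performed, what remains is one-dimensional concave maximization. A secondary point worth noting, but not an obstacle, is that the update $\widehat{\lambda}_k$ is well defined whenever community $k$ has positive expected exposure under $q$, i.e.\ $A_k>0$, which holds generically after the initialization of $\delta$ described in Subsection \ref{initialization}.
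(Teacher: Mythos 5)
Your proposal is correct and follows essentially the same route as the paper, which simply states that the result follows by direct maximization of the ELBO; you supply the details the paper omits, correctly identifying that the coefficient of $\log\lambda_k$ collapses to $\sum_{\tau_\ell\in T_B}\delta(i_\ell,k)$ via the identity \eqref{restriction1}, after which each update is a one-dimensional concave maximization of the form $a\log x - bx$.
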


\begin{proof} The result is straightforward by solving the maximization of the  ELBO function $\mathcal{L}(q)$ with respect to the parameters.
\end{proof}

\bibliography{sn-bibliography}

\end{document}